\newtheorem{theorem}{Theorem}
{\bfseries}{\itshape}
{\bfseries}{\itshape}
\newtheorem{proposition}{Proposition}
\newtheorem{corollary}{Corollary}
\newtheorem{heur}{Heuristic}
\newtheorem{lemma}{Lemma}
\newtheorem{problem}{Problem}
\newtheorem{definition}{Definition}
\theoremstyle{remark}
\newtheorem{remark}{Remark}
\definecolor{darkgreen}{rgb}{0.0, 0.5, 0.0} 
\newcommand{\Group}{\mathscr{G}(\Fqm/\Fq;r)}
\newcommand{\ds}{\displaystyle}
\newcommand{\eval}{\mathrm{ev}}
\newcommand{\F}{\ensuremath{\mathbb{F}}}
\newcommand{\Fq}{\ensuremath{\mathbb{F}_q}}
\newcommand{\Fqm}{\ensuremath{\mathbb{F}_{q^m}}}
\newcommand{\Gal}[1]{\mathrm{Gal}(#1)}
\newcommand{\Mat}{\mathrm{Mat}}
\newcommand{\Mata}{\mathrm{Mat}_{\alpha}}
\newcommand{\Res}[2]{\mathrm{Res}_{#1}(#2)}
\newcommand{\Resq}[1]{\mathrm{Res}_{\Fqm/\Fq}(#1)}
\newcommand{\word}[1]{\ensuremath{\boldsymbol{#1}}}
\newcommand{\mat}[1]{\ensuremath{\boldsymbol{#1}}}
\newcommand{\code}[1]{\ensuremath{\mathscr{#1}}}
\newcommand{\AC}{\code{A}}
\newcommand{\CC}{\code{C}}
\newcommand{\DC}{\code{D}}
\newcommand{\GC}{\code{G}}
\newcommand{\Am}{\mat{A}}
\newcommand{\Bm}{\mat{B}}
\newcommand{\Gm}{\mat{G}}
\newcommand{\Hm}{\mat{H}}
\renewcommand{\Im}{\mat{I}}
\newcommand{\Jm}{\mat{J}}
\newcommand{\Mm}{\mat{M}} 
\newcommand{\Pm}{\mat{P}}
\newcommand{\Qm}{\mat{Q}}
\newcommand{\Rm}{\mat{R}}
\newcommand{\Sm}{\mat{S}}
\newcommand{\Vm}{\mat{V}}
\newcommand{\Thetam}{\mat{\Theta}}
\newcommand{\Gpub}{\Gm_{\mathrm{pub}}}
\newcommand{\Hpub}{\Hm_{\mathrm{pub}}}
\newcommand{\Hsec}{\Hm_{\mathrm{sec}}}
\newcommand{\Vpub}{V_{\mathrm{pub}}}
\newcommand{\Vsec}{V_{\mathrm{sec}}}
\newcommand{\cv}{\mat{c}}
\newcommand{\ev}{\mat{e}}
\newcommand{\gv}{\mat{g}}
\newcommand{\rv}{\mat{r}}
\newcommand{\uv}{\mat{u}}
\newcommand{\vv}{\mat{v}}
\newcommand{\wv}{\mat{w}}
\newcommand{\xv}{{\mat{x}}}
\newcommand{\yv}{{\mat{y}}}
\newcommand{\cA}{\mathcal{A}}
\newcommand{\cB}{\mathcal{B}}
\newcommand{\cF}{\mathcal{F}}
\newcommand{\cL}{\mathcal{L}}
\newcommand{\cP}{\mathcal{P}}
\newcommand{\cX}{\mathcal{X}}
\newcommand{\cY}{\mathcal{Y}}
\newcommand{\GRS}[3]{\text{\bf GRS}_{#1}\left(#2,#3\right)}
\newcommand{\Alt}[3]{\code{A}_{#1}\left(#2, #3\right)}
\newcommand{\Goppa}[2]{\code{G}(#1, #2)}
\newcommand{\Tr}[1]{\mathrm{Tr}\left( #1 \right)}
\newcommand{\St}[1]{\mathrm{St}(#1)}
\newcommand{\Cmat}{\code{C}_\mathrm{mat}}
\newcommand{\PP}{\mathbb{P}}
\newcommand{\GL}{\mathbf{GL}}
\newcommand{\eqdef}{\stackrel{\mathrm{def}}{=}}
\newcommand{\ea}{{e_{\AC}}}
\newcommand{\eg}{{e_{\GC}}}
\title{The tangent space attack}
\author{Axel Lemoine \orcidlink{0009-0008-2440-8568}}
\address{Inria Paris, France}
\address{DGA, France}
\email{axel.lemoine@inria.fr}
\keywords{McEliece scheme, Alternant codes, Algebraic-geometry codes, Weil restriction}
\begin{document}
\maketitle
\begin{abstract}
    We propose a new method for retrieving the algebraic structure of a generic alternant code given an arbitrary generator matrix,
    provided certain conditions are met. We then discuss how this challenges the security of the McEliece cryptosystem instantiated with this family of codes.
    The central object of our work is the quadratic hull related to a linear code, defined as the
    intersection of all quadrics passing through the columns of a given generator or parity-check matrix, where the columns are considered
    as points in the affine or projective space. The geometric properties of this object reveal important information about the internal
    algebraic structure of the code. This is particularly evident in the case of generalized Reed-Solomon codes, whose quadratic hull 
    is deeply linked to a well-known algebraic variety called \textit{the rational normal curve}. By utilizing the concept of Weil restriction of affine varieties,
    we demonstrate that the quadratic hull of a generic dual alternant code inherits many interesting features from the
    rational normal curve, on account of the fact that alternant codes are subfield-subcodes of generalized Reed-Solomon codes. If the rate of the generic alternant code is sufficiently high, this allows us to construct a
    polynomial-time algorithm for retrieving the underlying generalized Reed-Solomon code from which the alternant code is defined, which leads to an efficient
    key-recovery attack against the McEliece cryptosystem when instantiated with this class of codes. Finally, we discuss the generalization of this approach to Algebraic-Geometry codes and Goppa codes.
\end{abstract}
\section*{Introduction}
\subsection*{McEliece cryptosystem} The problem of decoding random linear codes, also known as \textit{random decoding problem}, is widely regarded as a difficult problem. It was shown in \cite{BMT78} that this problem is NP-hard, which may be 
thought of as \textit{worst-case} hardness. More interestingly, the random decoding problem has also been deeply studied in the \textit{average case}. After decades of research, the best generic decoding algorithm \cite{BM17} remains exponential,
which makes the decoding problem a good candidate for asymmetric cryptography. Moreover, it is generally agreed that the decoding problem is quantum resistant, given that the best known quantum decoding algorithm \cite{KT17a} also has exponential complexity.
Until very recently, the NIST was still considering several code-based cryptosystems \cite{AABBBBDGGGGMPRSTVZ22,HQC17,ABCCGLMMMNPPPSSSTW20} in the fourth round of its post-quantum standardization competition, whose \cite{HQC17} was eventually declared the winner.

The first code-based public-key cryptosystem was proposed by McEliece back in 1978 \cite{M78}. The idea is to pick a linear code among a family of codes for which efficient 
decoding algorithms exist, provided that some secret structure about the code is known. The public key consists of a generator matrix of the code, which appears random, while 
the private key is represented by an efficient decoding algorithm. To encrypt a message, the sender first encodes it using the public generator matrix, and then deliberately adds
a random error vector whose Hamming weight equals the code's decoding capability. Only the owner of the private key can recover the original message using the efficient decoding algorithm.

The security of the scheme strongly relies on the choice of the family of codes, as the latter has to behave like random codes. Under the hypothesis of \textit{indistinguishability} between the family of codes 
that is being used and random codes, a rigorous security proof was given in \cite{CFS01}. Under such an assumption, breaking the cryptosystem boils down to decoding a random linear code, whose computational intractability has already been discussed.
Many families have been considered for this purpose, for instance generalized Reed-Solomon (GRS) codes by Niederreiter in \cite{N86}, which were proven insecure for such a use case in \cite{SS92} by Sidelnikov and Shestakov, or algebraic-geometry (AG) codes proposed by Janwa and Moreno in \cite{JM96} and 
attacked by Couvreur, M\'arquez-Corbella and Pellikaan in \cite{CMP14}. Interestingly, the family of binary Goppa codes, originally proposed by McEliece, still looks like a relevant proposition nowadays. Despite the existence of distinguishers against Goppa codes, such as 
those presented in \cite{FGOPT11,CMT23a} which both handle high-rate Goppa codes in polynomial time, or \cite{R24} which handles constant-rate Goppa codes in sub-exponential --- but still high --- time, no efficient key-recovery or message-recovery attacks exist 
against McEliece scheme with \textit{binary} Goppa codes. Most notably, the key recovery attack proposed in \cite{BMT23} could only work for generic alternant codes, while \cite{CMT23a} treats the case of codes over a large alphabet and thus excludes the binary case.

\subsection*{Codes and geometry} A generator matrix of a linear code may be thought of as a collection of points in the affine or projective space, each column of the matrix being seen as a point. The \textit{quadratic hull} of a code with respect to such a matrix is defined 
as the intersection of all quadrics that pass through each of these points. For generalized Reed-Solomon codes whose dimension does not exceed half the length, the quadratic hull turns out to be a rational normal curve. Using Weil restriction and the strong rigidity of the objects that
are derived from it, we wish to provide the same kind of results for generic alternant codes.

\subsection*{Our contribution} We propose a new approach for recovering the private structure of a generic alternant code using both algebraic-geometry and Galois theory. Our algorithm has polynomial complexity with respect to the length and dimension of the code,
and is able to fully recover the structure of the code provided that the rate of the alternant code is high enough for the square of its dual not to fill the entire ambient space. To achieve this goal, we compute the 
quadratic hull with respect to the public parity-check matrix of the code. Under some hypotheses, this algebraic variety is the Weil restriction of the affine cone over the rational normal curve, up to an unknown change of basis.
Even after this change of basis, this variety keeps an interesting property: its tangent spaces are stabilized by a certain linear operator. One may therefore compute the space of linear operators that stabilize all
these tangent spaces, which turns out to have a structure of algebra isomorphic to the extension field over which the underlying GRS code is defined. This enables us to compute another parity-check matrix of the code which directly yields a generator matrix of the underlying GRS code, up to conjugation by the Frobenius automorphism. 
The final step is to apply \cite{SS92} to recover a support and a multiplier, and thus an efficient decoding algorithm. We eventually study the application of our framework to Goppa codes and algebraic-geometry codes, and investigate some natural generalization of generic alternant codes that are vulnerable to our attack.

\subsection*{Outline of the paper} The first section is dedicated to the basic notions of coding theory that we will need throughout the paper. In Section \ref{section:alternants} we set up our key-recovery problem, and explain why this naturally leads to the concept of Weil restriction of ideals and affine algebraic varieties.
Section \ref{section:weil} introduces the notion of Weil restriction of scalars, and establishes several results such as Lemma \ref{lemma:equivalence} which identifies subspaces having the structure of a Weil restriction, 
or Theorem \ref{thm:group} which explicitly gives the family of linear automorphisms that preserve this structure. In this section, we also introduce the notion of Weil-properness, that essentially tells us when this framework can be used to analyze the structure of trace codes. Section \ref{section:attack} provides a polynomial-time algorithm for retrieving an efficient decoding algorithm 
for the subfield-subcode being used in the McEliece cryptosystem, be it a generic alternant code, the subfield subcode of a generic AG code, or even a Goppa code of sufficiently low degree, but provided that the square of the dual code is not the entire ambient space. \section{Notations and prerequisites} \label{section:prerequisites}
\subsection{Basic notation}
~\par
~\par
\paragraph{\textit{Fields.}} Throughout the paper, we work with an extension of finite fields $\Fqm/\Fq$, together with a primitive element $\alpha\in\Fqm$, so that $\Fqm=\Fq[\alpha]$. We may sometimes also refer to a generic field as $\F$.
The algebraic closure of $\F$ is denoted by $\overline{\F}$.\\

\paragraph{\textit{Vectors and matrices.}} Codewords, \textit{i.e.} elements of a linear code, are denoted using bold lowercase letters. We use the row-vectors convention for codewords. For geometric points 
in the affine space, we instead use the column-vectors convention. Matrices are denoted with bold capital letters. If $(\lambda_1,\ldots,\lambda_n)\in\F^n$, the notation $\mathrm{Diag}(\lambda_1,\ldots,\lambda_n)$ 
stands for the diagonal matrix with entries given by the $\lambda_i$'s. We freely use this notation to build block-diagonal matrices as well. Finally, $\GL_{k}(\F)$ denotes the general linear group of size $k$ over $\F$,
while $\GL(\F^k)$ denotes the group of linear automorphisms of $\F^k$. We have of course $\GL_k(\F)\simeq\GL(\F^k)$. \\

\paragraph{\textit{Spans and ideals.}} If $E$ is any subset of an $\F$-vector space, we denote by $\mathrm{Span}_{\F}(E)$ the $\F$-vector subspace spanned by $E$. Similarly, when $A$ is a subset of a ring $\Rm$,
the notation $\langle A\rangle$ stands for the ideal generated by $A$.\\

\paragraph{\textit{Partial derivatives.}} Let $\Rm=\F[x_s~|~s\in S]$ be a polynomial ring whose variables are indexed by a finite set $S$. For $f\in\Rm$, we write $\partial_s f$ for the partial derivative of $f$ with respect to $x_s$.\\

\paragraph{\textit{Varieties}} In this paper, the term \textit{algebraic variety} refers to algebraic subsets of $\F^r$ for some integer $r$ and field $\F$, \textit{i.e.} the zero locus associated to an ideal $I\subset\F[X_1,\ldots,X_r]$ in $\F^r$. When we are given such an ideal $I$, we 
denote by $V(I)$ its variety, or sometimes $V_\F(I)$ when we want to emphasize that the ground field is $\F$. If $I$ is the defining ideal of the variety $V$, the coordinate ring of $V$ is denoted with $\F[V]$ and defined by $\F[X_0,\ldots,X_{r-1}]/I$.

\subsection{Linear codes} A linear subspace $\CC\subseteq\F^n$ of dimension $r$ is called an $[n,k]_{\F}$-linear code. We may talk about $[n,k]_q$-codes when $\F=\Fq$. A \textit{generator matrix} of $\CC$ is a matrix $\Gm$ with coefficients in $\F$ whose row space equals $\CC$.
An $[n,k]_{\F}$-code $\CC$ may as well be defined by a \textit{parity-check matrix} $\Hm\in\F^{(n-k)\times n}$ which is such that 
$$\CC=\{\word{x}\in\F^n~|~\Hm \word{x}^\top = 0\}.$$
As $\F^n$ is endowed with the canonical inner product defined by 
\[\forall\word{x},\word{y}\in\F^n,~\word{x}\cdot\word{y}\eqdef\sum_{i=1}^n x_iy_i,\]
we may define the \textit{dual} of a code $\CC$ as $\CC^\bot\eqdef\{\word{y}\in\F^n~|~\forall\word{x}\in\CC,~\word{x}\cdot\word{y}=0\}$. Note that $\Hm\in\F^{(n-k)\times n}$ is a parity-check matrix of $\CC$ if and only if it is a generator
matrix of $\CC^\bot$.

\subsection{Componentwise product of codes} The space $\F^n$ naturally comes with the canonical product algebra structure. The multiplication law will be denoted with the star notation as follows:
$$\forall \word{x},\word{y}\in\F^n,~\word{x}\star\word{y}\eqdef (x_1y_1,\ldots,x_ny_n).$$
This immediately gives rise to the notion of product of codes.
\begin{definition}
For any $\F$-linear codes $\CC,\DC\subset\F^n$, we define
$$\CC\star\DC\eqdef\mathrm{Span}_{\F}\{\word{c}\star\word{d}~|~(\word{c},\word{d})\in\CC\times\DC\}.$$
We also write $\CC^{\star 2}=\CC\star\CC$.
\end{definition}
Note that the componentwise product of codes is associative and may therefore be iterated. We will thus write $\CC^{\star d}$ for $\underbrace{\CC\star\ldots\star\CC}_{d\text{ times}}$.\\
Let $\Gm\in\F^{r\times n}$ be a generator matrix of $\CC$, and let $\gv_j$ denote the $j$-th column of $\Gm$ for all $1\leq j\leq n$. Define $\Rm=\F[X_0,\ldots,X_{r-1}]$ together with the natural evaluation map
\begin{equation}\label{eq:eval}
    \eval_{\Gm}:\begin{cases}
    \Rm &\longrightarrow \F^n\\
    f &\longmapsto (f(\gv_1),\ldots,f(\gv_n)).
    \end{cases}
\end{equation}
Alternatively, following the convention of \cite{R24}, $\eval_{\Gm}$ can be defined as the only homomorphism of (graded) $\F$-algebras mapping each variable $X_i$ onto the $i$-th row $\rv_i$ of $\Gm$ --- here we have $0\leq i<r$ in order to be consistent with the way we enumerate the variables of $\Rm$. The image of this map, which does not depend on the choice of $\Gm$, was denoted in \cite{R24} by $\bigoplus_{d\geq 0}\CC^{\star d}$ and referred to as the \textit{homogeneous coordinate ring} of $\CC$. On the contrary, the kernel of $\eval_{\Gm}$ formally depends on the choice of $\Gm$ and will be denoted by $I(\Gm)$ in this paper. It has a structure of graded ideal, \textit{i.e.}
$$I(\Gm)=\ds\bigoplus_{d\geq 0}I_d(\Gm),$$
where $I_d(\Gm)$ is the homogeneous component of degree $d$ of $I(\Gm)$, that we will refer to as the vanishing ideal of $\Gm$ at degree $d$.
\begin{definition}[Quadratic hull \cite{R20}]
We define the algebraic quadratic hull of $\CC$ with respect to $\Gm$ as the polynomial ideal generated by $I_2(\Gm)$. The algebraic variety induced by the algebraic quadratic hull will be denoted by $V_2(\Gm)$ and referred to as the geometric quadratic hull of $\CC$ with respect to $\Gm$.
\end{definition}
\begin{remark}
Hilbert's Nullstellensatz establishes a correspondence between ideals of $\overline{\F}[X_1,\ldots,X_r]$ and algebraic subsets of $\overline{\F}^r$, meaning that we can work with either the algebraic or the geometric quadratic hull equivalently. However, most of the time we only have access to the $\F$-rational points of the variety, preventing the Nullstellensatz to hold as $\F$ will always be a finite field in this paper. This will require us to be cautious and to always specify which version of 
the quadratic hull --- algebraic or geometric --- we will be working with.
\end{remark}
Note that the vector space $I_2(\Gm)$ is the very same object as the \textit{code of quadratic relations} introduced in \cite{CMT23a}, as we have 
\[ I_2(\Gm)=\left\{\sum_{i\leq j}c_{i,j}X_iX_j~|~\sum_{i\leq j}c_{i,j}\rv_i\star\rv_j=0\right\}.\]
Recall that the dimension of $I_2(\Gm)$ is related to that of $\CC^{\star 2}$ by 
\begin{equation}\label{eq:dim_I2}
\dim I_2(\Gm)=\binom{r+1}{2}-\dim\CC^{\star 2},
\end{equation}
which can be obtained by applying the rank-nullity theorem on the evaluation map $\mathrm{ev}_{\Gm}$ restricted to $\Rm_2$. Although we emphasize the dependence of $I_2(\Gm)$ on the choice of generator matrix, many properties of the algebraic or geometric quadratic hull are actually intrinsic. More precisely, the quadratic hull --- be it algebraic or geometric --- depends only on the code, at least up to some linear transformation.
\begin{proposition}\label{prop:intrinsic}
    Let $\Gm_1,\Gm_2$ be two $r\times n$ generator matrices of an $\F$-linear code $\CC$. Denote by $\Pm\in\GL_r(\F)$ the transition matrix so that $\Gm_2=\Pm\cdot\Gm_1$. Then 
    \begin{itemize}
        \item[$(i)$] $I_2(\Gm_1)=\left\{f^{\Pm}~|~f\in I_2(\Gm_2)\right\}$, where $f^{\Pm}=f((X_1~\ldots~X_r)\cdot\Pm^\top)$;
        \item[$(ii)$] $V_2(\Gm_2)=\left\{\Pm\cdot\vv~|~\vv\in V_2(\Gm_1)\right\}.$
    \end{itemize}
\end{proposition}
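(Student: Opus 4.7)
The proof reduces to a simple change-of-variables computation, so the plan is essentially to set up the right ring automorphism and track how it intertwines the two evaluation maps.

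First I would introduce the graded $\F$-algebra automorphism $\varphi : \Rm \to \Rm$ that sends each variable $X_i$ to the $i$-th entry of the row vector $(X_1 \, \cdots \, X_r)\cdot\Pm^\top$, i.e.\ to $\sum_{j} P_{ij} X_j$. By construction, $\varphi(f) = f^{\Pm}$ for every $f \in \Rm$, and $\varphi$ is invertible with inverse corresponding to $\Pm^{-1}$. The key observation is that the $i$-th row of $\Gm_2$ satisfies $\rv_i^{(2)} = \sum_j P_{ij}\rv_j^{(1)}$, so applying $\eval_{\Gm_1}$ to $\varphi(X_i)$ recovers $\rv_i^{(2)} = \eval_{\Gm_2}(X_i)$. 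Since both compositions are $\F$-algebra homomorphisms agreeing on the generators, this gives the identity
\[
\eval_{\Gm_2} \;=\; \eval_{\Gm_1}\circ\varphi.
\]

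For $(i)$, I would take kernels in the degree-$2$ component: $f \in I_2(\Gm_2)$ iff $\eval_{\Gm_1}(\varphi(f)) = 0$ iff $f^{\Pm} \in I_2(\Gm_1)$. As $\varphi$ restricts to a bijection on $\Rm_2$, this yields $I_2(\Gm_1) = \{f^{\Pm} : f \in I_2(\Gm_2)\}$, as claimed.

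For $(ii)$, the elementary identity $f^{\Pm}(\vv) = f(\Pm\cdot\vv)$ (obtained by unrolling the definition and noting that $(v_1\,\cdots\,v_r)\Pm^\top$ is the transpose of $\Pm\cdot\vv$) reduces the statement to $(i)$. Indeed, for $\vv \in V_2(\Gm_1)$ and any $g \in I_2(\Gm_2)$, one has $g^{\Pm} \in I_2(\Gm_1)$ by $(i)$, hence $g(\Pm\vv) = g^{\Pm}(\vv) = 0$, giving $\Pm\vv \in V_2(\Gm_2)$. The reverse inclusion follows symmetrically by applying the same argument with the roles of $\Gm_1,\Gm_2$ swapped (and $\Pm$ replaced by $\Pm^{-1}$).

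There is no real obstacle here; the only care required is to fix the row-versus-column conventions consistently so that the transpose in $f^{\Pm} = f((X_1\,\cdots\,X_r)\Pm^\top)$ matches the left multiplication $\Gm_2 = \Pm\cdot\Gm_1$ on matrices and the left action $\vv \mapsto \Pm\vv$ on points. Once the substitution map $\varphi$ is identified as the pullback by $\Pm$ on the ambient affine space $\F^r$, both parts of the proposition are instances of the tautology that pulling back the defining ideal of a variety by a linear automorphism yields the defining ideal of its preimage.
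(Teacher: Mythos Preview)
Your proof is correct and follows essentially the same change-of-variables argument as the paper. The only cosmetic difference is that the paper works directly with the columns $\gv_i$ of $\Gm_1$ and the identity $f(\Pm\gv_i)=f^{\Pm}(\gv_i)$, whereas you first package this as the factorization $\eval_{\Gm_2}=\eval_{\Gm_1}\circ\varphi$ of algebra homomorphisms; both routes are the same computation and the paper likewise deduces $(ii)$ immediately from $(i)$.
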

\begin{proof}
Denote by $\gv_1,\ldots,\gv_n$ the columns of $\Gm_1$. As the columns of $\Gm_2$ are $\Pm\cdot\gv_1,\ldots,\Pm\cdot\gv_n$, we see that for any quadratic form $f\in\Rm_2$, 
\begin{align*}
    f\in I_2(\Gm_2) &\iff \forall 1\leq i\leq n,~f(\Pm\cdot\gv_i)=0\\
    &\iff \forall 1\leq i\leq n,~f^{\Pm}(\gv_i)=0\\
    &\iff f^{\Pm}\in I_2(\Gm_1),
\end{align*}
which proves $(i)$. Note that $(ii)$ follows directly from this by definition of the geometric quadratic hull.
\end{proof}

As a result of this proposition, several features such as the dimension of $I_2(\Gm)$, the dimension of the geometric quadratic hull or even its smoothness do not depend on the specific generator matrix we are working with. We may sometimes omit the dependence on the generator 
matrix when we refer to invariant quantities, and write $I_2(\CC)$ instead of $I_2(\Gm)$.
\subsection{GRS codes} Many codes used in the McEliece scheme are derived from generalized Reed-Solomon codes, which we introduce below.
\begin{definition}[GRS codes]
    Let $\word{x}=(x_1,\ldots,x_n)\in\F^n$ be a vector of pairwise-distinct elements, $\word{y}\in(\F^\times)^n$, and let $r\leq n$ be an integer. The generalized Reed-Solomon (GRS) code of degree $r$, support $\word{x}$ and multiplier $\word{y}$ is defined by
    $$\GRS{r}{\word{x}}{\word{y}}=\left\{\word{y}\star f(\word{x})~|~f\in\F[X],~\deg f< r\right\},$$
    where $f(\word{x})=(f(x_1),\ldots,f(x_n))$.
\end{definition}
\begin{remark} \label{remark:vandermonde}
    A generator matrix of $\GRS{r}{\xv}{\yv}$ is given by the following truncated Vandermonde matrix:
    $$\Vm_r(\xv,\yv)=\begin{pmatrix}
        y_1 & y_2 & \ldots & y_n\\
        x_1y_1 & x_2y_2 & \ldots& x_ny_n\\
        \vdots & \vdots & \ddots &\vdots\\
        x_1^{r-1}y_1 & x_2^{r-1}y_2 & \ldots &x_n^{r-1}y_n
    \end{pmatrix}.$$
\end{remark}
A GRS code of degree $r$ has dimension $r$. It follows from the fact that univariate polynomials of degree at most $r-1$ have at most $r-1$ roots that a GRS code is always MDS, meaning that its minimum distance $d$ is equal to $n-r+1$ --- the highest possible value by Singleton's bound. Furthermore, the Welch-Berlekamp algorithm \cite{BW86} enables to decode these codes up to half their minimum distance in $O(n^3)$ operations in $\F$. 
These positive features explain why these codes have been so widely studied by researchers and engineers. However, all these nice properties mean that GRS codes are far from looking like random codes. In particular, their behavior with respect to the componentwise product is very peculiar, as we state below without proof.
\begin{proposition}\label{prop:square_grs}
$\GRS{r}{\word{x}}{\word{y}}^{\star 2}=\GRS{2r-1}{\word{x}}{\word{y}^{\star 2}}$.
\end{proposition}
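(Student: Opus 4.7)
The plan is to prove the equality by showing both inclusions directly from the definition of the GRS code as evaluations of polynomials.

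For the inclusion $\GRS{r}{\word{x}}{\word{y}}^{\star 2}\subseteq\GRS{2r-1}{\word{x}}{\word{y}^{\star 2}}$, I would start from a pair of generating codewords $\cv=\yv\star f(\xv)$ and $\dv=\yv\star g(\xv)$ with $\deg f,\deg g<r$, and compute componentwise
\[ \cv\star\dv=\yv^{\star 2}\star(fg)(\xv). \]
Since $\deg(fg)\leq 2r-2<2r-1$, the product $\cv\star\dv$ lies in $\GRS{2r-1}{\xv}{\yv^{\star 2}}$; since $\GRS{r}{\xv}{\yv}^{\star 2}$ is the $\F$-span of such products and the target code is $\F$-linear, the inclusion follows.

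For the reverse inclusion, the main observation is that every monomial $X^i$ with $0\leq i\leq 2r-2$ can be factored as $X^a X^b$ with $a,b<r$ (take for instance $a=\min(i,r-1)$ and $b=i-a$, and check $b<r$). Consequently, for each such $i$ the vector $\yv^{\star 2}\star(x_1^i,\ldots,x_n^i)$ equals $(\yv\star(x_1^a,\ldots,x_n^a))\star(\yv\star(x_1^b,\ldots,x_n^b))$, which is a star-product of two elements of $\GRS{r}{\xv}{\yv}$ and therefore belongs to $\GRS{r}{\xv}{\yv}^{\star 2}$. Because the truncated Vandermonde matrix from Remark \ref{remark:vandermonde} shows that these vectors span $\GRS{2r-1}{\xv}{\yv^{\star 2}}$, the inclusion $\GRS{2r-1}{\xv}{\yv^{\star 2}}\subseteq\GRS{r}{\xv}{\yv}^{\star 2}$ follows by linearity.

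There is no real obstacle: the only point deserving care is the combinatorial check that every integer $i\in\{0,\ldots,2r-2\}$ admits a decomposition $i=a+b$ with $0\leq a,b\leq r-1$, which guarantees that the spanning family of $\GRS{2r-1}{\xv}{\yv^{\star 2}}$ arising from the monomial basis of $\F_{<2r-1}[X]$ is already captured by star-products of evaluations of monomials of degree less than $r$. Putting both inclusions together yields the claimed equality.
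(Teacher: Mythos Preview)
Your argument is correct and is the standard proof of this well-known identity. Note that the paper itself states this proposition without proof, so there is nothing to compare against; your double-inclusion argument via the monomial basis and the decomposition $i=a+b$ with $0\leq a,b\leq r-1$ is exactly what one would expect and is complete as written.
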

Hence, the dimension of the square of a GRS code increases \textit{linearly}. In the random case \cite{CCMZ15}, the square of an $[n,k]$-code has dimension $\min\{n,\frac{k(k+1)}{2}\}$ with overwhelming probability, which means a \textit{quadratic} increase. Equation (\ref{eq:dim_I2}) shows that GRS codes having a small square means they have an unexpectedly large algebraic quadratic hull.
It is particularly visible when we look at the quadratic hull of $\Vm_r(\xv,\yv)$.
\begin{proposition}\label{prop:minors}
If $2r-1\leq n$, then $I_2(\Vm_r(\word{x},\word{y}))$ is spanned by the $2\times 2$ minors of 
\begin{equation}\label{eq:minors}
    \begin{pmatrix}
    X_0 & X_1 & \ldots & X_{r-2}\\
    X_1 & X_2 & \ldots & X_{r-1}
    \end{pmatrix}.
\end{equation}
\end{proposition}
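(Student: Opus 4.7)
The plan is to prove this by a two-sided inclusion argument that ultimately reduces to a dimension count.

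First I would verify that every $2\times 2$ minor of the catalecticant matrix in (\ref{eq:minors}) belongs to $I_2(\Vm_r(\xv,\yv))$. Writing the minor associated to columns $a<b$ of (\ref{eq:minors}) as $M_{a,b}=X_{a-1}X_b - X_aX_{b-1}$, and plugging in the $j$-th column $(y_j,x_jy_j,\ldots,x_j^{r-1}y_j)^\top$ of $\Vm_r(\xv,\yv)$, one gets $x_j^{a-1+b}y_j^2-x_j^{a+b-1}y_j^2=0$. So the space $M$ spanned by the $\binom{r-1}{2}$ minors of (\ref{eq:minors}) is contained in $I_2(\Vm_r(\xv,\yv))$.

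Next I would compute $\dim I_2(\Vm_r(\xv,\yv))$ using Equation (\ref{eq:dim_I2}) combined with Proposition \ref{prop:square_grs}. Since $\Vm_r(\xv,\yv)$ is a generator matrix of $\GRS{r}{\xv}{\yv}$, we have $\GRS{r}{\xv}{\yv}^{\star 2}=\GRS{2r-1}{\xv}{\yv^{\star 2}}$, whose dimension is exactly $2r-1$ because $2r-1\leq n$. Equation (\ref{eq:dim_I2}) then yields
\[
\dim I_2(\Vm_r(\xv,\yv))=\binom{r+1}{2}-(2r-1)=\binom{r-1}{2}.
\]

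It therefore suffices to prove that the $\binom{r-1}{2}$ minors $M_{a,b}$ are linearly independent, so that $\dim M=\binom{r-1}{2}=\dim I_2(\Vm_r(\xv,\yv))$, forcing $M=I_2(\Vm_r(\xv,\yv))$. Linear independence can be checked by a leading-monomial argument: under the lexicographic order $X_0>X_1>\cdots>X_{r-1}$, the leading term of $M_{a,b}$ is $X_{a-1}X_b$, and the map $(a,b)\mapsto (a-1,b)$ is injective on the set of pairs $1\leq a<b\leq r-1$. Since the leading monomials are pairwise distinct, no non-trivial linear combination of the $M_{a,b}$'s can vanish.

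The only potential subtlety is the dimension count: it crucially relies on the hypothesis $2r-1\leq n$ to guarantee that the square $\GRS{2r-1}{\xv}{\yv^{\star 2}}$ actually has dimension $2r-1$ and not less (which would happen if $2r-1>n$). Apart from that, each step is a straightforward verification, so the main conceptual point is really the identification between the affine cone over the rational normal curve and the column-span of $\Vm_r(\xv,\yv)$, together with the well-known determinantal description of the former.
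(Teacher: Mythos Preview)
Your argument is correct. The inclusion $M\subseteq I_2(\Vm_r(\xv,\yv))$ is verified by direct substitution, the dimension count $\dim I_2(\Vm_r(\xv,\yv))=\binom{r-1}{2}$ follows exactly as you say from Equation~(\ref{eq:dim_I2}) and Proposition~\ref{prop:square_grs} under the hypothesis $2r-1\leq n$, and the leading-monomial argument for linear independence of the $M_{a,b}$'s is clean (the only case to watch, $b=a+1$, still has leading term $X_{a-1}X_b$ since position $a-1$ is strictly smaller than position~$a$).

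The paper does not actually prove this proposition: it simply refers the reader to \cite[Propositions~1 \&~2]{LMT25}. Your proof is therefore more self-contained than the paper's treatment, and it has the merit of using only ingredients already present in the paper (the square formula for GRS codes and the rank--nullity identity~(\ref{eq:dim_I2})). The cited reference presumably proceeds along similar lines, since the determinantal description of the rational normal curve and the dimension of $I_2$ for GRS codes are both classical; but your write-up makes the dependence on the hypothesis $2r-1\leq n$ completely transparent, which is a small expository gain.
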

\begin{proof}
    See for instance \cite[Propositions 1 \& 2]{LMT25}.
\end{proof}
The projective variety defined by the determinantal ideal generated by the minors of (\ref{eq:minors}) is the rational normal curve in $\PP^{r-1}$, defined as the image of the Veronese embedding 
\[\nu:\begin{cases}
    \PP^1 &\longmapsto \PP^{r-1}\\
    (u:v) &\longmapsto (v^{r-1}:uv^{r-2}:\ldots:u^{r-1}).
\end{cases}\]

\begin{remark}
    We may sometimes denote with \textit{a rational normal curve} any curve which is the image of the projective line through a map $(u:v)\mapsto(f_0(u,v):\ldots:f_{r-1}(u,v))$, where $(f_0,\ldots,f_{r-1})$ is 
    a basis of the $r$-dimensional vector space of homogeneous bivariate polynomials of degree $r-1$. Up to projective equivalence, all these curves are actually the same, which is why some authors talk about \textit{the}
    rational normal curve.
\end{remark}
The link between GRS codes and the rational normal curve provides a lot of insightful information about these codes. For example, as suggested in \cite{CMP11a}, any generator matrix of a GRS code 
can be seen as a collection of points in the projective space, through which there passes a unique rational normal curve \cite{GH78}. Computing a parametrization of the latter provides an alternative to the Sidelnikov-Shestakov attack.
\subsection{Alternant codes} The original proposal of McEliece \cite{M78} suggested implementing the scheme with binary Goppa codes, which are a subclass of the much broader family of alternant codes.
\begin{definition}[Alternant code]
Let $\word{x},\word{y}\in\Fqm^n$ be some support and multiplier, and let $r\geq 2$ be an integer. The alternant code of degree $r$, support $\word{x}$ and multiplier $\word{y}$ is defined by
$$\Alt{r}{\word{x}}{\word{y}}=\left(\GRS{r}{\word{x}}{\word{y}}^\bot\right)\cap\Fq^n.$$
\end{definition}
In this paper, we study the structure of alternant codes through their dual code. To this end, we need to introduce
the trace map associated to the Galois extension $\Fqm/\Fq$ and defined by 
$$\mathrm{Tr}:\begin{cases}\Fqm &\longrightarrow\Fq\\
    x & \longmapsto\displaystyle\sum_{j=0}^{m-1}x^{q^j}.
\end{cases}$$
Applying this map coordinatewise to any codeword is a way to construct a $q$-ary linear code given a code defined over $\Fqm$.
\begin{definition}[Trace code]
Let $\CC$ be an $[n,k]_{q^m}$-code. The \textbf{trace code} of $\CC$ is the $\Fq$-linear code defined by 
$$\Tr{\CC}\eqdef\{\Tr{\word{c}}~|~\word{c}\in\CC\},$$
where $\Tr{\word{c}}\eqdef(\Tr{c_1},\ldots,\Tr{c_n})$.
\end{definition}
Another way to construct an $\Fq$-linear code starting from an $\Fqm$-linear code $\CC$ consists of merely taking the intersection of $\CC$ with the subfield $\Fq^n$, like when we defined alternant codes. The resulting code, denoted by 
$$\CC_{|\Fq}\eqdef\CC\cap\Fq^n$$
is referred to as the \textit{subfield-subcode} of $\CC$. It turns out that those two constructions are dual to each other as stated by Delsarte's theorem.
\begin{theorem}[\cite{D75}]
    $(\CC_{|\Fq})^\perp = \Tr{\CC^\perp}$, or equivalently $(\CC^\perp)_{|\Fq}=\Tr{\CC}^\perp$.
\end{theorem}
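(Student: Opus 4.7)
The plan is to prove both equivalent statements simultaneously by showing the dual identity $\bigl(\Tr{\CC^\perp}\bigr)^\perp = \CC_{|\Fq}$ and then concluding by $\Fq$-biduality, since $\CC_{|\Fq}\subseteq\Fq^n$ and $\Tr{\CC^\perp}\subseteq\Fq^n$ are both finite-dimensional $\Fq$-subspaces of $\Fq^n$ equipped with the standard inner product.

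First I would record the easy inclusion: for every $\word{y}\in\CC^\perp$ and every $\word{x}\in\CC_{|\Fq}$, the fact that $x_i\in\Fq$ allows the trace to be factored out coordinate-wise, giving
\[\word{x}\cdot\Tr{\word{y}} \;=\; \sum_i x_i\,\Tr{y_i} \;=\; \Tr{\sum_i x_iy_i} \;=\; \Tr{\word{x}\cdot\word{y}} \;=\; \Tr{0} \;=\; 0.\]
Hence $\Tr{\CC^\perp}\subseteq(\CC_{|\Fq})^\perp$, which is one of the two inclusions already.

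The main step is to prove the reverse containment $(\CC_{|\Fq})^\perp\subseteq\Tr{\CC^\perp}$, which I would obtain by computing $\bigl(\Tr{\CC^\perp}\bigr)^\perp$ and showing that it equals $\CC_{|\Fq}$. Pick any $\word{x}\in\Fq^n$. By the same coordinate-wise argument, $\word{x}\in\bigl(\Tr{\CC^\perp}\bigr)^\perp$ is equivalent to $\Tr{\word{x}\cdot\word{y}}=0$ for every $\word{y}\in\CC^\perp$. Now the crucial input is the $\Fqm$-linearity of the dual code: for every $\alpha\in\Fqm$ we also have $\alpha\word{y}\in\CC^\perp$, so this condition upgrades to
\[\forall\,\alpha\in\Fqm,\;\forall\,\word{y}\in\CC^\perp,\qquad \Tr{\alpha\,(\word{x}\cdot\word{y})}=0.\]
The non-degeneracy of the trace form $\Tr{\cdot}:\Fqm\to\Fq$ then forces $\word{x}\cdot\word{y}=0$ for every $\word{y}\in\CC^\perp$, i.e.\ $\word{x}\in(\CC^\perp)^\perp=\CC$. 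Combined with $\word{x}\in\Fq^n$, this gives exactly $\word{x}\in\CC_{|\Fq}$.

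The expected obstacle is essentially packaged into that one line: invoking $\Fqm$-linearity of $\CC^\perp$ to turn a single trace vanishing condition into a vanishing condition over the whole extension, so that non-degeneracy of $\Tr$ can be applied. Once this is done, we have shown $\bigl(\Tr{\CC^\perp}\bigr)^\perp = \CC_{|\Fq}$, and taking the $\Fq$-linear dual of both sides yields the first statement $(\CC_{|\Fq})^\perp=\Tr{\CC^\perp}$; the equivalent form $(\CC^\perp)_{|\Fq}=\Tr{\CC}^\perp$ follows by applying the identity to $\CC^\perp$ in place of $\CC$.
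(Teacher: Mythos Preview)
The paper does not give its own proof of this statement; it is stated as Delsarte's theorem with a citation to \cite{D75} and used as a black box. Your proof sketch is correct and is precisely the standard argument: the easy inclusion via $\Fq$-linearity of the trace, and the reverse via $\Fqm$-linearity of $\CC^\perp$ combined with non-degeneracy of the trace form, followed by biduality over $\Fq$.
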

From this, we immediately get that duals of alternant codes are trace codes of GRS codes.
\begin{corollary}
$\Alt{r}{\word{x}}{\word{y}}^\perp=\Tr{\GRS{r}{\word{x}}{\word{y}}}.$
\end{corollary}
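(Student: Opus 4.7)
The plan is to unwind the definitions and apply Delsarte's theorem once. By definition,
\[
\Alt{r}{\word{x}}{\word{y}} = \GRS{r}{\word{x}}{\word{y}}^\perp \cap \Fq^n = \bigl(\GRS{r}{\word{x}}{\word{y}}^\perp\bigr)_{|\Fq},
\]
so I would set $\CC \eqdef \GRS{r}{\word{x}}{\word{y}}^\perp$, which is an $\Fqm$-linear code, and observe that the alternant code is precisely the subfield-subcode $\CC_{|\Fq}$. This reduces the statement to an immediate application of Delsarte's theorem.

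Next, I would invoke Delsarte's theorem in the form $(\CC_{|\Fq})^\perp = \Tr{\CC^\perp}$ to obtain
\[
\Alt{r}{\word{x}}{\word{y}}^\perp = \bigl(\CC_{|\Fq}\bigr)^\perp = \Tr{\CC^\perp} = \Tr{\bigl(\GRS{r}{\word{x}}{\word{y}}^\perp\bigr)^\perp}.
\]
The last step is to use that biduality holds for $\Fqm$-linear codes, so $\bigl(\GRS{r}{\word{x}}{\word{y}}^\perp\bigr)^\perp = \GRS{r}{\word{x}}{\word{y}}$, yielding the desired equality.

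There is essentially no obstacle here: the corollary is a one-line consequence of Delsarte's theorem combined with the definition of alternant codes. The only thing to be mindful of is that the dual operation appearing twice is taken over the extension field $\Fqm$ (since the GRS code is $\Fqm$-linear), while the trace maps coordinatewise into $\Fq^n$, so the final code $\Tr{\GRS{r}{\word{x}}{\word{y}}}$ is correctly an $\Fq$-linear code, matching the ambient field of $\Alt{r}{\word{x}}{\word{y}}^\perp$.
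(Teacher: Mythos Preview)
Your proof is correct and matches the paper's approach exactly: the paper states this corollary immediately after Delsarte's theorem with only the remark ``From this, we immediately get that duals of alternant codes are trace codes of GRS codes,'' leaving the details implicit. Your unwinding of the definition of $\Alt{r}{\word{x}}{\word{y}}$ as a subfield-subcode followed by a single application of Delsarte's theorem and biduality is precisely what is intended.
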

We may thus study alternant codes as trace codes of GRS codes. In general, for an $\Fqm$-linear code $\CC$, the structure of trace code can be used to derive a generator matrix of $\Tr{\CC}$ given a generator matrix of $\CC$, which may be really helpful for establishing a link between the structure of $\CC$ and that of the trace code.
We start with a generator matrix $\Gm\in\Fqm^{r\times n}$ of $\CC$, and let $\DC=(\CC^\perp)_{|\Fq}$, so that
$$\DC=\{\word{h}\in\Fq^n~|~\forall \word{c}\in\CC,~\word{c}\cdot\word{h}=0\}.$$
Let us denote with $\word{r}_i=(g_{i,1},\ldots,g_{i,n})$ the $i$-th row of $\Gm$. By linearity of the dot product, the above condition is equivalent to $\word{h}\cdot\word{r}_i=0$ for all $0\leq i< r$. Since $\word{h}$ has coefficients in 
$\Fq$, the equation $\word{h}\cdot\word{r}_i=0$ yields $m$ equations over $\Fq$. To see this, for all coefficient $g_{i,j}$ of $\Gm$, write $g_{i,j}=g_{i,j,0}+g_{i,j,1}\alpha+\ldots+g_{i,j,m-1}\alpha^{m-1}$, with $g_{i,j,\ell}\in\Fq$, and $\alpha$ being the primitive element in $\Fqm$.
Then for any $\word{h}\in\Fq^n$,
\begin{align*}
    \word{h}\cdot\word{r}_i=0 &\iff \sum_{j=1}^n h_j g_{i,j}=0\\
    &\iff \sum_{j=1}^n\sum_{\ell=0}^{m-1}\alpha^\ell h_j g_{i,j,\ell}=0\\
    &\iff\sum_{\ell=0}^{m-1}\alpha^\ell\sum_{j=1}^n h_jg_{i,j,\ell}=0\\
    &\iff \forall~0\leq \ell< m,~\sum_{j=1}^n h_jg_{i,j,\ell}=0.
\end{align*}
As a result, the $rm\times n$ matrix obtained by taking $\Gm$ and replacing each coefficient $g_{i,j}$ by the column-vector of its $\Fq$-coordinates is a parity-check matrix of $\DC$, \textit{i.e.} a generator matrix of $\DC^\bot$. By Delsarte's theorem, 
$\DC^\perp=\Tr{\CC}$, which means that we have obtained a generator matrix of $\Tr{\CC}$. It is clear from here that $\dim_{\Fq}\Tr{\CC}\leq m\cdot\dim_{\Fqm}\CC$. Subfield-subcodes or trace codes that reach this bound will be called \textit{proper} codes. \section{The problem of recovering the structure of an alternant code} \label{section:alternants}
In the McEliece cryptosystem instantiated with the family of alternant codes, the knowledge of some support and multiplier plays the role of the private key, just as in the case of generalized Reed-Solomon codes.
A noisy codeword $\yv=\cv+\ev$ can indeed be interpreted as a noisy codeword of the underlying GRS code of dimension $n-r$. Using a support and a multiplier, one can run the Welch-Berlekamp algorithm and thus recover $\cv$,
provided that the Hamming weight of $\ev$ is strictly less than $r/2$. This is why any information about the support and the multiplier of the alternant code must again be kept secret. Although several algorithms \cite{CGGOT13,CMP11a,SS92}
can recover such a support and multiplier in polynomial time in the GRS case, this task appears to be more difficult in the generic alternant case.
\subsection{The key-recovery problem} We formulate the key-recovery problem in the following manner.
\begin{problem} \label{problem:key_recovery}
Given $\Hpub\in\Fq^{rm\times n}$ a parity-check matrix of a proper alternant code $\AC$, find a support $\word{x}$ and a multiplier $\word{y}$ such that $\AC=\Alt{r}{\word{x}}{\word{y}}$.
\end{problem}
Throughout the rest of the paper, we make use of the natural vector space identification:
\begin{equation}\label{eq:psi_alpha}
\Psi_\alpha:\begin{cases}
    \Fqm&\overset{\simeq}{\longrightarrow}\Fq^m\\
    x=\ds\sum_{j<m}x_j\alpha^j&\longmapsto (x_0,\ldots,x_{m-1}).
\end{cases}
\end{equation}
By a slight abuse of notation, given a vector $\vv=(v_0,\ldots,v_{r-1})\in\Fqm^r$, we write $\Psi_\alpha(\vv)=(\Psi_\alpha(v_0),\ldots,\Psi_\alpha(v_{r-1}))\in\Fq^{rm}$. Finally, if $\Gm\in\Fqm^{r\times n}$ is a matrix whose columns are $(\gv_1,\ldots,\gv_n)$, 
we denote by $\Psi_{\alpha}(\Gm)\in\Fq^{rm\times n}$ the matrix whose columns are $(\Psi_\alpha(\gv_1),\ldots,\Psi_\alpha(\gv_n))$.

We explained in the previous section that whenever $\Gm\in\Fqm^{r\times n}$ is a generator matrix of $\CC$, then $\Psi_\alpha(\Gm)$ is a parity-check matrix of $\CC_{|\Fq}$. In particular, $\Hsec\eqdef\Psi_\alpha(\Vm_r(\xv,\yv))$ is a valid parity-check matrix of the alternant code $\Alt{r}{\xv}{\yv}$. Moreover, the knowledge of $\Hsec$ directly yields 
the support $\xv$ and the multiplier $\yv$, and therefore the private key. If now $\Hsec=\Psi_\alpha(\Gm)$ for some other generator matrix of $\GRS{r}{\xv}{\yv}$, then one directly recovers $\Gm$ by reading the columns of $\Hsec$, and it only remains to run \cite{SS92} to recover 
an efficient decoding algorithm. We just described a whole family of easy instances of Problem \ref{problem:key_recovery}.

In reality, an attacker would not be given $\Hsec=\Psi_\alpha(\Gm)$ directly. Instead, the public key consists of $\Hpub=\Pm\cdot\Hsec$ where $\Pm$ is a random $rm\times rm$ nonsigular matrix with entries in $\Fq$. In some sense, 
the transition matrix $\Pm$ not only shuffles the basis of the underlying GRS code, but also hides the $\Fqm$-linear structure that is visible in $\Hsec$, preventing an attacker from being able to directly recover a generator matrix of the GRS code. However, Proposition \ref{prop:intrinsic} tells us 
that if we look at the quadratic hull of $\Hpub$, it has a good chance to share strong properties with that of $\Hsec$. This leads us to investigate the state of the art about the algebraic quadratic hull of an alternant code.

\subsection{The quadratic hull of an alternant code}
The dimension of the space of quadratic forms vanishing at all columns of a given parity-check matrix of an alternant code was first studied in \cite{FGOPT10}. Equivalently by Equation (\ref{eq:dim_I2}) one can study the dimension of the square of the dual code, as it was first noticed in \cite{MP12} 
and fully investigated in \cite{MT21}.

\begin{theorem}[\cite{MT21}, Theorem 19] \label{thm:ea}
Let $\CC=\Alt{r}{\word{x}}{\word{y}}^\perp$ be a proper dual alternant code. Then
\begin{equation}\label{eq:ea}
    \dim I_2(\CC)\geq \dfrac{m}{2}(r-1)\left((2\ea+1)r-2\dfrac{q^{\ea+1}-1}{q-1}\right),
\end{equation}
where $\ea=\lfloor \log_q(r-1)\rfloor$.
\end{theorem}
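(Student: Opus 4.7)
The plan is to lower-bound $\dim I_2(\CC)$ by constructing enough $\Fq$-linearly independent quadratic forms vanishing on the columns of a parity-check matrix of $\CC$. By Delsarte's theorem, $\CC = \Tr{\GRS{r}{\xv}{\yv}}$, so a canonical parity-check matrix of $\CC^\perp = \Alt{r}{\xv}{\yv}$ is $\Hsec = \Psi_\alpha(\Vm_r(\xv, \yv))$, whose columns are the $\Fq$-coordinate expansions of $\gv_j = (y_j, x_j y_j, \ldots, x_j^{r-1} y_j)^\top \in \Fqm^r$.

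The key reduction is to extend scalars and introduce the $\Fqm$-linear forms $\widetilde{Y}_i^{(k)} := \sum_{\ell=0}^{m-1} \alpha^{\ell q^k} X_{i,\ell}$ for $i \in \{0,\ldots,r-1\}$ and $k \in \{0,\ldots,m-1\}$, which evaluate to $g_{j,i}^{q^k}$ at the column $\Psi_\alpha(\gv_j)$. Since the Moore matrix $(\alpha^{\ell q^k})_{k,\ell}$ is invertible, this is an $\Fqm$-linear change of variables, and the subring $\Fq[X] \subset \Fqm[X] \cong \Fqm[\widetilde{Y}]$ is identified with the fixed subring of the cyclic Galois action $\widetilde{Y}_i^{(k)} \mapsto \widetilde{Y}_i^{(k+1 \bmod m)}$ (together with Frobenius on coefficients). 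Applying Galois descent to the Galois-stable kernel $K$ of the evaluation map $\Fqm[\widetilde{Y}]_2 \to \Fqm^n$ sending $\widetilde{Y}_i^{(k_1)} \widetilde{Y}_j^{(k_2)}$ to $\eval(x^{iq^{k_1}+jq^{k_2}} y^{q^{k_1}+q^{k_2}})$, one obtains $\dim_\Fq I_2(\CC) = \dim_{\Fqm} K$.

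I would then decompose $K$ according to the pair $(k_1, k_2)$. On the diagonal $(k, k)$, the evaluation parametrizes the $k$-th Frobenius twist of the rational normal curve, so by Proposition \ref{prop:minors} the type-kernel has dimension $\binom{r-1}{2}$, totalling $m \binom{r-1}{2}$ across the $m$ diagonal types. On cross types, since $x_j \in \Fqm^\ast$ gives $x_j^{q^m-1}=1$, monomials $\widetilde{Y}_i^{(k_1)}\widetilde{Y}_j^{(k_2)}$ and $\widetilde{Y}_{i'}^{(k_1)}\widetilde{Y}_{j'}^{(k_2)}$ coincide on the orbit as soon as $i + jq^d \equiv i' + j'q^d \pmod{q^m - 1}$, with $d = \min(|k_2-k_1|, m-|k_2-k_1|)$ the cyclic distance (using $q^{m-d}\equiv q^{-d}$). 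By the definition of $\ea$, such collisions occur precisely when $d \leq \ea$: the image set has cardinality $|S_d| = (r-1)q^d + r$, giving $r^2 - |S_d| = (r-1)(r - q^d)$ independent relations per type, and for $d < m/2$ there are exactly $m$ cross types per cyclic distance $d$.

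Summing the diagonal and cross contributions yields
\[
\dim I_2(\CC) \;\geq\; m\binom{r-1}{2} \;+\; m\sum_{d=1}^{\ea}(r-1)(r-q^d),
\]
which rearranges via $\sum_{d=0}^{\ea} q^d = (q^{\ea+1}-1)/(q-1)$ into the claimed $\tfrac{m}{2}(r-1)\bigl((2\ea+1)r - 2(q^{\ea+1}-1)/(q-1)\bigr)$. The main technical hurdle is verifying that the forms collected from distinct types are $\Fqm$-linearly independent: they live in disjoint monomial subspaces of $\Fqm[\widetilde{Y}]_2$, and their evaluation vectors carry pairwise distinct $y$-exponents ($2q^k$ on the diagonal, $q^{k_1}+q^{k_2}$ on cross types, each with a unique base-$q$ expansion), which suffices for generic $\yv$. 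The properness hypothesis on $\CC$ guarantees that the Galois-descent step is non-degenerate and that the $\Fq$-count of forms matches the $\Fqm$-count of $K$.
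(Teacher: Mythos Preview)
The paper does not prove this theorem; it is quoted from \cite{MT21} and used as a black box, so there is no in-paper argument to compare against. That said, your strategy is precisely the one underlying \cite{FGOPT11,MT21}, and the paper itself sketches the same mechanism later (Section~\ref{section:attack}) when it recalls the quadratic relations $f_{i,j,k,\ell,u,v}=X_{i,u}X_{j,v}-X_{k,u}X_{\ell,v}$ with $iq^u+jq^v=kq^u+\ell q^v$ in the canonical basis $\cB$. Your variables $\widetilde Y_i^{(k)}$ are exactly the $X_{i,u}$ of that basis after extension of scalars, your ``diagonal'' relations are the $2\times 2$ minors of the rational normal curve in each Frobenius layer, and your ``cross'' relations are the $f_{i,j,k,\ell,u,v}$ with $u\neq v$. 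The arithmetic you give, summing $m\binom{r-1}{2}+m\sum_{d=1}^{\ea}(r-1)(r-q^d)$ and simplifying, is the standard computation and is correct.

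Two points of care. First, linear independence across types follows already from the fact that the constructed forms live in pairwise disjoint monomial subspaces of $\Fqm[\widetilde Y]_2$; your additional remark about ``pairwise distinct $y$-exponents \ldots\ for generic $\yv$'' is both unnecessary and slightly at odds with the statement, which is claimed for every proper dual alternant code, not only generic ones. Second, the assertion that each cyclic distance $d\leq\ea$ contributes exactly $m$ unordered cross types implicitly assumes $\ea<\lfloor m/2\rfloor$; when $m$ is even and $\ea\geq m/2$ the distance $d=m/2$ yields only $m/2$ pairs, and one must also watch for wrap-around collisions modulo $q^m-1$. These edge cases can only increase the kernel dimension, so the inequality survives, but the bookkeeping as you wrote it is not quite complete.
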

Note that we did not specify any generator matrix in the above statement, as the result is independent of such a choice. Even though the above result is an inequality, it turns out that one can confidently predict when equality holds.
\begin{heur}[\cite{FGOPT10}]\label{heur:fgopt}
Assume that $\xv$ and $\yv$ are chosen independently at random. Then equality is reached in Inequality (\ref{eq:ea}) as soon as the right-hand side exceeds $\binom{rm+1}{2}-n$.
\end{heur}
This is the high-rate regime corresponding to the so called \textit{square-distinguishability} of (duals of) alternant codes. In general, we say that alternant codes are square-distinguishable when the square of their dual code 
has dimension lower than $\min\{n,\binom{rm+1}{2}\}$, \textit{i.e.} when the dimension of the square of the dual code is unexpectedly low compared to what we would obtain from a random code having the same parameters.

When we further assume that $r\leq q$, we see that $\ea=0$ and the dimension of $I_2(\CC)$ simplifies as 
$$\dim I_2(\CC)=m\binom{r-1}{2},$$
which means that in such a case, the dimension of the vanishing ideal at degree 2 of an alternant code is exactly $m$ times that of the underlying GRS code, which suggests a link between the algebraic quadratic hull of the dual of an alternant code and that of the 
underlying GRS code. We use the geometric framework to explain this. Let $\cY$ be the geometric quadratic hull of $\GRS{r}{\word{x}}{\word{y}}$ with respect to $\Vm_r(\xv,\yv)$. Recall that $\cY$, when seen as a projective variety, is the rational normal curve and its defining ideal 
is generated by the $2\times 2$ minors of (\ref{eq:minors}).
A point $P=(p_0:\ldots:p_{r-1})\in\PP^{r-1}(\Fqm)$ belongs to $\cY$ if and only if the vector $(p_0,\ldots,p_{r-1})$ satisfies these homogeneous quadratic equations. As we will explain in the following section, this condition boils down to the whole vector of
coordinates $\Psi_\alpha(p_0,\ldots,p_{r-1})\in\Fq^{rm}$ satisfying $m$ quadratic relations over $\Fq$. In other words, any quadratic equation satisfied by the columns of $\Vm_r(\xv,\yv)$ yields $m$ quadratic equations satisfied by the columns of the parity-check matrix $\Hsec=\Psi_\alpha(\Vm_r(\xv,\yv))$ of $\Alt{r}{\xv}{\yv}$.
This phenomenon highlights a link between the quadratic hull of an $\Fqm$-linear code and that of its trace code. This is the concept of \textit{affine Weil restriction}, to which the following section is dedicated. \section{Affine Weil restriction} \label{section:weil}
The idea of Weil restriction is to construct a variety defined over $\Fq$ given a variety defined over $\Fqm$ by splitting the variables of the defining equations according to their $\Fq$-coordinates.
An algebraic variety $V\subset\Fqm^r$ would thus give rise to a variety $W\eqdef\Psi_{\alpha}(V)\subset\Fq^{rm}$. Weil restriction
is the process through which we obtain the defining ideal of $W$ given the defining ideal of $V$.
\subsection{Definition and first properties} Let $\Rm=\Fqm[X_0,\ldots,X_{r-1}]$ be the polynomial ring over $\Fqm$ in $r\geq 2$ variables. Besides, consider the two polynomial rings
$$\Sm=\Fq[x_{i,j}~|~0\leq i<r,~0\leq j< m],$$
and
$$\Sm\otimes_{\Fq}\Fqm=\Fqm[x_{i,j}~|~0\leq i<r,~0\leq j< m].$$
This last ring, of which $\Sm$ is a subring, enables us to define the following as a homomorphism of graded $\Fqm$-algebras:
\begin{equation}\label{eq:hom}
\Phi:\begin{cases}\Rm &\longrightarrow\Sm\otimes_{\Fq}\Fqm\\
    X_i&\longmapsto \sum_{j=1}^r\alpha^jx_{i,j}.
\end{cases}
\end{equation}
Applying $\Phi$ corresponds to splitting each variable $X_i$ according to its $\Fq$-coordinates. The resulting polynomial has coefficients in $\Fqm$, so we may again gather its $\Fq$-coordinates. In other words,
for any $f\in\Rm$, there are unique $\Phi_1(f),\ldots,\Phi_m(f)\in\Sm$ such that
\[\Phi(f)=\displaystyle\sum_{j=0}^{m-1}\alpha^j\Phi_j(f).\]
In particular, each map $\Phi_j:\Rm\rightarrow\Sm$ is an $\Fq$-linear map.
\begin{definition}[Weil restriction]\label{def:weil}
The Weil restriction of an ideal $I\subset\Rm$ is defined by
$$\Res{\Fqm/\Fq}{I}\eqdef\langle \Phi_j(f)~|~f\in I,~0\leq j< m\rangle\subset\Sm.$$
\end{definition}
The main property at the core of Weil restriction is the correspondence between rational points.
\begin{proposition} \label{prop:rational}
    Let $I\subset\Rm$ be an ideal, and let $J=\Resq{I}$. Then 
    $$V_{\Fq}(J)=\Psi_\alpha(V_{\Fqm}(I)).$$
\end{proposition}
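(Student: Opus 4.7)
The plan is to show that the bijection $\Psi_\alpha:\Fqm^r\to\Fq^{rm}$ restricts to a bijection between $V_{\Fqm}(I)$ and $V_{\Fq}(J)$. The whole construction of $\Phi$ in (\ref{eq:hom}) is engineered so that substituting $\Fq$-values into the $x_{i,j}$ recovers evaluation of $f$ at the corresponding $\Fqm$-point: fixing $\mathbf{x}=(x_{i,j})\in\Fq^{rm}$ and setting $v_i=\sum_j\alpha^j x_{i,j}$, so that $\mathbf{x}=\Psi_\alpha(v)$, the very definition of $\Phi$ as the $\Fqm$-algebra homomorphism $X_i\mapsto\sum_j\alpha^j x_{i,j}$ gives the identity
\[\Phi(f)(\mathbf{x})=f(v_0,\ldots,v_{r-1})\]
for every $f\in\Rm$.

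Next I would use the decomposition $\Phi(f)=\sum_{j=0}^{m-1}\alpha^j\Phi_j(f)$ with $\Phi_j(f)\in\Sm$. Evaluating at $\mathbf{x}$ produces scalars $\Phi_j(f)(\mathbf{x})\in\Fq$, and since $\{1,\alpha,\ldots,\alpha^{m-1}\}$ is an $\Fq$-basis of $\Fqm$, the sum $\sum_j\alpha^j\Phi_j(f)(\mathbf{x})$ vanishes if and only if every single $\Phi_j(f)(\mathbf{x})$ vanishes. Combined with the previous identity, this yields the key local equivalence
\[f(v)=0\iff \Phi_j(f)(\mathbf{x})=0\text{ for all }0\leq j<m.\]

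To conclude, I would simply let $f$ range over $I$: the point $v$ belongs to $V_{\Fqm}(I)$ iff $f(v)=0$ for every $f\in I$, iff $\Phi_j(f)(\mathbf{x})=0$ for every $f\in I$ and every $j$, iff $\mathbf{x}$ vanishes on every generator of $J$, iff $\mathbf{x}\in V_{\Fq}(J)$. Applying $\Psi_\alpha$ (respectively $\Psi_\alpha^{-1}$) gives both inclusions at once. The statement is essentially a direct unfolding of the definition of Weil restriction, so I do not foresee a real obstacle; the only points requiring care are the use of the $\Fq$-linear independence of $\{\alpha^j\}$ to split the single $\Fqm$-equation $\Phi(f)(\mathbf{x})=0$ into $m$ independent $\Fq$-equations, and the routine observation that a point of $\Fq^{rm}$ vanishes on $J$ if and only if it vanishes on the specified generating set.
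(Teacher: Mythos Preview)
Your proof is correct and follows essentially the same approach as the paper's own proof: both establish the identity $\Phi(f)(\Psi_\alpha(v))=f(v)$, use the $\Fq$-linear independence of $\{1,\alpha,\ldots,\alpha^{m-1}\}$ to split the single $\Fqm$-equation into $m$ equations over $\Fq$, and conclude by letting $f$ range over $I$. If anything, you are slightly more explicit than the paper about each step, but there is no substantive difference.
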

\begin{proof}
Let $P\in\Fqm^r$. The following equivalences hold:
\begin{align*}
\forall f\in I,~f(P)=0 & \iff \forall f\in I,~\Phi(f)(\Psi_{\alpha}(P))=0\\
&\iff \forall f\in I,~\forall 0\leq j< m,~\Phi_j(f)(\Psi_{\alpha}(P))=0 \text{~By identification} \\
&\iff \forall g \in J,~g(\Psi_{\alpha}(P))=0,
\end{align*}
Hence $P\in V_{\Fqm}(I)\iff \Psi_{\alpha}(P)\in V_{\Fq}(J)$, and thus $\Psi_{\alpha}(V_{\Fqm}(I))\subseteq V_{\Fq}(J)$. By taking $Q\in\Fq^{rm}$ and reading the above equivalences backwards with $\Psi_\alpha^{-1}(Q)$ playing the role of $P$, we obtain the converse inclusion.
\end{proof}

Weil restriction can be defined as a functor from the category of varieties defined over $\Fqm$ to that of varieties defined over $\Fq$. This means that we can define the Weil restriction of a morphism $f:V_1\rightarrow V_2$ as well, where $V_1\subset\Fqm^{r}$ and $V_2\subset\Fqm^{s}$ are algebraic varieties.
We will restrict ourselves to the case where $f$ is defined by a collection of polynomials $(f_1,\ldots,f_s)$, with each $f_i\in\Fqm[V_1]$. In such a case, one can take any representative of each $f_i$ in $\Rm$ and apply the map $\Phi$, hence getting polynomials $\Phi_0(f_i),\ldots,\Phi_{m-1}(f_i)\in\Sm$.
\begin{proposition}
    The collection of polynomials $(\Phi_0(f_1),\ldots,\Phi_{m-1}(f_1),\ldots,\Phi_0(f_s),\ldots,\Phi_{m-1}(f_s))$ is a well-defined rational function $W_1\rightarrow W_2$, where $W_i\eqdef\Psi_\alpha(V_i),~i\in\{1,2\}$. We refer to it as the Weil restriction of $f$.
\end{proposition}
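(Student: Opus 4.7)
The plan is to establish two facts: first, that the tuple $(\Phi_0(f_1),\ldots,\Phi_{m-1}(f_s))$ depends only on the classes of the $f_i$'s in $\Fqm[V_1]$, and second, that evaluating this tuple at any point of $W_1$ lands in $W_2$. Both will follow directly from the defining identity of $\Phi$ together with Proposition \ref{prop:rational}.

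For well-definedness, I would pick two lifts $f_i,f_i'\in\Rm$ of the same element of $\Fqm[V_1]$, so that $f_i-f_i'\in I(V_1)$. By the very definition of Weil restriction (Definition \ref{def:weil}), each $\Phi_j(f_i-f_i')$ lies in $J_1\eqdef\Resq{I(V_1)}$, hence vanishes on $V_{\Fq}(J_1)=W_1$ thanks to Proposition \ref{prop:rational}. Since $\Phi_j$ is $\Fq$-linear, this gives $\Phi_j(f_i)\equiv\Phi_j(f_i')$ as functions on $W_1$, which is exactly what is needed for the $s m$-tuple to be a well-defined regular map on $W_1$.

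For the target condition, I would first isolate the crucial identity underlying Proposition \ref{prop:rational}: for every $g\in\Rm$ and every $P\in\Fqm^r$,
\[g(P)=\Phi(g)(\Psi_\alpha(P))=\sum_{j=0}^{m-1}\alpha^j\,\Phi_j(g)(\Psi_\alpha(P)),\]
which is immediate from the substitution rule $X_i\mapsto\sum_j\alpha^j x_{i,j}$ defining $\Phi$ in \eqref{eq:hom}. Now fix $Q\in W_1$ and write $Q=\Psi_\alpha(P)$ with $P\in V_1$. Applying the identity to each $f_i$ and using the uniqueness of the decomposition in the basis $(1,\alpha,\ldots,\alpha^{m-1})$ yields $\Psi_\alpha(f_i(P))=(\Phi_0(f_i)(Q),\ldots,\Phi_{m-1}(f_i)(Q))$. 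Concatenating over $i=1,\ldots,s$ gives that the image of $Q$ under the candidate morphism is exactly $\Psi_\alpha(f(P))$. Since $f(P)\in V_2$ by assumption on $f$, this image lies in $\Psi_\alpha(V_2)=W_2$, as required.

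There is no serious obstacle here; the only point deserving a bit of care is that $\Phi_j$ is only $\Fq$-linear (not $\Fqm$-linear), so the well-definedness argument must be carried out $\Fq$-coordinate by $\Fq$-coordinate rather than invoking a global $\Fqm$-homomorphism property of $\Phi_j$. One could also remark at the end that the construction is functorial: the identity morphism restricts to the identity, and composition of morphisms commutes with Weil restriction, since both statements reduce to the single identity $g(P)=\Phi(g)(\Psi_\alpha(P))$ displayed above.
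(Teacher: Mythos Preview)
Your proof is correct and follows the same approach as the paper's: both establish well-definedness by observing that if $g\in I(V_1)$ then each $\Phi_j(g)$ lies in $J_1=\Resq{I(V_1)}$, so changing representatives of the $f_i$'s only changes the $\Phi_j(f_i)$'s by elements of $J_1$. Your version is in fact more thorough than the paper's, which stops after the well-definedness step; you additionally verify that the map lands in $W_2$ via the identity $g(P)=\Phi(g)(\Psi_\alpha(P))$, which in the paper is only alluded to in the subsequent remark on the commutative diagram.
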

\begin{proof}
Let $I\subset\Rm$ be the ideal defining $V_1$. By definition of Weil restriction, if $g\in I$, then all the $\Phi_j(g)$'s belong to $J\eqdef\Resq{I}$. As a consequence, if $f\in\Fqm[V_1]$ is defined
up to an element of $I$, then all the $\Phi_j(f)$'s are well-defined up to an element of $J$. This proves that the collection of polynomials given in the proposition define a rational function $W_1\rightarrow W_2$.
\end{proof}
\begin{remark}
This is again the good notion for Weil restriction, as we can easily see that the following
\[
\begin{tikzcd}[column sep=2cm]
V_1 \arrow[r, "f"] \arrow[d, "\Psi_\alpha"'] & V_2 \arrow[d, "\Psi_\alpha"] \\
W_1 \arrow[r, "\Resq{f}"'] & W_2
\end{tikzcd}
\]
is a commutative diagram.
\end{remark}
We now investigate the case of $\Fqm$-linear maps, as it will be central in the following.
First, let us introduce the matrix $\Jm\in\Fq^{m\times m}$ defined as the matrix of the $\Fq$-linear map
$$\mu_{\alpha}:\begin{cases}
    \Fqm&\longrightarrow\Fqm\\
    x &\longmapsto \alpha x,
\end{cases}$$
with respect to the monomial basis $(1,\alpha,\ldots,\alpha^{m-1})$ of $\Fqm$. Note that $\Jm$ is nothing but the companion matrix of the minimal polynomial $\Pi_\alpha$ of $\alpha$ over $\Fq$. Given any field element $x=x_0+x_1\alpha+\ldots+x_{m-1}\alpha^{m-1}$, the matrix $x_0\Im_m+x_1\Jm+\ldots+x_{m-1}\Jm^{m-1}$ is the matrix of the multiplication by $x$ in the monomial
basis $(1,\alpha,\ldots,\alpha^{m-1})$. We denote it by $\Mata(x)$. All in all, the map
$$\Mat_{\alpha}:\begin{cases}
    \Fqm &\longrightarrow \Fq[\Jm]\\
    x &\longmapsto \Mata(x)
\end{cases}$$
defines an isomorphism of $\Fq$-algebras. These matrices are the natural way of seeing multiplications in terms of $\Fq$-coordinates, as
\begin{equation}\label{eq:mata}
\forall x,y\in\Fqm,~\Psi_\alpha(xy)=\Mata(x)\cdot\Psi_\alpha(y).
\end{equation}
Furthermore, one can identify the image of $\Mat_{\alpha}$ using the following criterion.
\begin{proposition}\label{prop:cond}
    Let $\Bm\in\Fq^{m\times m}$. Then $\Bm\in\Fq[\Jm]$ if and only if $\Bm\Jm=\Jm\Bm$.
\end{proposition}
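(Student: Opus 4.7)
The plan is to interpret the statement as a computation of the centralizer of $\Jm$ in the matrix algebra $\Fq^{m\times m}$, and to show that it coincides with $\Fq[\Jm]$. One direction is immediate: if $\Bm = p(\Jm)$ for some $p\in\Fq[X]$, then clearly $\Bm\Jm = \Jm\Bm$, so the forward inclusion $\Fq[\Jm]\subset\{\Bm~|~\Bm\Jm=\Jm\Bm\}$ is trivial.

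For the converse, my plan is to exploit the fact that $\Jm$ is the companion matrix of $\Pi_\alpha$, which is irreducible of degree $m$ over $\Fq$. In particular, $\Jm$ admits a cyclic vector: if $\ev_1,\ldots,\ev_m$ denotes the canonical basis of $\Fq^m$, one checks directly that $\Jm\ev_i=\ev_{i+1}$ for $1\leq i<m$, so that the family $(\ev_1,\Jm\ev_1,\ldots,\Jm^{m-1}\ev_1)=(\ev_1,\ldots,\ev_m)$ is a basis of $\Fq^m$. This also shows that $(\Im_m,\Jm,\ldots,\Jm^{m-1})$ is linearly independent (hence $\Fq[\Jm]$ has dimension exactly $m$, consistent with $\Pi_\alpha$ being the minimal polynomial of $\Jm$).

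Now assume $\Bm\Jm=\Jm\Bm$, and consider the image $\Bm\ev_1\in\Fq^m$. Since $(\ev_1,\ldots,\ev_m)=(\ev_1,\Jm\ev_1,\ldots,\Jm^{m-1}\ev_1)$ is a basis, there exists a polynomial $p\in\Fq[X]$ of degree less than $m$ such that $\Bm\ev_1=p(\Jm)\ev_1$. Using commutativity, for every $1\leq i\leq m$,
\[\Bm\ev_i = \Bm\Jm^{i-1}\ev_1 = \Jm^{i-1}\Bm\ev_1 = \Jm^{i-1}p(\Jm)\ev_1 = p(\Jm)\Jm^{i-1}\ev_1 = p(\Jm)\ev_i.\]
Since $\Bm$ and $p(\Jm)$ agree on a basis, $\Bm=p(\Jm)\in\Fq[\Jm]$, which closes the proof.

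I do not anticipate any real obstacle: the argument boils down to the classical fact that the centralizer of a non-derogatory matrix is reduced to its polynomial algebra, and the cyclic-vector structure is read off immediately from the companion form of $\Jm$. The only thing to be careful about is ensuring that $p$ has degree less than $m$, which follows from the dimension count above, i.e.\ from the irreducibility of $\Pi_\alpha$.
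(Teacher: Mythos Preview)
Your proof is correct but follows a genuinely different route from the paper's. The paper diagonalizes $\Jm$ over $\Fqm$: since $\Pi_\alpha$ splits into distinct linear factors over $\Fqm$, one has $\Pm\Jm\Pm^{-1}=\mathrm{Diag}(\alpha,\alpha^q,\ldots,\alpha^{q^{m-1}})$, and any matrix commuting with a diagonal matrix with distinct entries is itself diagonal; Lagrange interpolation then yields $f\in\Fqm[X]$ with $\Bm=f(\Jm)$, and a final identification argument descends $f$ to $\Fq[X]$. Your argument instead stays entirely over $\Fq$ and exploits the cyclic-vector structure of the companion matrix directly, which is both more elementary (no extension of scalars, no diagonalization, no interpolation) and more general, since it applies verbatim to any non-derogatory matrix rather than only to those with irreducible minimal polynomial. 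One small remark: your closing sentence slightly overstates the role of irreducibility --- the bound $\deg p<m$ comes for free from writing $\Bm\ev_1$ in the basis $(\ev_1,\ldots,\ev_m)$, and in any case $\Bm=p(\Jm)$ already gives $\Bm\in\Fq[\Jm]$ regardless of the degree of $p$.
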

\begin{proof}
    It is clear that a polynomial in $\Jm$ commutes with $\Jm$. Conversely, let $\Bm\in\Fq^{m\times m}$ and assume that $\Bm$ and $\Jm$ commute.
    The minimal polynomial $\Pi_{\Jm} $ of $\Jm $ over $\Fq$ is that of $\alpha$, and is therefore irreducible of degree $m$. As $\Fqm/\Fq$ is a Galois extension, $\Pi_{\Jm} $ splits into linear factors over $\Fqm$. More precisely,
    \[\Pi_{\Jm} =\ds\prod_{j=0}^{m-1}(X-\alpha^{q^j}).\]
    This implies that $\Jm $ is diagonalizable over $\Fqm$, and there exists $\Pm\in\GL_m(\Fqm)$ such that
    $$\Pm\Jm\Pm^{-1}=\Delta_\alpha\eqdef\mathrm{Diag}(\alpha,\alpha^q,\ldots,\alpha^{q^{m-1}}),$$
    the diagonal entries being pairwise distinct. Since $\Bm$ commutes with $\Jm$, we see that $\Pm\Bm\Pm^{-1}$ commutes with $\Delta_\alpha$ and consequently stabilizes its eigenspaces. As the latter are lines, we conclude
    that $\Pm\Bm\Pm^{-1}$ is also a diagonal matrix and there exists $(\beta_1,\ldots,\beta_m)\in\Fqm^m$ such that $\Pm\Bm\Pm^{-1}=\mathrm{Diag}(\beta_1,\ldots,\beta_m)$. Now let $f\in\Fqm[X]$ be an interpolating polynomial such that $f(\alpha^{q^j})=\beta_j$
    for all $j$. We see that
    $$f(\Pm\Jm\Pm^{-1})=\Pm f(\Jm )\Pm^{-1}=\mathrm{Diag}(f(\alpha),\ldots,f(\alpha^{q^{m-1}}))=\mathrm{Diag}(\beta_1,\ldots,\beta_m)=\Pm\Bm\Pm^{-1}.$$
    Thus, $\Bm=f(\Jm )$. By writing $f=f_0+\alpha f_1+\ldots+\alpha^{m-1}f_{m-1}$, with $f_j\in\Fq[X]$, and then proceeding through identification (which is possible as both $\Bm$ and $\Jm $ lie in $\Fq^{m\times m}$) we see that $\Bm=f_0(\Jm )$ which ends the proof.
\end{proof}
Equation (\ref{eq:mata}) essentially states that $\Mata(x)$ is the matrix of the Weil restriction of the multiplication map associated to $x$. This can be naturally generalized to higher dimensions. More specifically, we define below the Weil restriction of a matrix. 
\begin{definition}
    Let
    $$\Bm=(a_{i,j})_{\substack{1 \leq i \leq s \\ 1 \leq j \leq r}}\in\Fqm^{s\times r}$$
    be a matrix. We denote by $\Resq{\Bm}$ the matrix of the Weil restriction of the map $\vv\mapsto\Bm\vv$ in the canonical basis, \textit{i.e.}
    $$\forall \vv\in\Fqm^r,~\Psi_\alpha(\Bm\cdot\vv)=\Resq{\Bm}\cdot\Psi_\alpha(\vv).$$
\end{definition}
\begin{remark}
By applying Equation (\ref{eq:mata}) coefficient-wise, we see that
$$\Resq{\Bm}=\begin{pmatrix}
    \Mata(a_{1,1}) & \ldots & \Mata(a_{1,r})\\
    \vdots & \ddots & \vdots \\
    \Mata(a_{s,1}) & \ldots & \Mata(a_{s,r})
\end{pmatrix}\in\Fq^{sm\times rm}.$$
\end{remark}
For any non-negative integer $k$, let $\Jm_k=\mathrm{Diag}(\Jm,\ldots,\Jm)$. Note that $\Jm_k$ is the Weil restriction of the homothety of $\Fqm^k$ defined by $\alpha$, \textit{i.e.} the scalar
multiplication by $\alpha$. Matrices of the form $\Resq{\Bm}$ can be identified in $\Fq^{sm\times rm}$ using the following algebraic criterion, which may be thought of as a generalization of Proposition \ref{prop:cond}.
\begin{proposition} \label{prop:cond_r}
    Let $\Am\in\Fq^{sm\times rm}$. The following are equivalent:
    \begin{itemize}
        \item[$(i)$] $\Am\Jm_r=\Jm_s\Am$;
        \item[$(ii)$] there exists $\Bm\in\Fqm^{s\times r}$ such that $\Am=\Resq{\Bm}$.
    \end{itemize}
\end{proposition}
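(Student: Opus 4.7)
The plan is to reduce this statement to the scalar case already proved in Proposition \ref{prop:cond} by exploiting the block structure of $\Jm_r$ and $\Jm_s$. Concretely, I decompose $\Am$ as an $s\times r$ array of $m\times m$ blocks $\Am=(\Am_{i,j})_{1\leq i\leq s,\,1\leq j\leq r}$, each $\Am_{i,j}\in\Fq^{m\times m}$, and exploit the fact that $\Jm_r$ (resp.\ $\Jm_s$) is block-diagonal with $r$ (resp.\ $s$) copies of $\Jm$ on the diagonal.

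First I would carry out the block multiplications: under this decomposition, the $(i,j)$-block of $\Am\Jm_r$ is $\Am_{i,j}\Jm$, while the $(i,j)$-block of $\Jm_s\Am$ is $\Jm\Am_{i,j}$. Consequently, condition $(i)$ is equivalent to the $sr$ scalar-size commutation relations
\[
\Am_{i,j}\Jm=\Jm\Am_{i,j}\quad\text{for all } 1\leq i\leq s,\ 1\leq j\leq r.
\]
Next I would invoke Proposition \ref{prop:cond} for each block: each of these relations holds if and only if $\Am_{i,j}\in\Fq[\Jm]$. Using that $\Mat_\alpha:\Fqm\to\Fq[\Jm]$ is an isomorphism of $\Fq$-algebras, this is in turn equivalent to the existence of a unique $b_{i,j}\in\Fqm$ such that $\Am_{i,j}=\Mata(b_{i,j})$.

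To conclude $(i)\Rightarrow(ii)$, I would collect these scalars into the matrix $\Bm=(b_{i,j})\in\Fqm^{s\times r}$; then by the explicit formula for $\Resq{\Bm}$ given in the remark following the definition of the Weil restriction of a matrix, one reads off immediately $\Am=\Resq{\Bm}$. For the converse $(ii)\Rightarrow(i)$, starting from $\Am=\Resq{\Bm}$, each block is $\Mata(b_{i,j})\in\Fq[\Jm]$, hence commutes with $\Jm$, so reversing the block computation gives $\Am\Jm_r=\Jm_s\Am$.

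There is no genuine obstacle here: the whole content is a blockwise lift of Proposition \ref{prop:cond}, the only care being to verify that the block-diagonal structure of $\Jm_r$ and $\Jm_s$ makes the matrix commutation $\Am\Jm_r=\Jm_s\Am$ decouple cleanly into the $sr$ scalar commutations on the blocks. Once that observation is made, the equivalence is immediate and no further algebraic work is required.
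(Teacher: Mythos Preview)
Your proposal is correct and follows essentially the same approach as the paper: decompose $\Am$ into its $m\times m$ blocks, observe that the block-diagonal structure of $\Jm_r$ and $\Jm_s$ makes condition $(i)$ equivalent to each block commuting with $\Jm$, and then apply Proposition~\ref{prop:cond} blockwise. The paper's proof is more terse but the argument is identical.
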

\begin{proof}
The existence of $\Bm\in\Fqm^{s\times r}$ such that $\Am=\Resq{\Bm}$ is equivalent to each block $\Am_{i,j}\in\Fq^{m\times m}$ being in the image of $\Mata$, which by Proposition \ref{prop:cond} boils down to the $\Am_{i,j}$'s commuting with $\Jm$. Gathering these commutativity conditions is exactly equivalent to $(i)$.
\end{proof}

\subsection{Weil-properness} We now have introduced the necessary framework for stating as a proposition the link between the quadratic hull of an alternant code and that of the underlying GRS code.
\begin{proposition}\label{prop:inclusion}
Let $\CC\subset\Fqm^n$ be a proper code. Let $\Gm\in\Fqm^{r\times n}$ be a generator matrix of $\CC$ and let $\Hsec=\Psi_{\alpha}(\Gm)$, which is the secret parity-check matrix of $\Tr{\CC}$. Then
\begin{equation}\label{eq:inclusion}
\Resq{\langle I_2(\Gm)\rangle}\subseteq \langle I_2(\Hsec)\rangle,
\end{equation}
and
\begin{equation}\label{eq:inclusion2}
    V_2(\Hsec)\subseteq \Psi_\alpha(V_2(\Gm)).
\end{equation}
\end{proposition}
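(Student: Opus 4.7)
The two inclusions reduce to a single core computation, followed by a direct application of Proposition \ref{prop:rational} for the geometric part. My plan is to first show that each $\Fq$-linear map $\Phi_j$ sends the generators $I_2(\Gm)$ into $I_2(\Hsec)$, then upgrade this to an inclusion of ideals, and finally deduce \eqref{eq:inclusion2} by taking zero loci.

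For a quadratic form $f\in I_2(\Gm)$ and a column $\gv_\ell$ of $\Gm$, the construction of $\Phi$ as the homomorphism splitting $X_i=\sum_l \alpha^l x_{i,l}$ implies that $\Phi(f)$ evaluated at $\Psi_\alpha(\gv_\ell)$ simply reassembles the $\Fq$-coordinates back into $\Fqm$, giving $f(\gv_\ell)=0$. Writing $\Phi(f)=\sum_j \alpha^j \Phi_j(f)$, the values $\Phi_j(f)(\Psi_\alpha(\gv_\ell))$ all lie in $\Fq$, so $\Fq$-linear independence of $(1,\alpha,\ldots,\alpha^{m-1})$ forces each of them to vanish. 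Since $\Phi$ is a morphism of graded algebras, each $\Phi_j(f)$ is itself a homogeneous form of degree $2$ in the variables $x_{i,l}$, and therefore belongs to $I_2(\Hsec)$.

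To extend from $I_2(\Gm)$ to $\langle I_2(\Gm)\rangle$, I would use that $\Phi$ is a ring homomorphism. For any $f=\sum_k g_k h_k$ with $h_k\in I_2(\Gm)$ and $g_k\in\Rm$, we have $\Phi(f)=\sum_k \Phi(g_k)\Phi(h_k)$; by the previous step each $\Phi(h_k)$ already belongs to the extended ideal $\langle I_2(\Hsec)\rangle\cdot(\Sm\otimes_{\Fq}\Fqm)$, hence so does $\Phi(f)$. Viewing $\Sm\otimes_{\Fq}\Fqm$ as a free $\Sm$-module with basis $(1,\alpha,\ldots,\alpha^{m-1})$, this extended ideal is precisely the set of elements whose $\Fq$-coordinates all lie in $\langle I_2(\Hsec)\rangle$. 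Extracting the $\Fq$-coordinates of $\Phi(f)$ thus gives $\Phi_j(f)\in\langle I_2(\Hsec)\rangle$ for every $j$, which by Definition \ref{def:weil} is exactly the inclusion \eqref{eq:inclusion}.

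Inclusion \eqref{eq:inclusion2} then follows formally by taking $V_\Fq$: applying it to \eqref{eq:inclusion} reverses the inclusion to $V_2(\Hsec)\subseteq V_\Fq(\Resq{\langle I_2(\Gm)\rangle})$, and Proposition \ref{prop:rational} identifies the right-hand side with $\Psi_\alpha(V_2(\Gm))$. The only slightly delicate point in the whole argument is the bookkeeping of the extended ideal in the third paragraph, namely checking that the $\Fq$-coordinate characterization of $\langle I_2(\Hsec)\rangle\cdot(\Sm\otimes_{\Fq}\Fqm)$ is correct; everything else is formal manipulation with $\Phi$ and the $\Fq$-basis of $\Fqm$.
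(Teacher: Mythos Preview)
Your proof is correct and follows essentially the same route as the paper's: show that each $\Phi_j$ maps $I_2(\Gm)$ into $I_2(\Hsec)$, deduce the ideal inclusion, then pass to varieties via Proposition~\ref{prop:rational}. The paper is terser---it cites Proposition~\ref{prop:rational} directly for the vanishing step and simply asserts that the vector-space inclusion $\mathrm{Span}_{\Fq}\{\Phi_j(f)\}\subseteq I_2(\Hsec)$ is a ``refinement'' of~\eqref{eq:inclusion}---whereas you spell out both the evaluation argument and the extended-ideal bookkeeping; but the underlying argument is the same.
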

\begin{proof}
By Proposition \ref{prop:rational}, for any polynomial $f\in I_2(\Gm)$, the $\Phi_j(f)$'s all vanish at the columns of $\Hsec$. As a consequence,
\[\mathrm{Span}_{\Fq}\{\Phi_j(f)~|~f\in I_2(\Gm),~0\leq j<m\}\subseteq I_2(\Hsec),\]
which is a refinement of (\ref{eq:inclusion}). Taking the varieties reverses the inclusion, which leads to (\ref{eq:inclusion2}) by Proposition \ref{prop:rational}.
\end{proof}
The above proposition only states inclusions, while equalities will be needed in the following.
\begin{definition}[Weil-properness]
    A linear code $\DC=\Tr{\CC}$ is said to be \textbf{Weil-proper} if and only if Inclusion (\ref{eq:inclusion}) is an equality.
\end{definition}
Determining whether a linear code is Weil-proper may sometimes be done by just measuring the dimension of its vanishing ideal at degree $2$. For alternant codes, we have the following result.

\begin{proposition}\label{prop:weil_proper_alt}
Under Heuristic \ref{heur:fgopt}, when $r\leq q$, a generic $q$-ary alternant code of degree $r$ achieving equality in (\ref{eq:ea}) is Weil-proper.
\end{proposition}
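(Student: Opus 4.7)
The plan is to prove Weil-properness by reducing the equality of ideals in~(\ref{eq:inclusion}) to an equality of $\Fq$-vector spaces at degree~$2$, and then closing by a dimension count.

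I would first verify that both ideals in~(\ref{eq:inclusion}) are generated by their degree-$2$ components. For $\langle I_2(\Hsec)\rangle$ this holds by definition; for $\Resq{\langle I_2(\Gm)\rangle}$, the fact that $\Phi$ is an $\Fqm$-algebra homomorphism gives, via a short direct computation on products $f h$ with $f\in I_2(\Gm)$ and $h\in\Rm$, that for any $\Fqm$-basis $(f_1,\ldots,f_d)$ of $I_2(\Gm)$ the Weil-restricted ideal is generated in $\Sm$ by the $md$ quadratic polynomials $\Phi_\ell(f_i)$. Taking the canonical basis of $I_2(\Gm)$ of Proposition~\ref{prop:minors}, namely the $d=\binom{r-1}{2}$ minors $M_{i,j}$ of the Hankel matrix~(\ref{eq:minors}), the equality~(\ref{eq:inclusion}) becomes equivalent to
\[ W \eqdef \mathrm{Span}_{\Fq}\bigl\{\Phi_\ell(M_{i,j}) : 0\leq \ell<m,\ 0\leq i<j\leq r-2\bigr\} = I_2(\Hsec). \]

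The dimension count would then be immediate: $W$ is spanned by $m\binom{r-1}{2}$ polynomials so $\dim_{\Fq}W \leq m\binom{r-1}{2}$; the assumption $r\leq q$ forces $\ea=0$, and the equality hypothesis in~(\ref{eq:ea}) then reads $\dim_{\Fq} I_2(\Hsec)=m\binom{r-1}{2}$; and $W\subseteq I_2(\Hsec)$ by Proposition~\ref{prop:inclusion}. Everything therefore reduces to showing that $\dim_{\Fq} W = m\binom{r-1}{2}$, i.e.\ that the $m\binom{r-1}{2}$ polynomials $\Phi_\ell(M_{i,j})$ are $\Fq$-linearly independent in $\Sm_2$.

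This $\Fq$-linear independence is the main obstacle. My preferred route would be to extract it from the constructive proof of the lower bound in Theorem~\ref{thm:ea}: the explicit $\Fq$-linearly independent family of quadratic relations produced in \cite{MT21} to saturate that bound is, after the identification of Proposition~\ref{prop:minors}, the family $\{\Phi_\ell(M_{i,j})\}$ itself. Should a self-contained argument be preferred, I would work directly with the explicit expression
\[ \Phi_p(M_{i,j}) = \sum_{s,t}\mu_{s+t,p}\bigl(x_{i,s}x_{j+1,t}-x_{i+1,s}x_{j,t}\bigr), \]
where the scalars $\mu_{u,p}\in\Fq$ come from $\alpha^u=\sum_p\mu_{u,p}\alpha^p$, and peel off minors one by one: for instance, the monomials $x_{0,s}x_{r-1,t}$ appear only in $\Phi_p(M_{0,r-2})$, which isolates the $(0,r-2)$ contribution; the non-degeneracy of the multiplication table of $\Fqm/\Fq$ (encoded by the invertibility of the submatrix $(\mu_{u,p})_{0\leq u,p<m}=\Im_m$) then forces the corresponding $\Fq$-coefficients to vanish, and iterating on the remaining ``extremal'' pairs yields the full independence.
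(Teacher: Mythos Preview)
Your reduction to a degree-$2$ dimension count is exactly right, and it matches the paper's strategy: the paper packages this step as Corollary~\ref{cor:weil_proper_condition}, which says that Weil-properness is equivalent to $\dim_{\Fq}I_2(\DC)=m\dim_{\Fqm}I_2(\CC)$, and then finishes by the same arithmetic you do with $\ea=0$.

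The difference lies in how the linear independence of the $\Phi_\ell(f_i)$'s is obtained. The paper does \emph{not} argue directly on the Hankel minors; instead it proves a general lemma (Lemma~\ref{lemma:genset}) stating that a minimal generating set of any homogeneous ideal $I$ Weil-restricts to a minimal generating set of $\Resq{I}$, and deduces the dimension statement from that. The proof of Lemma~\ref{lemma:genset} is relegated to the appendix and uses homological machinery: graded free resolutions, Betti numbers, and a result of \cite{CCG23} giving the minimal resolution of $(\Sm/J)\otimes_{\Fq}\Fqm$ as the $m$-fold tensor power of the resolution of $\Rm/I$. Your Route~2, by contrast, is entirely elementary and specific to the determinantal ideal at hand: the peeling argument via the extremal monomials $x_{0,s}x_{r-1,t}$ works (the submatrix $(\mu_{u,p})_{0\leq u,p<m}$ is indeed $\Im_m$), and iterating on the boundary pairs $(0,j)$ and $(i,r-2)$ reduces to the same problem with $r$ replaced by $r-2$. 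So your approach is more hands-on and self-contained for this particular code family, while the paper's approach is heavier but yields a reusable criterion (Corollary~\ref{cor:weil_proper_condition}) valid for any trace code, which the paper later invokes for AG codes. Your Route~1 (reading the independence off \cite{MT21}) is plausible but would need checking against the exact family exhibited there; Route~2 is the safer bet.
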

For the proof of Proposition \ref{prop:weil_proper_alt} we will need the following lemma. The proof of the lemma is quite technical and not really relevant here, which is why we decided to move it to the appendix.

\begin{restatable}{lemma}{genset}
\label{lemma:genset}
Let $I\subset\Rm$ be a homogeneous ideal. If $(f_1,\ldots,f_N)$ is a minimal set of generators for $I$, then the sequence $(\Phi_j(f_i))_{i,j}$ is a minimal set of generators of $\Res{\Fqm/\Fq}{I}$.
\end{restatable}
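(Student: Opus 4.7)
The lemma splits into a generation statement and a minimality statement. The plan for generation is a direct expansion: given any $f \in I$ with $f = \sum_i g_i f_i$, apply $\Phi$ to get $\Phi(f) = \sum_i \Phi(g_i)\Phi(f_i)$ in $\Sm \otimes_\Fq \Fqm$, expand each factor in the $\Fq$-basis $(1, \alpha, \ldots, \alpha^{m-1})$, and re-express every $\alpha^{\ell+k}$ that appears in the same basis using the structure constants of $\Fqm/\Fq$. Extracting the $\alpha^j$-component exhibits $\Phi_j(f)$ as an $\Sm$-combination of the $\Phi_k(f_i)$'s, so the latter already generate $\Resq{I}$.

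For minimality, my plan is to base-change to $\Fqm$ and exploit a Galois-theoretic tensor decomposition of $A \eqdef \Sm \otimes_\Fq \Fqm = \Fqm[x_{i,j}]$. Introduce the linear forms $y_i^{(k)} \eqdef \sum_j \alpha^{jq^k} x_{i,j}$ for $k = 0, \ldots, m-1$; the distinctness of the Galois conjugates $\alpha, \alpha^q, \ldots, \alpha^{q^{m-1}}$ yields, via Vandermonde, that these $rm$ forms are $\Fqm$-linearly independent and hence constitute a new polynomial basis of $A$. Setting $A_k = \Fqm[y_i^{(k)} : i]$, one obtains a tensor decomposition $A = \bigotimes_{k=0}^{m-1} A_k$ with each $A_k \simeq \Rm$ via the $\Fqm$-algebra isomorphism $\Phi^{(k)}: X_i \mapsto y_i^{(k)}$ (so $\Phi^{(0)} = \Phi$). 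A direct computation from the definition shows the key identity $\sigma^k(\Phi(f)) = \Phi^{(k)}(\sigma^k(f))$, where $\sigma$ is the Frobenius generator of $\Gal{\Fqm/\Fq}$ acting on coefficients. Combining this with $\sigma^k(\Phi(f)) = \sum_j \alpha^{jq^k}\Phi_j(f)$ and inverting the Vandermonde lets me interchange the two descriptions and conclude
\[
J \cdot A = \sum_{k=0}^{m-1} \Phi^{(k)}\!\left(\sigma^k(I)\right) \cdot A,
\]
where $J \eqdef \Resq{I}$.

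Since each $\sigma^k$ is a ring automorphism of $\Rm$, $(\sigma^k(f_i))_i$ is a minimal generating set of $\sigma^k(I)$, so the $k$-th summand above has minimal generators $(\Phi^{(k)}(\sigma^k(f_i)))_i$, supported in the disjoint variable block $(y_i^{(k)})_i$. A separation-of-variables argument --- setting $y^{(l)} = 0$ for all $l \neq k$ in any candidate homogeneous syzygy, and using that $I$ (hence each $\sigma^k(I)$) is homogeneous and thus contains no constants --- shows that for ideals supported in disjoint variable sets the minimal number of generators of their sum is the sum of the individual minimal numbers, giving $mN$ minimal generators of $J \cdot A$ over $A$. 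Finally, since $\mathfrak{m}_\Sm \cdot A = \mathfrak{m}_A$, faithful flatness of $A/\Sm$ gives $\dim_\Fq(J/\mathfrak{m}_\Sm J) = \dim_\Fqm(JA/\mathfrak{m}_A JA) = mN$, so the minimal number of generators of $J$ as an $\Sm$-ideal is exactly $mN$, and the $mN$ elements $(\Phi_j(f_i))_{i,j}$ produced in the first part must therefore form a minimal generating set. The main technical obstacle I anticipate is the separation-of-variables step; the Galois and Vandermonde manipulations are the conceptual core but become formal once the tensor decomposition is set up.
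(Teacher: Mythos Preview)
Your argument is correct and follows a genuinely different route from the paper. Both approaches rest on the same underlying fact --- the Galois tensor decomposition $(\Sm/J)\otimes_{\Fq}\Fqm \simeq \bigotimes_{k=0}^{m-1}\Rm/\sigma^k(I)$ (Weil's theorem) --- but they exploit it differently. The paper invokes an external result \cite[Theorem~3.3]{CCG23} stating that the minimal graded free resolution of $(\Sm/J)\otimes_{\Fq}\Fqm$ is the $m$-fold tensor power of that of $\Rm/I$, then reads off the first graded Betti numbers $\gamma_{1,j}=m\beta_{1,j}$ from the resulting combinatorial identity. You instead construct the tensor splitting explicitly via the Vandermonde change of variables $y_i^{(k)}=\sum_j\alpha^{jq^k}x_{i,j}$, identify $J\cdot A$ with $\sum_k\Phi^{(k)}(\sigma^k(I))\cdot A$, and then bypass free resolutions entirely: your specialization argument (setting $y^{(l)}=0$ for $l\neq k$ and using that a homogeneous $I$ contains no constants) directly shows that any syzygy among the $mN$ candidate generators with a unit coefficient would restrict to one among the minimal generators of a single $\sigma^k(I)$, a contradiction. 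Faithful flatness then descends the count to $\Sm$. Your proof is more elementary and self-contained for the statement at hand; the paper's approach, by computing all $\gamma_{i,j}$, yields strictly more information (the full Betti table of $J$) at the cost of importing heavier machinery.
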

In case of Weil-properness, it is indeed clear that if $(f_1,\ldots,f_N)$ is a basis of $I_2(\Gm)$, then $(\Phi_j(f_i))_{i,j}$ generates $I_2(\Psi_\alpha(\Gm))$. What is much less clear however is that there are no linear dependencies between the $\Phi_j(f_i)$'s over $\Fq$.
\begin{corollary}\label{cor:weil_proper_condition}
    Let $\DC=\Tr{\CC}$. Then $\DC$ is Weil-proper if, and only if
    $$\dim_{\Fq}I_2(\DC)=m\dim_{\Fqm}I_2(\CC).$$
\end{corollary}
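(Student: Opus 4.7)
The approach is to extract the corollary directly from Lemma \ref{lemma:genset}, by exploiting the fact that both ideals being compared are generated in a single degree (namely degree $2$).

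First I would pick an $\Fqm$-basis $(f_1,\ldots,f_N)$ of $I_2(\Gm)$, where $N=\dim_{\Fqm}I_2(\CC)$. Because the ideal $\langle I_2(\Gm)\rangle$ is generated in degree $2$ only, this basis is automatically a minimal generating set of the ideal: any $\Fqm$-linear dependence among the $f_i$ would let us discard a redundant generator. By Lemma \ref{lemma:genset}, the family $(\Phi_j(f_i))_{1\le i\le N,\,0\le j<m}$ is then a minimal generating set of $\Resq{\langle I_2(\Gm)\rangle}$. Moreover, since $\Phi$ sends each variable $X_i$ to a homogeneous linear form in the $x_{i,j}$, it preserves the standard grading, and so do all the $\Phi_j$; hence each $\Phi_j(f_i)$ is quadratic.

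The next step is to notice that minimality of a generating set for an ideal generated in a single degree is equivalent to $\Fq$-linear independence in that degree, by the same argument as above applied over $\Fq$. Consequently, the $mN$ polynomials $\Phi_j(f_i)$ form an $\Fq$-basis of the degree-$2$ component of $\Resq{\langle I_2(\Gm)\rangle}$, which therefore has $\Fq$-dimension exactly $mN=m\dim_{\Fqm}I_2(\CC)$.

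To conclude, Proposition \ref{prop:inclusion} provides an inclusion $\Resq{\langle I_2(\Gm)\rangle}\subseteq\langle I_2(\Hsec)\rangle$, which restricts to an inclusion of their degree-$2$ components. Both ideals are generated by their degree-$2$ components, so Weil-properness of $\DC$, \ie equality of the two ideals, is equivalent to equality of their degree-$2$ components. The left-hand degree-$2$ component has $\Fq$-dimension $m\dim_{\Fqm}I_2(\CC)$, while the right-hand one is exactly $I_2(\DC)$; given the already-established inclusion of finite-dimensional $\Fq$-vector spaces, this equality holds if and only if $\dim_{\Fq}I_2(\DC)=m\dim_{\Fqm}I_2(\CC)$. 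The only non-trivial ingredient in the argument is the invocation of Lemma \ref{lemma:genset}, which is precisely why its (technical) proof is relegated to the appendix; once that lemma is granted, the corollary is essentially a matter of bookkeeping.
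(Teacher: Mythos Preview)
Your proof is correct and follows essentially the same route as the paper's: both invoke Proposition~\ref{prop:inclusion} for the inclusion and Lemma~\ref{lemma:genset} to conclude that the degree-$2$ part of the Weil restriction has $\Fq$-dimension exactly $m\dim_{\Fqm}I_2(\CC)$, after which the equivalence is immediate. Your write-up simply spells out in more detail the step the paper compresses into one sentence, namely why minimality of the generating set $(\Phi_j(f_i))_{i,j}$ forces $\Fq$-linear independence in degree~$2$.
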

\begin{proof}
By Proposition \ref{prop:inclusion} and Lemma \ref{lemma:genset}, the Weil restriction of $I_2(\CC)$ is an $m\dim_{\Fqm}I_2(\CC)$-dimensional subspace of $I_2(\DC)$. Equality therefore holds if and only if $I_2(\DC)$ has dimension $m\dim_{\Fqm}I_2(\CC)$.
\end{proof}
Gathering all these results enables us to prove Proposition \ref{prop:weil_proper_alt} under Heuristic \ref{heur:fgopt}.
\begin{proof}[Proof of Proposition \ref{prop:weil_proper_alt}]
When $r\leq q$, the vanishing ideal at degree $2$ of a generic alternant code in the square-distinguishable regime equals $m$ times the dimension of the vanishing ideal at degree $2$ of the underlying GRS code. By Corollary \ref{cor:weil_proper_condition} we get that such an alternant code
is Weil-proper.
\end{proof}
Lemma \ref{lemma:genset} also allows us to derive a necessary condition for a code to be Weil proper in general in terms of regime of parameters.

\begin{corollary} \label{cor:regime}
Let $\DC=\Tr{\CC}$ be a proper linear code, \textit{i.e.} $\dim_{\Fq}\DC=rm$ where $r=\dim_{\Fqm}\CC$. If $\DC$ is Weil proper, then
$$n\geq \binom{rm+1}{2}-m\dim_{\Fqm} I_2(\CC).$$
\end{corollary}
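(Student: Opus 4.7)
The plan is to chain together three already-established facts: the dimension formula relating $I_2$ to the square of the code (Equation (\ref{eq:dim_I2})), the characterization of Weil-properness in terms of dimensions (Corollary \ref{cor:weil_proper_condition}), and the trivial upper bound $\dim_{\Fq}\DC^{\star 2}\leq n$ coming from the fact that $\DC^{\star 2}$ sits inside $\Fq^n$.

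First I would apply Equation (\ref{eq:dim_I2}) to the code $\DC$, which has dimension $rm$ over $\Fq$ by the properness hypothesis. This yields
\[ \dim_{\Fq} I_2(\DC) = \binom{rm+1}{2} - \dim_{\Fq}\DC^{\star 2}. \]
Since $\DC^{\star 2}$ is a subspace of $\Fq^n$, its dimension is at most $n$, so
\[ \dim_{\Fq} I_2(\DC) \geq \binom{rm+1}{2} - n. \]

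Next I would invoke the Weil-properness assumption through Corollary \ref{cor:weil_proper_condition}, which says that precisely when $\DC$ is Weil-proper one has $\dim_{\Fq} I_2(\DC) = m\dim_{\Fqm} I_2(\CC)$. Substituting this into the previous inequality gives
\[ m\dim_{\Fqm} I_2(\CC) \geq \binom{rm+1}{2} - n, \]
and rearranging yields the desired lower bound on $n$. There is essentially no obstacle here: both ingredients have already been proved (the first is rank-nullity on $\mathrm{ev}_{\Gm}$ restricted to quadratic forms, the second is a direct corollary of Proposition \ref{prop:inclusion} and Lemma \ref{lemma:genset}), and the argument consists of plugging them together. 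The only thing worth stressing in the write-up is that the properness of $\DC$ is what ensures the dimension of the ambient polynomial ring of quadrics is $\binom{rm+1}{2}$, so that Equation (\ref{eq:dim_I2}) is being applied with the correct value of $r$.
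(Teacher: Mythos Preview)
Your proof is correct and follows essentially the same approach as the paper: both combine Corollary~\ref{cor:weil_proper_condition}, Equation~(\ref{eq:dim_I2}), and the bound $\dim_{\Fq}\DC^{\star 2}\leq n$, differing only in the order the three facts are invoked. The paper compresses everything into a single chain of equalities ending in $\leq n$, while you state the inequality first and substitute afterward; these are the same argument.
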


\begin{proof}
By Corollary \ref{cor:weil_proper_condition}, we have
$$\binom{rm+1}{2}-m\dim_{\Fqm}I_2(\CC)=\binom{rm+1}{2}-\dim_{\Fq}I_2(\DC)=\dim\DC^{\star 2}\leq n.$$
\end{proof}

In the regime where an alternant code is Weil-proper, the Weil restriction structure of the quadratic hull may be visible even if we only have access to $\Hpub=\Pm\cdot\Hsec$, thanks to Proposition \ref{prop:intrinsic}. Our goal is to determine which properties of $I_2(\Hsec)$, related to its Weil restriction structure, are preserved by linear transformation and therefore
still detectable in $I_2(\Hpub)$. This naturally leads to the problem of distinguishing affine varieties that are the Weil restriction of a smaller variety over a larger field.

\subsection{Distinguishing Weil restrictions} Our strategy for finding a criterion that distinguishes Weil restrictions from other varieties consists in first looking at the case of vector subspaces, and then generalizing to algebraic varieties using tangent spaces.

Let $V\subset\Fqm^r$ be some vector subspace. Intuitively, $\Psi_\alpha(V)$ not only lists the points of $V$ in terms of $\Fq$-coordinates, but also somehow reflects the $\Fqm$-linearity of $V$. More formally, the fact that if $\vv\in V$, then $\alpha\cdot \vv\in V$ must be visible in $W$. Indeed, as $\Jm_r$ is the Weil restriction of the scalar multiplication by $\alpha$, we see that
\begin{equation} \label{eq:linearity}
    \alpha\cdot\vv\in V\iff \Jm_r\cdot\Psi_\alpha(\vv)\in W.
\end{equation}
The following proposition states that this can be used to identify Weil restriction of vector spaces.

\begin{lemma} \label{lemma:equivalence}
    Let $W\subset \Fq^{rm}$ be a vector subspace. The following are equivalent:
    \begin{itemize}
    \item[$(i)$] $W$ is $\Jm _r$-invariant, \textit{i.e.} $\forall \word{x}\in W,~\Jm_r\word{x}\in W$;
    \item[$(ii)$] There exists some subspace $V\subset\Fqm^r$ such that $W=\Psi_\alpha(V)$.
    \end{itemize}
    \end{lemma}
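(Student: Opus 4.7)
The plan is to pivot the whole argument on the identity $\Jm_r\cdot\Psi_\alpha(\vv)=\Psi_\alpha(\alpha\cdot\vv)$ for every $\vv\in\Fqm^r$, which is precisely the content of Equation \eqref{eq:linearity} (and follows from the definition of $\Jm_r$ as the Weil restriction of the homothety by $\alpha$ applied coordinate-wise).

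The direction $(ii)\Rightarrow(i)$ is then essentially a one-liner. Assuming $W=\Psi_\alpha(V)$ for some $\Fqm$-subspace $V\subset\Fqm^r$, pick an arbitrary $\wv\in W$, write $\wv=\Psi_\alpha(\vv)$ with $\vv\in V$, and compute $\Jm_r\wv=\Psi_\alpha(\alpha\vv)$. Since $V$ is $\Fqm$-linear one has $\alpha\vv\in V$, hence $\Jm_r\wv\in\Psi_\alpha(V)=W$.

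The direction $(i)\Rightarrow(ii)$ is the one that requires a small amount of care, but it is still quite short. Assuming $W$ is $\Jm_r$-invariant, I would simply set $V\eqdef\Psi_\alpha^{-1}(W)\subset\Fqm^r$. Since $\Psi_\alpha:\Fqm^r\rightarrow\Fq^{rm}$ is an $\Fq$-linear bijection, the preimage $V$ is automatically an $\Fq$-subspace of $\Fqm^r$, and by construction $\Psi_\alpha(V)=W$. It only remains to upgrade $\Fq$-linearity of $V$ into $\Fqm$-linearity. For any $\vv\in V$, the key identity gives $\Psi_\alpha(\alpha\vv)=\Jm_r\Psi_\alpha(\vv)\in W$ by hypothesis, hence $\alpha\vv\in V$; by induction, $\alpha^j\vv\in V$ for every $j\geq 0$. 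Combining this with the $\Fq$-linearity of $V$ and the fact that $\Fqm=\Fq[\alpha]=\Fqspan{1,\alpha,\ldots,\alpha^{m-1}}$, we conclude that $\lambda\vv\in V$ for every $\lambda\in\Fqm$, so $V$ is indeed an $\Fqm$-subspace.

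There is no real obstacle here: once the identity $\Jm_r\Psi_\alpha(\vv)=\Psi_\alpha(\alpha\vv)$ is written down, both implications reduce to the observation that $\Fqm$-linearity is the same as $\Fq$-linearity together with stability under multiplication by $\alpha$, since $\alpha$ generates $\Fqm$ over $\Fq$. The only thing worth flagging explicitly in the write-up is that $\Psi_\alpha$ is an $\Fq$-linear isomorphism (not an $\Fqm$-linear one), so that preimages of $\Fq$-subspaces are $\Fq$-subspaces, which is why the extra argument via $\alpha$-stability is needed to recover the full $\Fqm$-structure.
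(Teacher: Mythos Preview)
Your proof is correct and follows essentially the same approach as the paper: both set $V=\Psi_\alpha^{-1}(W)$ and reduce the equivalence to the observation that $\Fqm$-linearity of $V$ is exactly $\Fq$-linearity together with stability under multiplication by $\alpha$, which via the identity $\Jm_r\Psi_\alpha(\vv)=\Psi_\alpha(\alpha\vv)$ corresponds to $\Jm_r$-invariance of $W$. The only cosmetic difference is that the paper packages both directions into a single chain of equivalences, whereas you treat $(i)\Rightarrow(ii)$ and $(ii)\Rightarrow(i)$ separately and spell out the induction on powers of $\alpha$ explicitly.
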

    \begin{proof}
        Let $W\subset\Fq^{rm}$ be a vector subspace, and set $V=\Psi_\alpha^{-1}(W)$, which is an $\Fq$-vector subspace of $\Fqm^r$ a priori. It suffices to prove that $W$ is $\Jm_r$-invariant if and only if $V$ is an
        $\Fqm$-vector subspace of $\Fqm^r$. Since $V$ is already an $\Fq$-vector space, it is an $\Fqm$-vector space if, and only if
        $$\forall \vv\in V,~\alpha\vv\in V,$$
        which by Equivalence (\ref{eq:linearity}) is equivalent to
        $$\forall \vv\in V,~\Jm_r\Psi_\alpha(\vv)\in W.$$
        Finally, since $\Psi_\alpha$ is a bijection between $V$ and $W$, we see that $V$ is an $\Fqm$-vector subspace if and only if
        $$\forall \wv\in W,~\Jm_r\wv\in W,$$
        which proves the proposition.
    \end{proof}
\begin{remark}
    Lemma \ref{lemma:equivalence} is actually a well-known result from the theory of matrix rank-metric codes. More precisely, given a $k$-dimensional $\Fqm$-linear code 
    $\CC\subset\Fqm^n$ endowed with the rank-metric, one can build a code $\Cmat$ from a basis $\{\vv_1,\ldots,\vv_k\}$ of $\CC$ by defining 
    $$\Cmat=\mathrm{Span}_{\Fq}\{\Mm_{1,0},\ldots,\Mm_{1,m-1},\ldots,\Mm_{k,0},\ldots,\Mm_{k,m-1}\},$$
    where $\Mm_{i,j}=(\Psi_\alpha(\alpha^jv_{i,1})|\ldots|\Psi_\alpha(\alpha^jv_{i,n}))\in\Fq^{m\times n}$, with $\vv_i=(v_{i,1},\ldots,v_{i,n})$. As $\CC$ is an $\Fqm$-linear code, 
    we have 
    $$\forall \Mm\in\Cmat,~\Jm\Mm\in\Cmat.$$
    More generally, an $\Fq$-vector subspace $\mathscr{M}\subset\Fq^{m\times n}$ is $\Fqm$-linear, \textit{i.e.} built using the above process, if and only if $\Jm\Mm\in\mathscr{M}$ for all $\Mm\in\mathscr{M}$.
\end{remark}
\begin{definition}[Stabilizer]
For any vector space $W\subset\Fq^{rm}$, we denote by $\St{W}$ the set of matrices $\Am\in\Fq^{rm\times rm}$ such that $\Am\cdot W\subseteq W$. Note that $\St{W}$ is a subalgebra of the $\Fq$-algebra $\Fq^{rm\times rm}$.
\end{definition}
\begin{remark}
Lemma \ref{lemma:equivalence} amounts to say that a subspace $W$ is a Weil restriction if and only if we have $\Fq[\Jm_r]\subset\St{W}$.
\end{remark}
Distinguishing Weil restrictions is therefore a solved problem when it comes to vector subspaces. In order to generalize our approach for algebraic varieties, we need some linear data associated to varieties. This is exactly the role played by tangent spaces.
\begin{definition}
Let $P\in V\eqdef V_{\Fqm}(I)$ where $I\subset\Rm$ is an ideal. The tangent space of $V$ at $P$ is defined by
$$T_P V=\left\{\word{h}\in\Fqm^r~|~\forall f\in I,~\ds\sum_{i=0}^{r-1} h_i\partial_i f(P)=0\right\}.$$
As $I$ is finitely generated, $T_P V$ may be computed as the kernel of the Jacobian matrix of a list of generators.
\end{definition}
A vector subspace of $\Fq^{rm}$ is a Weil restriction if and only if it is globally invariant under the action of $\Jm_r$. When a variety $W$ is the Weil restriction of some other variety $V$, then one can expect that its tangent spaces are also the Weil restriction of some vector subspace of $\Fqm^r$. This would enable us
to use our criterion to determine whether a variety is a Weil restriction. It turns out that it is true, thanks to the commutativity between Weil restrictions and tangent spaces.
\begin{proposition}\label{prop:tangent}
    Let $V=V_{\Fqm}(I)\subset\Fqm^r$ be an algebraic variety of defining ideal $I\subset\Rm$. Let $W=\Psi_\alpha(V)$, $P\in V$ and $Q=\Psi_\alpha(P)$. Then
    $$T_Q W=\Psi_\alpha(T_P V).$$
\end{proposition}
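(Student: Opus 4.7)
The plan is to establish both inclusions simultaneously via a first-order perturbation argument. The key observation is that the homomorphism $\Phi$ of (\ref{eq:hom}) was designed precisely so that evaluation behaves as $\Phi(f)(Q') = f\bigl(\Psi_\alpha^{-1}(Q')\bigr)$ for every $Q' \in \Fq^{rm}$. In particular, for a formal variable $\varepsilon$ and any direction $\word{k} \in \Fq^{rm}$, setting $\word{h} = \Psi_\alpha^{-1}(\word{k}) \in \Fqm^r$ yields the polynomial identity
\[ \Phi(f)(Q + \varepsilon \word{k}) \;=\; f(P + \varepsilon \word{h}) \]
in $\Fqm[\varepsilon]$.

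I would then Taylor-expand both sides to first order in $\varepsilon$. The right-hand side contributes $f(P) + \varepsilon \sum_{i=0}^{r-1} h_i\, \partial_i f(P) \pmod{\varepsilon^2}$, while using the decomposition $\Phi(f) = \sum_{\ell=0}^{m-1} \alpha^\ell\, \Phi_\ell(f)$ into $\Fq$-components, the left-hand side becomes
\[ \sum_{\ell=0}^{m-1} \alpha^\ell\, \Phi_\ell(f)(Q) \;+\; \varepsilon \sum_{\ell=0}^{m-1} \alpha^\ell \sum_{i,j} k_{i,j}\, \partial_{i,j} \Phi_\ell(f)(Q) \pmod{\varepsilon^2}. \]
Identifying $\Fq$-coordinates on the $\varepsilon$-coefficient of both sides with respect to the basis $(1,\alpha,\ldots,\alpha^{m-1})$ of $\Fqm$ yields the key formula
\[ \Psi_\alpha\!\Bigl(\sum_{i=0}^{r-1} h_i\, \partial_i f(P)\Bigr) \;=\; \Bigl(\sum_{i,j} k_{i,j}\, \partial_{i,j}\Phi_\ell(f)(Q)\Bigr)_{0 \leq \ell < m}. \]

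From this the conclusion is immediate. Since $\Resq{I}$ is generated by the $\Phi_\ell(f)$ for $f \in I$ and $0 \leq \ell < m$, and since the defining condition for a tangent space is linear in the generators, the condition $\word{k} \in T_Q W$ is equivalent to the right-hand side above vanishing for every $f \in I$ and every $\ell$. As $\Psi_\alpha$ is an $\Fq$-linear bijection, this in turn is equivalent to $\sum_{i} h_i\, \partial_i f(P) = 0$ for every $f \in I$, which is precisely the condition $\word{h} \in T_P V$. Since the correspondence $\word{k} \leftrightarrow \word{h}$ is a bijection between $\Fq^{rm}$ and $\Fqm^r$, this proves the claimed equality $T_Q W = \Psi_\alpha(T_P V)$.

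I do not anticipate a substantial obstacle; the only point that deserves a sanity check is that the formal first-order expansion is valid in positive characteristic, which reduces to the standard identity $g(Q + \varepsilon\word{k}) \equiv g(Q) + \varepsilon \sum_u k_u\, \partial_u g(Q) \pmod{\varepsilon^2}$ for any polynomial $g$---a purely formal statement about derivatives that holds regardless of characteristic and passes correctly through the substitution $X_i \mapsto \sum_j \alpha^j x_{i,j}$ defining $\Phi$.
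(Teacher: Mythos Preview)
Your proof is correct and follows essentially the same approach as the paper. The paper isolates the chain-rule identity $\partial_{ij}\Phi(f)=\alpha^j\,\Phi(\partial_i f)$ explicitly and then runs the same chain of equivalences you do; your first-order Taylor expansion of $\Phi(f)(Q+\varepsilon\word{k})=f(P+\varepsilon\word{h})$ is simply a repackaging of that identity, and the remaining identification-of-$\Fq$-coordinates step is identical in both arguments.
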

\begin{proof}
First, notice that by the rules of derivation, we have
\begin{equation} \label{eq:derivation}
\partial_{ij}\Phi(f)=\alpha^j\Phi(\partial_i f),
\end{equation}
for all $f\in\Rm$ and indices $0\leq i<r,0\leq j<m$. Now, let $\word{h}=(h_{i,j})\in\Fq^{rm}$. We have
\begin{align*}
\word{h}\in\Psi_\alpha(T_P V)&\iff \forall f\in I,~\displaystyle\sum_{i=0}^{r-1}\left(\sum_{j=0}^{m-1}h_{i,j}\alpha^j\right)\partial_i f(P)=0\\
&\iff \forall f\in I,~\sum_{i=0}^{r-1}\sum_{j=0}^{m-1} h_{i,j}\alpha^j\Phi(\partial_i f)(Q)=0 \text{ Since }\partial_i f(P)=\Phi(\partial_i f)(Q) \\
&\iff \forall f\in I,~\sum_{i=0}^{r-1}\sum_{j=0}^{m-1} h_{i,j}\partial_{ij}\Phi(f)(Q)=0 \text{ By Equation (\ref{eq:derivation})} \\
&\iff \forall f\in I,~\forall 0\leq \ell<m,~\sum_{i=0}^{r-1} h_{i,j}\partial_{ij}\Phi_\ell(f)(Q)=0\\
&\iff \word{h}\in T_Q W.
\end{align*}
\end{proof}
This gives us a necessary condition for an algebraic variety to be a Weil restriction.
\begin{corollary}
Let $W\subset\Fq^{rm}$ be an algebraic variety. If there exists an algebraic variety $V\subset\Fqm^r$ such that $W=\Psi_\alpha(V)$, then for all $Q\in W$, the tangent space
$T_Q W\subset\Fq^{rm}$ is $\Jm_r$-invariant.
\end{corollary}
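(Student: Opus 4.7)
The plan is to compose two results already established in this section: Proposition \ref{prop:tangent}, which identifies the tangent spaces of a Weil-restricted variety, and Lemma \ref{lemma:equivalence}, which characterizes Weil-restricted subspaces as precisely the $\Jm_r$-invariant ones. The corollary should then be almost immediate, modulo checking that every point of $W$ can be realized as $\Psi_\alpha(P)$ for some $P \in V$.

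More precisely, I would proceed as follows. Fix $Q \in W$. Since $\Psi_\alpha:\Fqm^r \to \Fq^{rm}$ is an $\Fq$-linear bijection and $W = \Psi_\alpha(V)$, there is a unique $P \in V$ with $Q = \Psi_\alpha(P)$. Proposition \ref{prop:tangent} then yields $T_Q W = \Psi_\alpha(T_P V)$. Now $T_P V$ is an $\Fqm$-linear subspace of $\Fqm^r$ (being the kernel of a Jacobian matrix with entries in $\Fqm$), so its image under $\Psi_\alpha$ satisfies condition $(ii)$ of Lemma \ref{lemma:equivalence}, and therefore condition $(i)$ of the same lemma: $T_Q W$ is $\Jm_r$-invariant.

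There is really no obstacle here; the statement is a formal consequence of the two preceding results. The only tiny subtlety worth mentioning explicitly in the write-up is that Lemma \ref{lemma:equivalence} is stated for arbitrary $\Fq$-vector subspaces of $\Fq^{rm}$, so one must observe that the tangent space $T_Q W$, viewed as the $\Fq$-linear span of $\Psi_\alpha(T_P V)$, is indeed such a subspace; but this is automatic because $\Psi_\alpha$ sends $\Fqm$-subspaces to $\Fq$-subspaces. Hence the plan collapses into a single short paragraph in the final write-up, essentially chaining the equalities $T_Q W = \Psi_\alpha(T_P V)$ and invoking the characterization from Lemma \ref{lemma:equivalence}.
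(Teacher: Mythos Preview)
Your proposal is correct and matches the paper's approach: the paper states this corollary without proof, as it follows immediately from Proposition~\ref{prop:tangent} and Lemma~\ref{lemma:equivalence} in exactly the way you describe. Your explicit write-up, chaining $T_Q W = \Psi_\alpha(T_P V)$ with the characterization of $\Jm_r$-invariant subspaces, is precisely the intended argument.
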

Back to linear codes, we obtain the following corollary.
\begin{corollary} \label{corollary:invariance_hsec}
    Let $\Hsec=\Psi_\alpha(\Vm_r(\xv,\yv))$ be the secret parity-check matrix of a Weil-proper alternant code $\Alt{r}{\word{x}}{\word{y}}$. For all $P\in \Vsec\eqdef V_2(\Hsec)$, the tangent space
    $T_P\Vsec$ is $\Jm_r$-invariant.
\end{corollary}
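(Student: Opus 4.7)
The plan is to chain together three results from the preceding subsection: the equality of ideals granted by Weil-properness, the commutation of Weil restriction with tangent spaces (Proposition \ref{prop:tangent}), and the characterization of Weil restrictions of vector spaces via $\Jm_r$-invariance (Lemma \ref{lemma:equivalence}).

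More precisely, I would first invoke Weil-properness. Setting $\Gm = \Vm_r(\xv,\yv)$, the hypothesis says that $\Resq{\langle I_2(\Gm)\rangle} = \langle I_2(\Hsec)\rangle$, so by Proposition \ref{prop:rational} the associated varieties match: $\Vsec = V_2(\Hsec) = \Psi_\alpha(V_2(\Gm))$. Thus $\Vsec$ is genuinely the Weil restriction of the geometric quadratic hull of the underlying GRS code, and this is exactly the setting in which Proposition \ref{prop:tangent} applies.

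Next, fix $P \in \Vsec$ and set $P' = \Psi_\alpha^{-1}(P) \in V_2(\Gm)$. Proposition \ref{prop:tangent} then yields
\[
T_P \Vsec \;=\; \Psi_\alpha\!\left(T_{P'} V_2(\Gm)\right).
\]
The right-hand side is the image under $\Psi_\alpha$ of an $\Fqm$-vector subspace of $\Fqm^r$, since $T_{P'} V_2(\Gm)$ is by definition the kernel of a matrix over $\Fqm$ and hence inherits an $\Fqm$-linear structure. Applying Lemma \ref{lemma:equivalence} to this subspace, its image $T_P \Vsec$ is $\Jm_r$-invariant, which is what we wanted.

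There is essentially no obstacle here: the whole content of the corollary has been front-loaded into Proposition \ref{prop:tangent} and Lemma \ref{lemma:equivalence}. The only thing one must be careful about is that Weil-properness is needed to pass from the inclusion of Proposition \ref{prop:inclusion} to the equality $\Vsec = \Psi_\alpha(V_2(\Gm))$; without this, one would only have $\Vsec \subseteq \Psi_\alpha(V_2(\Gm))$ and the tangent space identity of Proposition \ref{prop:tangent} would fail.
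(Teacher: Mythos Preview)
Your proof is correct and follows the same line of reasoning the paper intends: Weil-properness upgrades the inclusion of Proposition~\ref{prop:inclusion} to an equality of ideals, hence of varieties via Proposition~\ref{prop:rational}; then Proposition~\ref{prop:tangent} and Lemma~\ref{lemma:equivalence} finish. The paper does not spell out a proof of this corollary, but it is placed immediately after the unnamed corollary stating that tangent spaces of a Weil restriction are $\Jm_r$-invariant, and your argument is exactly the intended unpacking of that chain.
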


\subsection{Weil-preserving transformations} Corollary \ref{corollary:invariance_hsec} establishes a distinguishing property for the quadratic hull of a Weil-proper alternant code $\Alt{r}{\xv}{\yv}$ with respect to a secret parity-check matrix $\Hsec$, \textit{i.e.}
of the form $\Psi_\alpha(\Gm)$ where $\Gm$ is a generator matrix of the underlying GRS code. Again, denote by $\Hpub$ the corresponding public key, which is related to the private key by the relation $\Hpub=\Pm\Hsec$ for some secret nonsingular
$rm\times rm$ matrix $\Pm$, which is nothing but a change of basis. We get the following proposition given how a change of basis acts on linear maps.
\begin{proposition} \label{prop:invariance}
    If $\Alt{r}{\xv}{\yv}$ is Weil-proper, then for all $Q\in \Vpub\eqdef V_2(\Hpub)$, the tangent space $T_Q \Vpub$ is globally invariant under the action of $\Pm\Jm_r\Pm^{-1}$.
\end{proposition}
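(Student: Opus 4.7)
The plan is to reduce the claim to Corollary \ref{corollary:invariance_hsec} via the change of basis relating $\Hpub$ and $\Hsec$. Since both matrices are generator matrices of the same dual code $\Tr{\GRS{r}{\xv}{\yv}}$, Proposition \ref{prop:intrinsic}(ii) gives a bijection $P \mapsto \Pm \cdot P$ between $\Vsec$ and $\Vpub$. Hence every $Q \in \Vpub$ can be written uniquely as $Q = \Pm P$ with $P \in \Vsec$, and it suffices to prove the covariance identity
\[T_Q \Vpub = \Pm \cdot T_P \Vsec,\]
after which the desired invariance will follow by a one-line conjugation argument.

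To establish this identity, I would invoke Proposition \ref{prop:intrinsic}(i): the ideal $I_2(\Hsec)$ consists exactly of the polynomials $g(X) = f(\Pm X)$ for $f \in I_2(\Hpub)$. Differentiating this relation coordinate-wise by the chain rule yields $\partial_i g(X) = \sum_j \partial_j f(\Pm X)\cdot \Pm_{j,i}$, so that the system of linear equations $\sum_i h_i \partial_i g(P) = 0$ (for $g$ ranging over $I_2(\Hsec)$) is equivalent to $\sum_j (\Pm h)_j \, \partial_j f(\Pm P) = 0$ (for $f$ ranging over $I_2(\Hpub)$). In other words, $h \in T_P \Vsec$ if and only if $\Pm h \in T_{\Pm P} \Vpub$, and the bijectivity of $\Pm$ then promotes this equivalence to the claimed equality of vector spaces.

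Given this identity, the conclusion is purely formal. Corollary \ref{corollary:invariance_hsec} provides $\Jm_r \cdot T_P \Vsec \subseteq T_P \Vsec$; conjugating both sides by $\Pm$ yields
\[\Pm \Jm_r \Pm^{-1} \cdot T_Q \Vpub = \Pm \Jm_r \cdot T_P \Vsec \subseteq \Pm \cdot T_P \Vsec = T_Q \Vpub,\]
which is exactly the $\Pm \Jm_r \Pm^{-1}$-invariance of $T_Q \Vpub$. The only mildly delicate step is the covariance of tangent spaces under linear substitution; this is a one-line chain-rule computation, and I do not expect any real obstacle beyond carefully tracking conventions (row vs.\ column vectors and the placement of transposes).
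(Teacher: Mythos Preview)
Your proposal is correct and follows essentially the same route as the paper: use Proposition~\ref{prop:intrinsic} to get $\Vpub=\Pm\cdot\Vsec$, deduce the covariance $T_Q\Vpub=\Pm\cdot T_{\Pm^{-1}Q}\Vsec$, and then conjugate the $\Jm_r$-invariance from Corollary~\ref{corollary:invariance_hsec}. The only difference is cosmetic: the paper asserts the tangent-space covariance in one line, whereas you spell out the chain-rule computation justifying it.
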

\begin{proof}
    We know from Proposition \ref{prop:intrinsic} that $\Vpub=\Pm\cdot\Vsec$, which implies that $T_Q\Vpub=\Pm\cdot T_{\Pm^{-1}Q}\Vsec$. Since $T_{\Pm^{-1}Q}\Vsec$ is $\Jm_r$-invariant, $T_Q\Vpub$ is $\Pm\Jm_r\Pm^{-1}$-invariant.
\end{proof}
All tangent spaces of the public variety $\Vpub$ share the property of being invariant under the very same linear operator. This will be the first cornerstone of our attack, which we will detail in the following section. The second core idea behind the attack consists in
finding another transition matrix than $\Pm$ that directly leaks a generator matrix of the underlying GRS code --- or, as we will see, one of its conjugates. Such a transition matrix would also map the variety $\Vsec\eqdef V_2(\Hsec)$ onto another algebraic variety, itself being linked to the quadratic hull of a GRS code through
Weil restriction. We then see that the set of such transition matrices are exactly those that map Weil restrictions onto other Weil restrictions. The following theorem is an exhaustive description of such linear transformations.

Weil restrictions of invertible $r\times r$ matrices over $\Fqm$ are natural candidates for such transition matrices. If $\Am=\Resq{\Bm}$, then $\Am$ indeed maps a Weil restriction $W=\Psi_\alpha(V)$ onto that of $\Bm\cdot V$. Another type of transformations mapping Weil restrictions onto Weil restrictions is
given by the Frobenius automorphism. Let $\Thetam$ be the matrix of the Frobenius automorphism $\theta:x\mapsto x^q$ in the monomial basis $(1,\alpha,\ldots,\alpha^{m-1})$, and define $\Thetam_r=\mathrm{Diag}(\Thetam,\ldots,\Thetam)\in\Fq^{rm\times rm}$. Then $\Thetam_r$ maps $W=\Psi_\alpha(V)$ onto the Weil restriction of $V^q=\{(v_1^q,\ldots,v_r^q)~|~\vv\in V\}$, which is also an algebraic variety.
The following theorem essentially states that these two examples generate all possible matrices mapping Weil restrictions onto Weil restrictions.
\begin{theorem} \label{thm:group} Let $\Group$ be the set of invertible $rm\times rm$ matrices over $\Fq$ that map Weil restrictions of subvarieties of $\Fqm^r$ onto Weil restrictions of subvarieties of $\Fqm^r$. Then
$$\Group=\{\Resq{\Bm}\cdot\Thetam_r^j~|~\Bm\in\GL_r(\Fqm),~0\leq j<m\}.$$
\end{theorem}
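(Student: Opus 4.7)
The plan is to prove both inclusions separately. The inclusion $\supseteq$ follows from a direct computation: the identities $\Resq{\Bm}\cdot\Psi_\alpha(\vv)=\Psi_\alpha(\Bm\vv)$ and $\Thetam_r\cdot\Psi_\alpha(\vv)=\Psi_\alpha(\vv^q)$, the latter because $\Thetam$ represents the Frobenius in the monomial basis, hold coordinate-wise. For any subvariety $V\subset\Fqm^r$, the composition $\Resq{\Bm}\Thetam_r^j$ therefore maps $\Psi_\alpha(V)$ onto $\Psi_\alpha(\Bm\cdot V^{q^j})$, which is again the Weil restriction of an algebraic subvariety of $\Fqm^r$.

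For the inclusion $\subseteq$, fix $\Am\in\Group$. Since linear subspaces are particular varieties, $\Am$ must send any Weil restriction of a vector subspace to another Weil restriction of a vector subspace. Lemma \ref{lemma:equivalence} translates this into the condition that $\Am$ maps every $\Jm_r$-invariant subspace of $\Fq^{rm}$ to another $\Jm_r$-invariant subspace, which is equivalent to saying that $L\eqdef\Am^{-1}\Jm_r\Am$ stabilizes every $\Jm_r$-invariant subspace.

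The hard step, which I expect to be the main obstacle, exploits the $\Fqm$-module structure on $\Fq^{rm}$ obtained by letting $\Jm_r$ act as multiplication by $\alpha$. Since $\Jm_r$ has irreducible minimal polynomial $\Pi_\alpha$ of degree $m$, $\Fq^{rm}$ has dimension $r$ over $\Fqm$ and the $\Jm_r$-invariant subspaces correspond bijectively to $\Fqm$-subspaces. The stability of every $\Jm_r$-invariant subspace under $L$ then forces $L\vv\in\Fqm\cdot\vv$ for every $\vv\in\Fqm^r$. A standard rigidity argument, picking two $\Fqm$-linearly independent vectors $\vv,\wv$ (possible because $r\geq 2$) and expanding $L(\vv+\wv)$ using $\Fq$-linearity, then shows that $L$ must act as multiplication by a single scalar $\lambda\in\Fqm$. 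Since $L$ is conjugate to $\Jm_r$, they share the same minimal polynomial $\Pi_\alpha$, forcing $\lambda$ to be a root of $\Pi_\alpha$; thus $\lambda=\alpha^{q^j}$ for some $0\leq j<m$.

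To conclude, observe that $\Thetam_r^j\Jm_r\Thetam_r^{-j}$ also represents multiplication by $\alpha^{q^j}$, which follows from the relation $\theta\circ\mu_\alpha=\mu_{\alpha^q}\circ\theta$ iterated $j$ times. Hence $L=\Thetam_r^j\Jm_r\Thetam_r^{-j}$, so $\Cm\eqdef\Am\Thetam_r^j$ commutes with $\Jm_r$. Proposition \ref{prop:cond_r} then yields $\Cm=\Resq{\Bm}$ for some $\Bm\in\Fqm^{r\times r}$, necessarily in $\GL_r(\Fqm)$ since $\Cm$ is invertible. Reindexing with $k\equiv -j\pmod m$ gives $\Am=\Resq{\Bm}\Thetam_r^k$ in the claimed form.
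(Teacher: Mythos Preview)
Your proof is correct and follows essentially the same strategy as the paper's: reduce to linear Weil restrictions via Lemma~\ref{lemma:equivalence}, show $L=\Am^{-1}\Jm_r\Am$ stabilizes all of them and hence lies in $\Fq[\Jm_r]$ (your inline rigidity argument is precisely Lemma~\ref{lemma:stabilizers}), then identify $L$ as a Galois conjugate of $\Jm_r$ and conclude via $\Thetam_r$ and Proposition~\ref{prop:cond_r}. The only notable variation is in how you pin down $\lambda=\alpha^{q^j}$: the paper observes that conjugation by $\Am^{-1}$ restricts to an $\Fq$-algebra automorphism of $\Fq[\Jm_r]\simeq\Fqm$ and hence lies in $\Gal{\Fqm/\Fq}$, whereas you argue more directly that $L$ and $\Jm_r$, being conjugate, share the minimal polynomial $\Pi_\alpha$, forcing $\lambda$ to be one of its roots---both arguments are short and equivalent in strength.
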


The goal of the attack will be to find a matrix $\Qm$ such that $\Qm\Pm\in\Group$, so that the columns of $\Qm\Hpub=\Qm\Pm\Hsec$ belong to the Weil restriction of some rational normal curve.
Before proving Theorem \ref{thm:group}, we need to give two auxiliary results. The first gives a better description of the action of the Frobenius automorphism in terms of $\Fq$-coordinates. The matrix $\Thetam$
represents the Frobenius map in the following way:
$$\forall x\in\Fqm,~\Psi_\alpha(x^q)=\Thetam\cdot\Psi_\alpha(x).$$
Meanwhile, the element $x\in\Fqm$ is also identified with the matrix $\Mata(x)$. Since $\Mata$ is a field isomorphism, we have
$$\forall x\in\Fqm,~\Mata(x^q)=\Mata(x)^q.$$
There is a way of expressing the above property using the matrix $\Thetam$ as stated in our first lemma.

\begin{lemma}\label{lemma:conjugation}
    For all $x\in\Fqm,~\Mata(x)^q=\Thetam\cdot\Mata(x)\cdot\Thetam^{-1}$.
\end{lemma}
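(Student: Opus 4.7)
The plan is to derive the identity by translating the multiplicativity of the Frobenius automorphism into a matrix equation in the monomial basis $(1,\alpha,\ldots,\alpha^{m-1})$. The starting observation is that for any $x,y\in\Fqm$, one has $(xy)^q=x^qy^q$, i.e., as endomorphisms of $\Fqm$,
\[
\theta\circ\mu_x=\mu_{x^q}\circ\theta,
\]
where $\mu_x$ denotes multiplication by $x$. Since $\Mata(x)$ is by definition the matrix of $\mu_x$ and $\Thetam$ is the matrix of $\theta$, both in the monomial basis, composition of linear maps translates to matrix multiplication, giving
\[
\Thetam\cdot\Mata(x)=\Mata(x^q)\cdot\Thetam.
\]

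From there, right-multiplying by $\Thetam^{-1}$ yields $\Mata(x^q)=\Thetam\cdot\Mata(x)\cdot\Thetam^{-1}$. To finish, I would invoke the property stated just above the lemma: because $\Mat_\alpha:\Fqm\to\Fq[\Jm]$ is an isomorphism of $\Fq$-algebras, it is in particular multiplicative, so iterating $x^q=x\cdot x\cdots x$ gives
\[
\Mata(x^q)=\Mata(x)\cdot\Mata(x)\cdots\Mata(x)=\Mata(x)^q,
\]
where the $q$-th power on the right-hand side is the usual matrix product power. Combining this with the previous identity produces exactly $\Mata(x)^q=\Thetam\cdot\Mata(x)\cdot\Thetam^{-1}$.

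I do not foresee a genuine obstacle in this argument; the proof is essentially an unpacking of definitions together with the observation that both $\theta$ and $\Mat_\alpha$ are ring homomorphisms. The only subtlety worth making explicit is that the symbol $\Mata(x)^q$ must be read as the matrix-multiplicative $q$-th power of a matrix in $\Fq[\Jm]$, not as an entrywise Frobenius, and this interpretation is justified precisely by the fact that $\Mat_\alpha$ transports the field multiplication of $\Fqm$ to the matrix multiplication in $\Fq[\Jm]$.
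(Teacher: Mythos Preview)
Your proof is correct and follows essentially the same approach as the paper: both exploit the multiplicativity of the Frobenius $(xy)^q=x^qy^q$ to obtain $\Thetam\cdot\Mata(x)=\Mata(x^q)\cdot\Thetam$, and both use the fact that $\Mat_\alpha$ is a ring homomorphism to identify $\Mata(x^q)$ with $\Mata(x)^q$. The only cosmetic difference is that the paper carries out the computation by applying both sides to an arbitrary coordinate vector $\Psi_\alpha(y)$, whereas you phrase it directly as an equality of compositions of linear maps; the content is identical.
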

\begin{proof}
    Let $x\in\Fqm$. For all $y\in\Fqm$, we have
    \begin{align*}
    \Thetam\cdot\Mata(x)\cdot\Psi_\alpha(y) &=\Thetam\cdot\Psi_\alpha(xy)\\
    &=\Psi_\alpha((xy)^q)\\
    &=\Psi_\alpha(x^q y^q)\\
    &=\Mata(x^q)\cdot\Psi_\alpha(y^q)\\
    &=\Mata(x)^q\cdot \Thetam\cdot\Psi_\alpha(y),
    \end{align*}
    and this holds for any choice of $y\in\Fqm$. As a result, $\Thetam\cdot\Mata(x)=\Mata(x)^q\cdot\Thetam$.
\end{proof}
\begin{remark}
    Lemma \ref{lemma:conjugation} can be summed up in one sentence, essentially saying that conjugation by the matrix $\Thetam$ and Galois conjugation by the Frobenius automorphism
    boil down to the very same operation.
\end{remark}
Lemma \ref{lemma:equivalence} states that vector subspaces that are the Weil restriction of another vector space are stabilized by $\Jm_r$, and therefore by any polynomial in $\Jm_r$. Such a matrix
is nothing but the Weil restriction of an $\Fqm$-homothety. The second ingredient in the proof of Theorem \ref{thm:group} is the converse: a matrix that stabilizes all vector subspaces that are
a Weil restriction is a polynomial in $\Jm_r$.
\begin{lemma}\label{lemma:stabilizers}
    Let $\Am\in\Fq^{rm\times rm}$. If $\Am$ stabilizes all vector subspaces of $\Fq^{rm}$ that are the Weil restriction of a subspace of $\Fqm^r$, then $\Am\in\Fq[\Jm_r]$.
\end{lemma}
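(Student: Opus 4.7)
The plan is to use Lemma \ref{lemma:equivalence} to reformulate the hypothesis in terms of $\Fqm$-vector subspaces, then show that $\Am$ is forced to act as multiplication by an $\Fqm$-scalar, which is precisely the content of belonging to $\Fq[\Jm_r]$.

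First I would translate the hypothesis. By Lemma \ref{lemma:equivalence}, the subspaces $W\subset\Fq^{rm}$ of the form $\Psi_\alpha(V)$ are exactly the $\Jm_r$-invariant subspaces. Identifying $\Fq^{rm}$ with $\Fqm^r$ via $\Psi_\alpha^{-1}$ turns $\Jm_r$ into scalar multiplication by $\alpha$, so $\Jm_r$-invariant $\Fq$-subspaces correspond to $\Fqm$-subspaces of $\Fqm^r$. Thus the hypothesis becomes: the $\Fq$-linear endomorphism $\widetilde{\Am}\eqdef\Psi_\alpha^{-1}\circ\Am\circ\Psi_\alpha$ of $\Fqm^r$ stabilizes every $\Fqm$-subspace.

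Next I would exploit stability on $\Fqm$-lines. For every $\vv\in\Fqm^r\setminus\{0\}$, the $\Fqm$-line $\Fqm\cdot\vv$ is stabilized by $\widetilde{\Am}$, so there exists $\lambda_\vv\in\Fqm$ with $\widetilde{\Am}(\vv)=\lambda_\vv\vv$. The key step is to show $\lambda_\vv$ is independent of $\vv$. Pick two $\Fqm$-linearly independent vectors $\vv,\wv\in\Fqm^r$ (possible since $r\geq 2$); from
\[\lambda_{\vv+\wv}(\vv+\wv)=\widetilde{\Am}(\vv+\wv)=\widetilde{\Am}(\vv)+\widetilde{\Am}(\wv)=\lambda_\vv\vv+\lambda_\wv\wv,\]
$\Fqm$-linear independence forces $\lambda_\vv=\lambda_\wv=\lambda_{\vv+\wv}$. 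For any other $\uv\neq 0$, either $\uv$ is $\Fqm$-independent of $\vv$ (and we conclude $\lambda_\uv=\lambda_\vv$ directly), or it is $\Fqm$-independent of $\wv$ (same conclusion). So there is a single $\lambda\in\Fqm$ with $\widetilde{\Am}(\vv)=\lambda\vv$ for all $\vv$.

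Finally, translating back through $\Psi_\alpha$, this says $\Am=\Resq{\lambda\Im_r}$, which equals the block-diagonal matrix $\mathrm{Diag}(\Mata(\lambda),\ldots,\Mata(\lambda))$. By Proposition \ref{prop:cond}, $\Mata(\lambda)$ is a polynomial $p(\Jm)$ with $p\in\Fq[X]$, and therefore
\[\Am=\mathrm{Diag}(p(\Jm),\ldots,p(\Jm))=p(\Jm_r)\in\Fq[\Jm_r],\]
which is the desired conclusion. The only subtle point is the independence-of-$\vv$ argument, which is where the hypothesis $r\geq 2$ (implicit throughout the paper) is crucial; without it, every $\Fq$-endomorphism of a single line would trivially satisfy the assumption and the conclusion would fail.
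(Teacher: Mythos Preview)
Your proof is correct and follows essentially the same approach as the paper: transport $\Am$ to an $\Fq$-linear endomorphism of $\Fqm^r$ via $\Psi_\alpha$, use stability of $\Fqm$-lines to get $\widetilde{\Am}(\vv)=\lambda_\vv\vv$, and then use $r\geq 2$ and additivity to force a single $\lambda$. Your handling of the ``independence of $\vv$'' step (via the dichotomy that any nonzero $\uv$ is $\Fqm$-independent of at least one of $\vv,\wv$) is a slight variant of the paper's argument (which instead checks $\lambda_{\beta\vv}=\lambda_\vv$ directly), but both reach the same conclusion in the same way.
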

\begin{proof}
    Let us introduce the $\Fq$-linear endomorphism $a$ of $\Fqm^r$ represented by the matrix $\Am$, \textit{i.e.} $a$ is defined by
    $$\forall \vv\in\Fqm^r,~a(\vv)=\Psi_\alpha^{-1}(\Am\cdot\Psi_\alpha(\vv)).$$
    Since $\Am$ stabilizes vector subspaces that are Weil restrictions, it stabilizes in particular Weil restrictions of $\Fqm$-lines. Equivalently, the $\Fq$-linear endomorphism $a$ maps any vector $\vv\in\Fqm^r$ onto some $\lambda_{\vv} \vv$, with $\lambda_{\vv}\in\Fqm$.
    If now we take two vectors $\uv,\vv\in\Fqm^r$, then by $\Fq$-linearity we have
    $$a(\uv+\vv)=\lambda_{\uv+\vv}(\uv+\vv)=a(\uv)+a(\vv)=\lambda_{\uv}\uv+\lambda_{\vv}\vv.$$
    As $r\geq 2,$ we can take $\uv$ and $\vv$ linearly independent over $\Fqm$, which then implies
    $$\lambda_{\uv+\vv}=\lambda_{\uv}=\lambda_{\vv}.$$
    Finally, by noticing that the same reasoning holds if we replace $\vv$ with $\beta\vv$ for some nonzero $\beta\in\Fqm$, we conclude that $\lambda_{\beta\vv}=\lambda_{\vv}$. All in all, there exists a \textit{unique} $\lambda\in\Fqm$ such that
    $$\forall \vv\in\Fqm^r,~a(\vv)=\lambda\vv,$$
    which indeed means that $a$ is an $\Fqm$-linear homothety. In terms of matrices, this implies that $\Am=f(\Jm_r)$, the polynomial $f\in\Fq[X]$ being the one that determines $\lambda$.
\end{proof}
\begin{remark}
    The assumption $r\geq 2$ that we made at the very beginning of this section is mandatory here, not only for the proof to work but also for the result to hold. If indeed $r=1$, then the only $\Fqm$-subspaces of $\Fqm^r$ are $\{0\}$ and 
    $\Fqm$, whose Weil restriction are respectively $\{0\}$ and $\Fq^m$. All matrices $\Am\in\Fq^{m\times m}$ stabilize these spaces, and therefore the implication stated by Lemma \ref{lemma:stabilizers} is untrue in this case.
\end{remark}
We are now ready to prove Theorem \ref{thm:group}.
\begin{proof}[Proof of Theorem \ref{thm:group}]
Clearly any matrix $\Am$ of the form $\Resq{\Bm}\cdot\Thetam^j$ is the Weil restriction of some $\Fq$-linear automorphism $a=b\circ\theta^j$. Therefore, if $W=\Psi_\alpha(V)$, then $\Am$ maps $W$ onto the Weil restriction of $b(V^{q^j})$.

Now let $\Am\in\Group$. We aim to show that there exists some integer $j$ such that $\Am\Thetam^{j}$ is the Weil restriction of an $\Fqm$-linear automorphism of $\Fqm^r$. Firstly, notice that since $\Am$ is a bijection as a linear map,
any linear Weil restriction can be seen as the preimage through $\Am$ of another linear Weil restriction, and as such is stabilized by both $\Jm_r$ and $\Am^{-1}\Jm_r\Am$. By Lemma \ref{lemma:stabilizers}, we conclude that $\Am^{-1}\Jm_r\Am$ is a polynomial in $\Jm_r$. Furthermore, conjugation by
the matrix $\Am^{-1}$ is an automorphism of the $\Fq$-algebra $\Fq^{rm\times rm}$, which means that it induces an $\Fq$-linear automorphism of $\Fq[\Jm_r]\simeq\Fqm$. In other words, conjugation by $\Am^{-1}$ defines an element of the Galois group $\Gal{\Fqm/\Fq}=\langle\theta\rangle$. This means that $\Am^{-1}\Jm_r\Am$ is
actually an $\Fq$-conjugate of $\Jm_r$, \textit{i.e.} of the form $\Jm_r^{q^j}$ for some integer $j$. Equivalently by Lemma \ref{lemma:conjugation}, there is an integer $j$ such that
\[\Am^{-1}\Jm_r\Am=\Thetam^{j}\Jm_r\Thetam^{-j},\]
which implies $\Am\Thetam^{j}$ commutes with $\Jm_r$. By Proposition \ref{prop:cond_r}, $\Am\Thetam^j=\Resq{\Bm}$ for some $\Bm\in\GL_r(\Fqm)$, which proves the theorem.
\end{proof}
\begin{corollary}
    Let $\Am\in\GL_{rm}(\Fq)$. Then $\Am\in\Group$ if, and only if there exists some integer $j$ such that $\Am\Jm_r\Am^{-1}=\Jm_r^{q^j}$.
\end{corollary}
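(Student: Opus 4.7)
The plan is to derive this corollary directly from Theorem \ref{thm:group} by means of two elementary conjugation computations. The key observation I would use throughout is that Lemma \ref{lemma:conjugation}, applied to the element $\alpha$ (so that $\Mata(\alpha)=\Jm$), yields $\Thetam\Jm\Thetam^{-1}=\Jm^{q}$, and since $\Thetam_r$ and $\Jm_r$ are the block-diagonal matrices built from $\Thetam$ and $\Jm$, this lifts immediately to
\[
\Thetam_r\Jm_r\Thetam_r^{-1}=\Jm_r^{q},\qquad\text{hence}\qquad \Thetam_r^{j}\Jm_r\Thetam_r^{-j}=\Jm_r^{q^{j}}\quad\text{for all }j.
\]

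For the direct implication I would assume $\Am\in\Group$ and invoke Theorem \ref{thm:group} to write $\Am=\Resq{\Bm}\cdot\Thetam_r^{j}$ for some $\Bm\in\GL_r(\Fqm)$ and some integer $j$. Since $\Resq{\Bm}$ commutes with $\Jm_r$ by Proposition \ref{prop:cond_r}, it also commutes with the polynomial $\Jm_r^{q^{j}}$, and therefore
\[
\Am\Jm_r\Am^{-1}=\Resq{\Bm}\bigl(\Thetam_r^{j}\Jm_r\Thetam_r^{-j}\bigr)\Resq{\Bm}^{-1}=\Resq{\Bm}\,\Jm_r^{q^{j}}\,\Resq{\Bm}^{-1}=\Jm_r^{q^{j}},
\]
which is the desired relation.

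For the converse implication, I would assume that $\Am\Jm_r\Am^{-1}=\Jm_r^{q^{j}}$ for some $j$. Using the identity $\Jm_r^{q^{j}}=\Thetam_r^{j}\Jm_r\Thetam_r^{-j}$ obtained above, this rewrites as
\[
(\Thetam_r^{-j}\Am)\,\Jm_r=\Jm_r\,(\Thetam_r^{-j}\Am),
\]
so the matrix $\Thetam_r^{-j}\Am\in\GL_{rm}(\Fq)$ commutes with $\Jm_r$. Proposition \ref{prop:cond_r} then provides $\Bm\in\GL_r(\Fqm)$ such that $\Thetam_r^{-j}\Am=\Resq{\Bm}$, i.e.\ $\Am=\Thetam_r^{j}\Resq{\Bm}$. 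Rearranging this product via Lemma \ref{lemma:conjugation} (which shows that conjugating $\Resq{\Bm}$ by $\Thetam_r^{j}$ amounts to applying the Frobenius $\theta^{j}$ to the entries of $\Bm$), I can rewrite $\Am=\Resq{\Bm'}\cdot\Thetam_r^{j}$ for a suitable $\Bm'\in\GL_r(\Fqm)$, which places $\Am$ inside $\Group$ by Theorem \ref{thm:group}.

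The proof is essentially a bookkeeping exercise and I do not expect any substantial obstacle, as all the heavy lifting has been done by Proposition \ref{prop:cond_r}, Lemma \ref{lemma:conjugation} and Theorem \ref{thm:group}. The only subtlety to check carefully is that $\Thetam_r$ genuinely conjugates $\Jm_r$ to $\Jm_r^{q}$ when we pass from the scalar identity of Lemma \ref{lemma:conjugation} to its block-diagonal version, but this is immediate from the block structure of both matrices.
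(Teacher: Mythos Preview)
Your proof is correct and follows essentially the same approach as the paper's: both directions rely on Theorem \ref{thm:group}, Proposition \ref{prop:cond_r}, and the block-diagonal version of Lemma \ref{lemma:conjugation} giving $\Thetam_r^{j}\Jm_r\Thetam_r^{-j}=\Jm_r^{q^{j}}$. The only cosmetic difference is that the paper computes $\Am^{-1}\Jm_r\Am$ first and then raises to the $q^{j}$-th power to flip the conjugation, whereas you compute $\Am\Jm_r\Am^{-1}$ directly---your route is, if anything, slightly cleaner.
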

\begin{proof}
Write $\Am=\Am_0\Thetam^j$ for some integer $j$. Since $\Am_0$ commutes with $\Jm_r$ by Proposition \ref{prop:cond_r}, we see that
$$\Am^{-1}\Jm_r\Am=\Thetam^{-j}\Am_0^{-1}\Jm_r\Am_0\Thetam^j=\Thetam^{-j}\Jm_r\Thetam^j=\Jm_r^{q^{m-j}},$$
the last equality coming from Lemma \ref{lemma:conjugation}. Raising the above equality to the power $q^j$ gives
$$\Am\Jm_r\Am^{-1}=\Jm_r^{q^j},$$
as required. The converse can be obtained in the same way by reading the above backwards.
\end{proof}
\begin{proposition}
    $\Group$ is a subgroup of $\GL_{rm}(\Fq)$, and its group structure is given by
    $$\Group\simeq\GL_r(\Fqm)\rtimes\mathbb{Z}/m\mathbb{Z}.$$
\end{proposition}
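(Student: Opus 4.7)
The plan is to leverage the explicit description of $\Group$ provided by Theorem \ref{thm:group}, extract a single commutation identity between $\Thetam_r$ and Weil restrictions of matrices, and then read off the group and semidirect-product structure from it.

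First, I would establish the block-level commutation rule
\[
\Thetam_r \cdot \Resq{\Bm} \cdot \Thetam_r^{-1} = \Resq{\Bm^{(q)}},
\]
for every $\Bm \in \GL_r(\Fqm)$, where $\Bm^{(q)}$ denotes the entrywise action of the Frobenius on $\Bm$. Since $\Thetam_r$ is block-diagonal with blocks $\Thetam$, and $\Resq{\Bm}$ has blocks $\Mata(b_{i,j})$, this reduces on each block to Lemma \ref{lemma:conjugation}. Combined with the functorial identity $\Resq{\Bm_1 \Bm_2} = \Resq{\Bm_1}\Resq{\Bm_2}$ and the fact that $\Thetam_r^m = \Im$ (the Frobenius has order $m$), the commutation rule yields
\[
\bigl(\Resq{\Bm_1} \Thetam_r^{j_1}\bigr)\bigl(\Resq{\Bm_2} \Thetam_r^{j_2}\bigr) = \Resq{\Bm_1 \cdot \Bm_2^{(q^{j_1})}}\, \Thetam_r^{j_1+j_2 \bmod m}.
\]
This proves closure under products, and closure under inversion follows from $\Resq{\Bm}^{-1} = \Resq{\Bm^{-1}}$ and $\Thetam_r^{-1} = \Thetam_r^{m-1}$. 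Hence $\Group$ is a subgroup of $\GL_{rm}(\Fq)$.

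Next, I would identify two distinguished subgroups: $N = \{\Resq{\Bm} : \Bm \in \GL_r(\Fqm)\}$ and $H = \langle \Thetam_r \rangle$. The Weil restriction map $\Bm \mapsto \Resq{\Bm}$ is an injective group homomorphism $\GL_r(\Fqm) \hookrightarrow \GL_{rm}(\Fq)$ with image $N$, the injectivity being immediate from the defining relation $\Psi_\alpha(\Bm \cdot \vv) = \Resq{\Bm} \cdot \Psi_\alpha(\vv)$, which uniquely determines $\Bm$. The subgroup $H$ is cyclic of order exactly $m$, since $\Thetam$ represents the order-$m$ Galois automorphism $\theta$ on $\Fqm \simeq \Fq[\Jm]$. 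From the commutation rule, $H$ normalizes $N$, so $NH$ is a subgroup; by Theorem \ref{thm:group} we have $NH = \Group$. For the trivial intersection, suppose $\Thetam_r^j = \Resq{\Bm}$ for some $\Bm$ and $0 \le j < m$. By Proposition \ref{prop:cond_r}, $\Resq{\Bm}$ commutes with $\Jm_r$; but by Lemma \ref{lemma:conjugation} applied blockwise, $\Thetam_r^j \Jm_r \Thetam_r^{-j} = \Jm_r^{q^j}$. Commutativity then forces $\Jm_r^{q^j} = \Jm_r$, hence $\alpha^{q^j} = \alpha$, hence $j = 0$.

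Combining these three facts ($N$ normal, $H \cap N = \{\Im\}$, $NH = \Group$) with the internal semidirect product criterion yields $\Group = N \rtimes H \simeq \GL_r(\Fqm) \rtimes \mathbb{Z}/m\mathbb{Z}$, the action being exactly the Galois action $\Bm \mapsto \Bm^{(q)}$ read off from the commutation rule. There is no substantive obstacle; the only delicate point is to derive the commutation identity carefully, since it is simultaneously responsible for closure, normality of $N$, and the description of the action of $H$ on $N$. Once this identity is in hand, the rest is a bookkeeping exercise.
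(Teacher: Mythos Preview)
Your proposal is correct and follows essentially the same route as the paper: both derive the multiplication law $(\Resq{\Bm_1}\Thetam_r^{j_1})(\Resq{\Bm_2}\Thetam_r^{j_2})=\Resq{\Bm_1\Bm_2^{(q^{j_1})}}\Thetam_r^{j_1+j_2}$ from the commutation identity of Lemma~\ref{lemma:conjugation} and read off the semidirect product structure from it. The paper phrases this at the level of $\Fq$-linear automorphisms $b\circ\theta^j$ of $\Fqm^r$ and simply asserts the semidirect product once the multiplication formula is shown, whereas you work directly with the matrix representatives and are more explicit in checking the internal semidirect product criteria (normality of $N$, triviality of $N\cap H$); this extra care is welcome but the underlying argument is the same.
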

\begin{proof}
    We identify $\Group$ with the set of $\Fq$-linear automorphisms of $\Fqm^r$ of the form $b\circ\theta^j$ for $b\in\GL(\Fqm^r)$ and $0\leq j<m$. Then for any $c\in\GL(\Fqm^r)$ and $0\leq k<m$, we see that
    $$(c\circ\theta^k)\circ(b\circ\theta^j)=(c\circ b^{(q^k)})\circ \theta^{k+j},$$
    where $b^{(q^k)}$ is the $\Fqm$-automorphism whose matrix in the canonical basis is that of $b$ where all coefficients have been raised to the power $q^k$.
    This is indeed a semi-direct product group structure between $\GL(\Fqm^r)\simeq\GL_r(\Fqm)$ and $\langle\theta\rangle\simeq\mathbb{Z}/m\mathbb{Z}$.
\end{proof} \section{An attack against generic square-distinguishable alternant codes} \label{section:attack}
Let $\CC=\Alt{r}{\xv}{\yv}$ be a generic $q$-ary alternant code of extension degree $m$. We assume that we are in the square-distinguishable regime, \textit{i.e.} that the right-hand side of Inequality (\ref{eq:ea}) is greater than $n-\binom{rm+1}{2}$, so that 
Inequality (\ref{eq:ea}) is an equality by Heuristic \ref{heur:fgopt}. In the following, we denote by $\Hsec=\Psi_\alpha(\Vm_r(\xv,\yv))$ the secret parity-check matrix of $\CC$, where $\Vm_r(\xv,\yv)$ is the Vandermonde generator matrix of Remark \ref{remark:vandermonde}. The public key is another parity-check matrix 
$\Hpub=\Pm\Hsec$, where $\Pm$ is a secret $rm\times rm$ nonsingular $q$-ary matrix. We also denote by $\Vsec$ (resp. $\Vpub$) the geometric quadratic hull of $\Hsec$ (resp. $\Hpub$). By Proposition \ref{prop:intrinsic}, we have $\Vpub=\Pm\cdot\Vsec$. We aim to recover the private key, which is an efficient decoding algorithm for $\CC$. Our global approach follows a quite natural strategy consisting in 
retrieving some support $\xv'$ and multiplier $\yv'$.
\subsection{Case $r\leq q$} In the regime where $r\leq q$, Proposition \ref{prop:weil_proper_alt} ensures that $\CC$ is a Weil-proper alternant code. By Proposition \ref{prop:invariance}, \textit{all} the tangent spaces of $\Vpub$ are stabilized by $\Pm\Jm_r\Pm^{-1}$. These tangent spaces have dimension $2m$, as they are the image through $\Pm$ of the tangent spaces of $\Vsec$, themselves being the Weil restrictions of the 
tangent spaces of the quadratic hull of $\Vm_r(\xv,\yv)$ by Proposition \ref{prop:tangent}, which we know have dimension 2 as tangent spaces of the affine cone over a projective curve. Intuitively, there should not be much more matrices stabilizing all tangent spaces of $\Vsec$ than those that stabilize all Weil restrictions. Although this statement is likely to be provable, we state the following result as a heuristic. 
\begin{heur} \label{heur}
Generally, we have
$$\displaystyle\bigcap_{Q\in \Vpub}\St{T_Q \Vpub}=\Fq[\Pm \Jm_r \Pm^{-1}].$$
Experimentally, taking the intersection over only
$$N\eqdef \left\lceil\dfrac{1}{\rho(1-\rho)}\right\rceil$$
points, where $\rho=\dfrac{2}{r}$, suffices to get $\Fq[\Pm \Jm_r \Pm^{-1}]$.
\end{heur}
Let $\cA\eqdef\Fq[\Pm \Jm_r \Pm^{-1}]$. In order to compute $\cA$ in practice, one has to compute sufficiently many tangent spaces and then compute the space of matrices that stabilize all of them. Computing tangent spaces of $\Vpub$ can be done in polynomial time by merely computing the right kernel of the Jacobian matrix of a basis of the vanishing ideal at degree 2. Recall that computing rational points of $\Vpub$ is free, as we already have $n$ points 
given by the columns of $\Hpub$. Finally, computing the stabilizing algebra of all those tangent spaces can be done by simply solving a linear system of $2m\times (rm-2m)$ equations. Indeed, let $\Am=(a_{i,j})$ be an $rm\times rm$ matrix whose entries are unknowns. Let $T=T_{\gv} \Vpub$ be the tangent space of $\Vpub$ at some point $\gv$, that we may assume to be a column of $\Hpub$. Then $T$ can be seen as a linear code, and as such we can compute 
a generator matrix $\Gm$ and a parity-check matrix $\Hm$ of $T$. Then $\Am$ stabilizes $T$ if and only if for any $\word{h}\in T$, the vector $\Am\word{h}$ is orthogonal to the rows of $\Hm$. By linearity, the matrix $\Am$ stabilizes $T$ if and only if for any row $\word{r}$ of $\Gm$, we have $\Hm\Am\word{r}^\top=0$. All in all, the equations over the $a_{i,j}$'s expressing the fact that $\Am$ stabilizes $T$ are the coefficients of $\Hm\Am\Gm^\top$,
which is indeed an $(rm-2m)\times 2m$ matrix. Gathering these equations in a list for sufficiently many tangent spaces, we get a system of equations whose solution space is the intersection of the stabilizers of all these tangent spaces. For the system to be overdetermined, we need to compute at least 
$$\left\lceil\dfrac{(rm)^2}{2m(rm-2m)}\right\rceil=\left\lceil\dfrac{1}{\rho(1-\rho)}\right\rceil$$
tangent spaces, which explains Heuristic \ref{heur}. Step by step, we give a full algorithm for recovering $\cA$.
\begin{algorithm}[H]
    \caption{Computing $\cA=\Fq[\Pm \Jm_r \Pm^{-1}]$}
    \label{alg:cA}
    \begin{algorithmic}[1] \State \textbf{Input: } $\Hpub=(\word{g}_1|\ldots|\word{g}_n)$ the public parity-check matrix of $\CC$
    \State \textbf{Output: } $\Am_1,\ldots,\Am_m$ an $\Fq$-basis of $\cA$
    \State Compute a basis $\cF=(f_1,\ldots,f_N)$ of $I_2(\Hpub)$ \label{step:I2}
    \State Compute the Jacobian matrix $\mathbf{Jac}(\cF)=(\partial_{jk}f_i)_{i;j,k}\in\Sm_1^{N\times rm}$  \label{step:jac}
    \State $\Am\gets (a_{i,j})$ \Comment{Its entries are unknowns, or formal variables $a_{i,j}$}
    \State $\mathcal{S}\gets\varnothing$ \Comment{A set of linear equations over the $a_{i,j}$'s defining the stabilizers}
    \State $i\gets 0$
    \While{$\#\mathcal{S}<(rm)^2$} \Comment{While there are more unknowns than equations}
        \State Compute $T_i\eqdef T_{\word{g}_i}V$ as the right kernel of $\mathbf{Jac}(\cF)(\word{g}_i)$ \label{step:tangent}
        \State Compute a generator matrix $\Gm_i$ and a parity-check matrix $\Hm_i$ of $T_i$
        \State Add to $\mathcal{S}$ the coefficients of $\Hm_i \Am \Gm_i^{\top}$ \label{step:addeqs}
        \State $i\gets i+1$
    \EndWhile
    \State Compute $\Am_1,\ldots,\Am_m$ a basis of the solution space of $\mathcal{S}$
    \State \Return $\Am_1,\ldots,\Am_m$
    \end{algorithmic}
    \end{algorithm}

\begin{theorem} \label{thm:complexity1}
Assuming Heuristic \ref{heur} is true, Algorithm \ref{alg:cA} returns a basis of $\cA$ in $O\left(rn^\omega\right)$ operations in $\Fq$, where $\omega$ is the exponent of linear algebra.
\end{theorem}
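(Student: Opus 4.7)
The plan is to handle correctness and complexity separately, both conditional on Heuristic \ref{heur}.

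\textbf{Correctness.} Each tangent space $T_i$ has dimension $2m$ in $\Fq^{rm}$, so $\Gm_i$ is $2m \times rm$ and $\Hm_i$ is $(rm-2m)\times rm$; the matrix equation of Step \ref{step:addeqs} therefore provides exactly $2m(rm-2m)$ linear equations in the $(rm)^2$ entries of $\Am$, and the loop halts after at most $\lceil 1/(\rho(1-\rho))\rceil$ iterations. Heuristic \ref{heur} then identifies the solution space of the final linear system with $\cA = \Fq[\Pm \Jm_r \Pm^{-1}]$, which is $m$-dimensional over $\Fq$; the returned family is hence a basis of $\cA$, as required.

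\textbf{Complexity.} In the square-distinguishable regime, $n$ and $\binom{rm+1}{2}$ are of the same order, so $rm = O(\sqrt{n})$ and in particular $\dim I_2(\Hpub) \leq \binom{rm+1}{2} = O(n)$. I would then bound each block of the algorithm as follows. Step \ref{step:I2} computes the right kernel of the $n \times \binom{rm+1}{2}$ evaluation matrix on the columns of $\Hpub$, at cost $O(n^\omega)$ via Gaussian elimination. Step \ref{step:jac} symbolically differentiates $O(n)$ quadratic forms in $rm$ variables and is subdominant.

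Inside the while loop, which iterates $O(r)$ times by the previous count, Step \ref{step:tangent} takes the right kernel of an $O(n) \times rm$ matrix at cost $O(n(rm)^{\omega-1}) = O(n^{(\omega+1)/2})$; the computation of $\Gm_i$ and $\Hm_i$ is negligible; Step \ref{step:addeqs} fills $O((rm)^2) = O(n)$ sparse rows of an $O(n) \times (rm)^2$ linear system, at cost $O((rm)^3) = O(n^{3/2})$. Each iteration therefore falls within $O(n^\omega)$. The final solve has $(rm)^2 = O(n)$ unknowns and as many equations and costs $O(n^\omega)$ once more. Summing, every individual block is $O(n^\omega)$, the loop contributes a factor of $r$, and the overall complexity is $O(rn^\omega)$ as announced.

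\textbf{Main obstacle.} The key analytic point is the sizing: one must argue carefully that $rm = O(\sqrt{n})$ in the square-distinguishable regime and that no intermediate linear algebra step exceeds the $O(n^\omega)$ budget, especially Step \ref{step:tangent} where the evaluated Jacobian has $\dim I_2(\Hpub) = O(n)$ rows. Beyond this purely arithmetic bookkeeping, the entire argument is conditional: Heuristic \ref{heur} is needed both to identify the output of the algorithm with $\cA$ and to guarantee the $\lceil 1/(\rho(1-\rho))\rceil$ bound on the number of tangent spaces to sample, so the theorem is genuinely heuristic in nature.
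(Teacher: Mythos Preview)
Your proof is correct and follows essentially the same approach as the paper: you separate correctness (via Heuristic \ref{heur} and the $2m$-dimensionality of the tangent spaces) from complexity, then bound each step using the key sizing fact $\binom{rm+1}{2}=O(n)$ in the square-distinguishable regime, arriving at the same $O(r)$ loop count and $O(rn^\omega)$ total. Your bound on Step \ref{step:tangent} is in fact sharper than the paper's (you obtain $O(n^{(\omega+1)/2})$ where the paper simply notes it is at most $O(n^\omega)$), but this refinement does not change the overall conclusion.
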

\begin{proof}
Step \ref{step:I2} amounts to computing the left kernel of the matrix whose rows are $\word{r}_i\star\word{r}_j,~i\leq j$, where $\word{r}_i$ stands for the $i$-th row of $\Gpub$. This is a $\binom{rm+1}{2}\times n$ matrix. As we assume to be in a regime where 
there are no elements in $I_2(\Gpub)$ whose existence is forced by dimension, we have $\binom{rm+1}{2}=O(n)$ and the cost of Step \ref{step:I2} is therefore $O(n^\omega)$.\\
The cost of Step \ref{step:jac} is negligible compared to that of Step \ref{step:tangent}. Evaluating $\mathbf{Jac}(\cF)$ at $\word{g}_i$ is also negligible compared to computing the right-kernel of the resulting matrix, which costs $O(n^\omega)$ or even less since $\dim I_2(\Gpub)$
is typically significantly inferior to $n$. Step \ref{step:addeqs} adds $2m\times(rm-2m)$ equations to the set $\mathcal{S}$. The number of times we need to go through the loop to have $\#\mathcal{S}\geq (rm)^2$ is therefore 
$$\dfrac{(rm)^2}{2m(rm-2m)}=\dfrac{r^2}{2(r-2)}<r.$$ 
We conclude that we need to go through the loop $O(r)$ times to complete $\mathcal{S}$. Once this is done, Heuristic \ref{heur} ensures that the matrices that satisfy all equations of $\mathcal{S}$ are in $\cA$. Solving the system 
costs $O((rm)^{2\omega})=O(n^\omega)$ operations in $\Fq$. The overall complexity is therefore $O(rn^\omega)$, as we compute the right-kernel of the Jacobian matrix $O(r)$ times.
\end{proof}
\begin{remark}
Essentially, we need to compute sufficiently many equations so that the linear system $\mathcal{S}$ is overdetermined. Experimentally, having $\mathcal{S}$ overdetermined always suffices to have the solutions space reduced to $\cA$. If it is not the case,
we just need to add a few more equations by computing some additional tangent spaces. We expect the number of potential additional steps to be negligible, so that it does not change the total complexity of Algorithm \ref{alg:cA}.
\end{remark}

Using Algorithm \ref{alg:cA}, we get access to $\cA$. The next step is to use the field structure of $\cA$, which is ensured by the following.
\begin{proposition}
$\cA\simeq\Fqm$.
\end{proposition}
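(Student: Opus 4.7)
The plan is to reduce the statement to the observation that $\cA$ is generated as an $\Fq$-algebra by a single element whose minimal polynomial is $\Pi_\alpha$, the minimal polynomial of $\alpha$ over $\Fq$, which is irreducible of degree $m$.

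First, I would recall that $\Jm\in\Fq^{m\times m}$ is by construction the companion matrix of $\Pi_\alpha$, so its minimal polynomial is $\Pi_\alpha$, irreducible of degree $m$. The block-diagonal matrix $\Jm_r=\mathrm{Diag}(\Jm,\ldots,\Jm)$ shares the same minimal polynomial $\Pi_\alpha$, since each block already attains this minimal polynomial and all blocks are equal. Consequently, conjugation by $\Pm$ does not change the minimal polynomial: the minimal polynomial of $\Pm\Jm_r\Pm^{-1}$ over $\Fq$ is also $\Pi_\alpha$, still irreducible of degree $m$.

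Next, I would apply the standard description of the subalgebra generated by a single matrix: for any matrix $\Mm\in\Fq^{N\times N}$ with minimal polynomial $\mu$, the $\Fq$-algebra $\Fq[\Mm]$ is isomorphic to $\Fq[X]/(\mu)$, the isomorphism sending $X$ to $\Mm$. Applying this to $\Mm=\Pm\Jm_r\Pm^{-1}$, we get
\[
\cA=\Fq[\Pm\Jm_r\Pm^{-1}]\simeq\Fq[X]/(\Pi_\alpha)\simeq\Fqm,
\]
the last isomorphism being the defining one for $\Fqm=\Fq[\alpha]$. I do not expect any real obstacle here; the only thing to be careful about is checking that passing from $\Jm$ to $\Jm_r$ keeps the minimal polynomial equal to $\Pi_\alpha$ (rather than a higher power), which follows immediately from the block-diagonal form.
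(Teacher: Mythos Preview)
Your proof is correct and essentially the same as the paper's. The paper frames it as ``conjugation by $\Pm$ is an $\Fq$-algebra automorphism of $\Fq^{rm\times rm}$, hence $\cA\simeq\Fq[\Jm_r]\simeq\Fqm$,'' while you phrase the same fact via preservation of the minimal polynomial $\Pi_\alpha$ under conjugation and the identification $\Fq[\Mm]\simeq\Fq[X]/(\mu)$; these are two equivalent packagings of the same one-line argument.
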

\begin{proof}
We already have $\Fq[\Jm_r]\simeq\Fq[\Jm]\simeq\Fqm$, the last isomorphism being $\Mata$. Now, define 
$$C_{\Pm}:\begin{cases}
    \Fq^{rm\times rm}&\longrightarrow\Fq^{rm\times rm}\\
    \Am &\longmapsto\Pm\Am\Pm^{-1},
\end{cases}$$
which we refer to as \textit{the conjugation map of} $\Pm$, and which is known to be an automorphism of the $\Fq$-algebra $\Fq^{rm\times rm}$. 
Therefore, $\cA=C_{\Pm}(\Fq[\Jm_r])$ inherits the field structure of $\Fq[\Jm_r]$, from which we conclude $\cA\simeq\Fqm$.
\end{proof}
Conjugation by the matrix $\Pm$ defines an isomorphism from $\Fq[\Jm_r]$ to $\cA$. Even if we know both $\cA$ and $\Jm_r$, we still do not have access to $\Pm$ nor its conjugation map, and not even to the restriction of the latter to $\Fq[\Jm_r]$.
What we can do, however, is draw $\Am\in\cA$ uniformly at random until we get a generator of $\cA$, \textit{i.e.} an element of degree $m$. There should not be many trials necessary to get such a matrix $\Am$. Indeed, it suffices to find a generator of the 
multiplicative group $\cA^\times$, which has the following proportion in $\cA$:
$$\pi=\ds\prod_p\left(1-\dfrac{1}{p}\right),$$
where the product is taken over all prime divisors $p$ of $q^m-1$. The probability to need more than $t$ trials to get a generator is thus $(1-\pi)^t$, which tends towards zero exponentially fast.

Let $\Pi_{\Am}$ be the minimal polynomial of such a matrix $\Am$ over $\Fq$. This polynomial is irreducible of degree $m$ over $\Fq$.
Since $\Fqm/\Fq$ is Galois, it splits into linear factors over $\Fqm$:
$$\Pi_{\Am}=\displaystyle\prod_{j=0}^{m-1}(X-\zeta^{q^j}),$$
where $\zeta\in\Fqm$ is therefore a primitive element. As such, there exists some polynomial $f\in\Fq[X]$ of degree at most $m-1$ such that $f(\zeta)=\alpha$. Applying $f$ on $\Am$, we get a matrix with the same minimal polynomial as $\Jm_r$,
which enables us to assume that $\Pi_{\Am}=\Pi_\alpha$ in the following.
\begin{lemma} \label{lemma:conjugate}
    There exists an integer $0\leq j < m$ such that $\Pm\Jm_r\Pm^{-1}=\Am^{q^j}$.
\end{lemma}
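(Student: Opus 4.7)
The plan is to exploit the fact that both $\Pm\Jm_r\Pm^{-1}$ and $\Am$ live in the same $\Fq$-algebra $\cA$, and that $\cA$ is isomorphic to the field $\Fqm$, so that the question reduces to a statement about Galois conjugates in $\Fqm$.

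First I would observe that $\Pm\Jm_r\Pm^{-1}$ belongs to $\cA$ by the very definition $\cA = \Fq[\Pm\Jm_r\Pm^{-1}]$, and that its minimal polynomial over $\Fq$ coincides with $\Pi_{\Jm_r} = \Pi_\alpha$, since conjugation by an invertible matrix preserves minimal polynomials. By the normalization performed just before the lemma (replacing the randomly drawn generator $\Am$ of $\cA$ by $f(\Am)$ for an appropriate $f \in \Fq[X]$), we may assume $\Pi_{\Am} = \Pi_\alpha$ as well.

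Now, the isomorphism of $\Fq$-algebras $\cA \simeq \Fqm$ coming from the conjugation map $C_{\Pm}$ restricted to $\Fq[\Jm_r]$ composed with $\Mat_\alpha^{-1}$ sends $\Pm\Jm_r\Pm^{-1}$ to some element $\gamma \in \Fqm$ with minimal polynomial $\Pi_\alpha$, and sends $\Am$ to some element $\beta \in \Fqm$ with minimal polynomial $\Pi_\alpha$. The key observation is then that the roots of the irreducible polynomial $\Pi_\alpha$ over $\Fqm$ form a single orbit under the Frobenius, namely $\{\alpha^{q^i} : 0 \leq i < m\}$. Hence $\beta$ and $\gamma$ lie in the same orbit, and there exists $0 \leq j < m$ with $\gamma = \beta^{q^j}$. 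Pulling this equality back through the isomorphism $\cA \simeq \Fqm$, which is a ring homomorphism and therefore commutes with taking $q^j$-th powers, yields $\Pm\Jm_r\Pm^{-1} = \Am^{q^j}$, as required.

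There is really no major obstacle: the only delicate point is to make sure that the normalization step ensuring $\Pi_{\Am} = \Pi_\alpha$ has indeed been carried out (which the author does just before stating the lemma), so that both elements have the \emph{same} minimal polynomial. Everything else is a straightforward transport of a Galois-theoretic statement through the algebra isomorphism $\cA \simeq \Fqm$.
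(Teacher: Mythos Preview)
Your proof is correct and follows essentially the same approach as the paper: both arguments observe that conjugation by $\Pm$ preserves minimal polynomials, so $\Pm\Jm_r\Pm^{-1}$ and $\Am$ are two elements of the field $\cA\simeq\Fqm$ sharing the irreducible minimal polynomial $\Pi_\alpha$, hence are Galois conjugates. You spell out the transport through the isomorphism $\cA\simeq\Fqm$ more explicitly than the paper does, but the underlying idea is identical.
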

\begin{proof}
    The conjugation map $C_{\Pm}$ preserves minimal polynomials, therefore $\Pm\Jm_r\Pm^{-1}$ has the same minimal polynomial as $\Jm_r$, which is also that of $\Am$. As a result, $\Pm\Jm_r\Pm^{-1}$ is a Galois conjugate of 
    $\Am$, \textit{i.e.} of the form $\Am^{q^j}$ for some $j$.
\end{proof}
On the other hand, one can compute some matrix $\Qm$ satisfying $\Jm_r=\Qm\Am\Qm^{-1}$, the existence of which is proven below.
\begin{lemma} \label{lemma:similar}
    There exists $\Qm\in\GL_{rm}(\Fq)$ such that $\Jm_r=\Qm\Am\Qm^{-1}$.
\end{lemma}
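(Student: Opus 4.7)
The plan is to prove that $\Am$ and $\Jm_r$ are similar over $\Fq$ by exhibiting their common rational canonical form. The key input, arranged in the paragraph preceding the statement, is that both matrices have the same minimal polynomial $\Pi_\alpha$. Since $\Pi_\alpha$ is irreducible of degree $m$ over $\Fq$, the quotient ring $\Fq[X]/(\Pi_\alpha)$ is a field isomorphic to $\Fqm$, and this is what drives the argument.

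First I would endow $\Fq^{rm}$ with the $\Fq[X]$-module structure in which $X$ acts as $\Am$. Because $\Pi_\alpha(\Am)=0$, this action factors through $\Fq[X]/(\Pi_\alpha)\simeq\Fqm$, so $\Fq^{rm}$ acquires a natural $\Fqm$-vector space structure. A dimension count over $\Fq$ yields $\dim_{\Fqm}\Fq^{rm}=rm/m=r$. As $\Fqm$ is a field, this $\Fqm$-module is free, so one may pick an $\Fqm$-basis $v_1,\ldots,v_r$ of $\Fq^{rm}$ for the $\Am$-action.

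The matrix $\Qm$ is then constructed explicitly. Recall that $\Jm$ is the matrix of multiplication by $\alpha$ in the basis $(1,\alpha,\ldots,\alpha^{m-1})$ of $\Fqm$; hence, in the $\Fq$-basis
$$\cB=\left(v_1,\Am v_1,\ldots,\Am^{m-1}v_1,\,v_2,\Am v_2,\ldots,\Am^{m-1}v_2,\,\ldots,\,v_r,\Am v_r,\ldots,\Am^{m-1}v_r\right)$$
of $\Fq^{rm}$, the matrix of the endomorphism induced by $\Am$ is exactly the block-diagonal matrix $\Jm_r$. Setting $\Qm^{-1}$ to be the invertible $\Fq$-matrix whose columns are the vectors of $\cB$ expressed in the canonical basis then gives the change-of-basis identity $\Qm\Am\Qm^{-1}=\Jm_r$, with $\Qm\in\GL_{rm}(\Fq)$ by construction.

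The only mildly delicate step is the production of the basis $v_1,\ldots,v_r$; everything else is standard linear algebra. Concretely one proceeds greedily: pick any nonzero $v_1\in\Fq^{rm}$, note that $(v_1,\Am v_1,\ldots,\Am^{m-1}v_1)$ is $\Fq$-linearly independent (otherwise the $\Fq[X]$-annihilator of $v_1$ would be a proper nonzero divisor of $\Pi_\alpha$, contradicting irreducibility), then pick $v_2$ outside the $\Fq$-span of these $m$ vectors, and iterate $r$ times. Equivalently, and this is how one actually computes $\Qm$ in practice, this amounts to computing the rational canonical form of $\Am$ over $\Fq$, which is doable in polynomial time in $rm$.
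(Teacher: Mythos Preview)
Your proof is correct, and it takes a more constructive route than the paper's. The paper argues in two lines: since $\Am$ and $\Jm_r$ share the same irreducible minimal polynomial $\Pi_\alpha$, they have the same Jordan normal form and are therefore similar over $\Fqm$; then, because both matrices have entries in $\Fq$, similarity descends to $\Fq$. You instead equip $\Fq^{rm}$ with the $\Fqm$-vector space structure coming from $\Fq[X]/(\Pi_\alpha)$ acting via $\Am$, pick an $\Fqm$-basis, and read off the rational canonical form directly---which is $\Jm_r$ because $\Jm$ is the companion matrix of $\Pi_\alpha$. Your argument is self-contained (it does not invoke the descent-of-similarity fact), and it is closer in spirit to how one would actually compute $\Qm$ in the attack; the paper's version is shorter but relies on two black boxes.
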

\begin{proof}
    The two matrices share the same \textit{irreducible} minimal polynomial, which implies that their Jordan normal form is the same. They are therefore similar over $\Fqm$. As both of them have coefficients in $\Fq$, they are in fact 
    similar over $\Fq$.
\end{proof}
The matrix $\Qm$ of the previous lemma is exactly the one that we were looking for, as stated below.
\begin{theorem}
One can compute $\Qm\in\GL_{rm}(\Fq)$ such that $\Qm\Pm\in\Group$.
\end{theorem}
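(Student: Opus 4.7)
The plan is to combine Lemmas \ref{lemma:conjugate} and \ref{lemma:similar} with the characterization of $\Group$ given by the Corollary following Theorem \ref{thm:group}. Recall that we have in hand a matrix $\Am\in\cA$ with $\Pi_{\Am}=\Pi_\alpha$. Lemma \ref{lemma:similar} provides the abstract existence of an invertible $\Fq$-matrix $\Qm$ satisfying $\Jm_r=\Qm\Am\Qm^{-1}$; the first task is to explain why such a $\Qm$ is effectively computable. The second task is to verify that this $\Qm$ automatically satisfies $\Qm\Pm\in\Group$.

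For the computation of $\Qm$, I would solve the $\Fq$-linear system $\Jm_r\Xm=\Xm\Am$ in an unknown $rm\times rm$ matrix $\Xm$. Since $\Jm_r$ and $\Am$ are similar over $\Fq$ by Lemma \ref{lemma:similar}, this solution space is nontrivial; moreover, because $\Pi_{\Am}=\Pi_\alpha=\Pi_{\Jm_r}$ is irreducible of degree $m$, the centralizer dimensions force the solution space to be exactly $m$-dimensional. One then picks any solution $\Qm$ in this space and checks it is invertible; if not, one takes a random $\Fq$-linear combination of basis solutions, which will be invertible with overwhelming probability (a single bad linear form cuts out a proper subvariety). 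This yields $\Qm\in\GL_{rm}(\Fq)$ with $\Qm\Am\Qm^{-1}=\Jm_r$ in polynomial time.

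For the second task, I invoke the Corollary characterizing $\Group$: a matrix $\Mm\in\GL_{rm}(\Fq)$ lies in $\Group$ if and only if $\Mm\Jm_r\Mm^{-1}=\Jm_r^{q^j}$ for some integer $j$. Setting $\Mm=\Qm\Pm$, one computes
\[
(\Qm\Pm)\,\Jm_r\,(\Qm\Pm)^{-1}=\Qm\bigl(\Pm\Jm_r\Pm^{-1}\bigr)\Qm^{-1}=\Qm\,\Am^{q^j}\,\Qm^{-1},
\]
where the last equality uses Lemma \ref{lemma:conjugate} to identify $\Pm\Jm_r\Pm^{-1}$ with a Galois conjugate $\Am^{q^j}$ of $\Am$. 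Since $\Qm\Am\Qm^{-1}=\Jm_r$ and conjugation commutes with taking powers, this simplifies to $\Jm_r^{q^j}$. The Corollary then gives $\Qm\Pm\in\Group$, as required.

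The only real subtlety is the computation step: one must be careful to produce an \emph{invertible} $\Qm$ in the solution space of $\Jm_r\Xm=\Xm\Am$, rather than a degenerate one. Because the space of solutions is isomorphic, as an $\Fq$-algebra acting on either side, to $\Fq[\Jm_r]\simeq\Fqm$ (a field), the nonzero solutions are precisely the invertible ones, so any nonzero solution works. Everything else is a direct concatenation of the preceding lemmas with the group-theoretic characterization of $\Group$.
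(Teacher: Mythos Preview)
Your core argument (the ``second task'') is correct and essentially identical to the paper's: both combine Lemma~\ref{lemma:conjugate}, Lemma~\ref{lemma:similar}, and the characterization of $\Group$ in terms of conjugating $\Jm_r$ to a Frobenius power of itself. The computation $(\Qm\Pm)\Jm_r(\Qm\Pm)^{-1}=\Qm\Am^{q^j}\Qm^{-1}=\Jm_r^{q^j}$ is exactly what the paper does.

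Your computational aside, however, contains two inaccuracies. First, the solution space of $\Jm_r\Xm=\Xm\Am$ is not $m$-dimensional but $r^2m$-dimensional over $\Fq$: fixing one invertible solution $\Qm_0$, a general solution is $\Xm=\Cm\Qm_0$ with $\Cm$ in the centralizer of $\Jm_r$, and by Proposition~\ref{prop:cond_r} that centralizer is $\{\Resq{\Bm}\mid\Bm\in\Fqm^{r\times r}\}$, of $\Fq$-dimension $r^2m$. Second, and for the same reason, it is not true that every nonzero solution is invertible (take $\Bm$ singular); the solution space is not a field once $r\geq 2$. Your fallback of drawing random $\Fq$-linear combinations and testing invertibility does salvage the algorithm, since a random $\Bm\in\Fqm^{r\times r}$ is invertible with high probability, so the overall claim that $\Qm$ is effectively computable survives---but the justification you gave for it is wrong. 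The paper sidesteps this entirely by appealing to Jordan normal form to produce $\Qm$.
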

\begin{proof}
Let $\Qm$ be the matrix of Lemma \ref{lemma:similar}. From Lemma \ref{lemma:conjugate} and \ref{lemma:similar}, we get 
$$\begin{cases}
    \Pm\Jm_r\Pm^{-1}=\Am^{q^j}\\
    \Jm_r=\Qm\Am\Qm^{-1},
\end{cases}$$
from which we get $\Pm\Jm_r\Pm^{-1}=\Qm^{-1}\Jm_r^{q^j}\Qm$, or equivalently $(\Qm\Pm)\Jm_r (\Qm\Pm)^{-1}=\Jm_r^{q^j}$. By Theorem \ref{thm:group}, this means $\Qm\Pm\in\Group$.
\end{proof}
It only remains to explain why this solves our problem.
\begin{proposition} \label{prop:grs_cojugated}
    Let $\Gm'=\Psi_\alpha^{-1}(\Qm\Hpub)$. Then $\Gm'$ is a generator matrix of $\mathbf{GRS}(\xv^{q^j},\yv^{q^j})$.
\end{proposition}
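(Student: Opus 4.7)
The plan is to unpack $\Qm \Hpub$ through the group decomposition of $\Group$ and track how each factor acts on $\Hsec = \Psi_\alpha(\Vm_r(\xv, \yv))$. First, I would invoke the corollary to Theorem \ref{thm:group}: since we established $(\Qm \Pm) \Jm_r (\Qm \Pm)^{-1} = \Jm_r^{q^j}$, there must exist $\Bm \in \GL_r(\Fqm)$ such that $\Qm \Pm = \Resq{\Bm} \cdot \Thetam_r^j$. The rest of the argument consists in computing
\[
\Qm \Hpub = \Resq{\Bm} \cdot \Thetam_r^j \cdot \Psi_\alpha(\Vm_r(\xv, \yv))
\]
factor by factor.

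Next I would treat the Frobenius factor $\Thetam_r^j$. By the defining property of $\Thetam$, we have $\Thetam \cdot \Psi_\alpha(x) = \Psi_\alpha(x^q)$ for every $x \in \Fqm$, hence the block-diagonal matrix $\Thetam_r^j$ satisfies $\Thetam_r^j \cdot \Psi_\alpha(\vv) = \Psi_\alpha(\vv^{q^j})$ entrywise for any $\vv \in \Fqm^r$. Applying this column by column to the matrix $\Psi_\alpha(\Vm_r(\xv, \yv))$, and using that the $(k,i)$ entry $x_i^{k-1} y_i$ of $\Vm_r(\xv, \yv)$ raised to the $q^j$-th power equals $(x_i^{q^j})^{k-1} (y_i^{q^j})$, gives
\[
\Thetam_r^j \cdot \Psi_\alpha(\Vm_r(\xv, \yv)) = \Psi_\alpha\bigl(\Vm_r(\xv^{q^j}, \yv^{q^j})\bigr).
\]

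Then I would handle $\Resq{\Bm}$: by the very definition of the Weil restriction of a matrix, left-multiplication by $\Resq{\Bm}$ intertwines with $\Psi_\alpha$ and the action of $\Bm$, so
\[
\Resq{\Bm} \cdot \Psi_\alpha\bigl(\Vm_r(\xv^{q^j}, \yv^{q^j})\bigr) = \Psi_\alpha\bigl(\Bm \cdot \Vm_r(\xv^{q^j}, \yv^{q^j})\bigr).
\]
Applying $\Psi_\alpha^{-1}$ then yields $\Gm' = \Bm \cdot \Vm_r(\xv^{q^j}, \yv^{q^j})$.

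Finally, since $\Vm_r(\xv^{q^j}, \yv^{q^j})$ is the Vandermonde generator matrix of $\mathbf{GRS}(\xv^{q^j}, \yv^{q^j})$ (Remark \ref{remark:vandermonde}) and $\Bm \in \GL_r(\Fqm)$ is just an invertible change of basis on the rows, $\Gm'$ is another generator matrix of the same code, which is exactly the claim. I do not anticipate a hard step here: everything reduces to the functoriality of Weil restriction and the Frobenius interpretation of $\Thetam$, both already established. The only subtlety worth double-checking is that the integer $j$ appearing in the decomposition of $\Qm\Pm$ as an element of $\Group$ is the same $j$ produced by Lemma \ref{lemma:conjugate}, but this is guaranteed by the equivalence $(i) \Leftrightarrow (ii)$ of the corollary to Theorem \ref{thm:group}, which matches the two exponents.
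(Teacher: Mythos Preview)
Your proof is correct and in fact more direct than the paper's. Both arguments start from the fact that $\Qm\Pm\in\Group$, but they exploit it differently. The paper argues geometrically: it observes that $\Qm\Pm$ sends the Weil restriction of the rational normal curve $\cY$ onto that of $\Bm\cdot\cY^{q^j}$, deduces that each column of $\Gm'$ lies on $\Bm\cdot\cY^{q^j}$, and then parametrises these points to obtain that the rows of $\Gm'$ are evaluations of degree-$<r$ polynomials at $\xv^{q^j}$ scaled by $\yv^{q^j}$; this gives only the inclusion of the row space of $\Gm'$ in $\GRS{r}{\xv^{q^j}}{\yv^{q^j}}$, and a separate rank argument is needed to conclude equality. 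You instead use the explicit description $\Qm\Pm=\Resq{\Bm}\cdot\Thetam_r^j$ from Theorem~\ref{thm:group} and push $\Psi_\alpha(\Vm_r(\xv,\yv))$ through each factor by hand, obtaining the exact identity $\Gm'=\Bm\cdot\Vm_r(\xv^{q^j},\yv^{q^j})$ on the nose. This bypasses the quadratic hull entirely, needs no rank count, and makes it transparent that $\Gm'$ generates the whole GRS code. Your check that the exponent $j$ in the group decomposition matches the one from Lemma~\ref{lemma:conjugate} via the corollary to Theorem~\ref{thm:group} is also correct.
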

\begin{proof}
    As $\Qm\Pm\in\Group$, it preserves Weil restrictions. More precisely, there is some $\Bm\in\GL_{r}(\Fqm)$ such that $\Qm\cdot\Vpub=\Qm\Pm\cdot\Vsec$ is the Weil restriction $\Bm\cdot\cY^{q^j}$, where $\cY$ is the geometric quadratic hull of $\Vm_r(\xv,\yv)$. This means that 
    the columns of $\Gm'$ are points of $\Bm\cdot\cY^{q^j}$. In other words, there exist polynomials $f_1,\ldots,f_r\in\Fqm[X]$ of degree at most $r-1$ such that 
    $$\Gm'=\begin{pmatrix}
        y_1^{q^j}f_1(x_1^{q^j}) & \ldots & y_n^{q^j}f_1(x_n^{q^j})\\
        \vdots & \ddots & \vdots\\
        y_1^{q^j}f_r(x_1^{q^j}) & \ldots & y_n^{q^j}f_r(x_n^{q^j})\\
    \end{pmatrix},$$
    which means that the row space of $\Gm'$ is a subspace of $\mathbf{GRS}(\xv^{q^j},\yv^{q^j})$. Since $\Qm\Hpub$ has rank $rm$, $\Gm'$ has rank $r$ and is therefore a generator matrix of $\mathbf{GRS}(\xv^{q^j},\yv^{q^j})$.
\end{proof}
The Sidelnikov-Shestakov attack then suffices to recover a support $\xv'$ and a multiplier $\yv'$ in $O(n^\omega)$ operations in $\Fqm$. The whole algorithm is detailed below.
\begin{algorithm}[H]
\caption{Recovering a support $\xv'$ and a multiplier $\yv'$ for $\CC$}
\label{alg:attack}
\begin{algorithmic}[1] \State \textbf{Input: } $\Hpub$ the public parity-check matrix of $\CC$
\State \textbf{Output: } $\xv',\yv'$ such that $\CC=\Alt{r}{\xv'}{\yv'}$.
\State Compute $\cA$ using Algorithm \ref{alg:cA} \label{step:cA}
\State Compute $\Am\in\cA$ such that $\Pi_{\Am}=\Pi_{\alpha}$ \label{step:Pi_A}
\State Compute $\Qm\in\GL_{rm}(\Fq)$ such that $\Jm_r=\Qm\Am\Qm^{-1}$
\State Compute $\Gm'=\Psi_{\word{\alpha}}^{-1}(\Qm\Hpub)$ and let $\DC=\{\word{m}\Gm'~|~\word{m}\in\Fqm^r\}$
\State Apply \cite{SS92} on $\DC$ to get $\xv',\yv'$ such that $\DC=\GRS{r}{\xv'}{\yv'}$ \label{step:SS92}
\State \textbf{Return} $\xv',\yv'$.
\end{algorithmic}
\end{algorithm} 

\begin{theorem}
Provided that $m=O(\log r)$, Algorithm \ref{alg:attack} returns a support $\xv'$ and a multiplier $\yv'$ such that $\CC=\Alt{r}{\xv'}{\yv'}$ at a cost of $O(rn^\omega)$ operations in $\Fq$.
\end{theorem}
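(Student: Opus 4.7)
The plan is to combine the correctness of each step, as established by the intermediate results, with a careful complexity accounting, relying on the hypothesis $m = O(\log r)$ to absorb the cost of $\Fqm$-arithmetic into the overall bound.

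First, for correctness. By Proposition \ref{prop:grs_cojugated}, the matrix $\Gm' = \Psi_\alpha^{-1}(\Qm\Hpub)$ computed in the algorithm is a generator matrix of $\GRS{r}{\xv^{q^j}}{\yv^{q^j}}$ for some $0 \le j < m$. The Sidelnikov-Shestakov algorithm of \cite{SS92} applied to this GRS code (Step \ref{step:SS92}) thus returns a pair $(\xv', \yv')$ such that $\GRS{r}{\xv'}{\yv'} = \GRS{r}{\xv^{q^j}}{\yv^{q^j}}$. It remains to observe that $\Alt{r}{\xv^{q^j}}{\yv^{q^j}} = \Alt{r}{\xv}{\yv} = \CC$: indeed, applying the Frobenius $x \mapsto x^{q^j}$ coordinatewise is an $\Fq$-linear operation that commutes with taking the dual and preserves $\Fq^n$, so it leaves the intersection $\GRS{r}{\xv}{\yv}^\perp \cap \Fq^n$ globally unchanged.

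Next, for complexity. Step \ref{step:cA} costs $O(rn^\omega)$ operations in $\Fq$ by Theorem \ref{thm:complexity1}. In Step \ref{step:Pi_A}, one picks random elements of $\cA$ and tests whether they generate $\cA \simeq \Fqm$ as a field. By the discussion following the isomorphism $\cA \simeq \Fqm$, the density of generators of $\cA^\times$ is $\pi = \prod_{p \mid q^m-1}(1 - 1/p)$, bounded below by a constant up to $1/\log\log(q^m)$, so $O(1)$ trials suffice on average, and each trial reduces to computing a minimal polynomial of an $rm \times rm$ matrix, which costs $O((rm)^\omega)$ operations in $\Fq$. Adjusting by an interpolation polynomial $f$ of degree $< m$ so that $\Pi_{\Am} = \Pi_\alpha$ is negligible. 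Step 5, computing $\Qm$ with $\Jm_r = \Qm\Am\Qm^{-1}$, is a standard similarity problem between two matrices sharing the same irreducible minimal polynomial; this can be solved in $O((rm)^\omega)$ operations in $\Fq$ by, for instance, simultaneously bringing both matrices to rational canonical form. Step 6 is a trivial relabelling, and Step \ref{step:SS92} costs $O(n^\omega)$ operations in $\Fqm$.

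Finally, the assumption $m = O(\log r)$ ensures that one $\Fqm$-operation costs $\widetilde{O}(m) = \widetilde{O}(\log r)$ operations in $\Fq$, so $\Fqm$-costs in Step \ref{step:SS92} translate to $\widetilde{O}(n^\omega \log r) = O(rn^\omega)$ operations in $\Fq$. All other steps are dominated by the $O(rn^\omega)$ bound of Algorithm \ref{alg:cA}, which therefore governs the total complexity. The only genuine subtlety lies in Step \ref{step:Pi_A}: one must argue that the search for a generator of $\cA^\times$ succeeds in a bounded expected number of trials and that the probabilistic bound is uniform in the parameters, but this is immediate from the density estimate above together with $\Fq$-subalgebra structure of $\cA$ coming from Algorithm \ref{alg:cA}.
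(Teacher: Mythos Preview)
Your proof is correct and follows essentially the same approach as the paper's: invoke Proposition~\ref{prop:grs_cojugated} for correctness, then account for each step's complexity and use $m=O(\log r)$ to absorb the $\Fqm$-arithmetic of Step~\ref{step:SS92}. The only cosmetic differences are that the paper phrases the correctness via trace codes (Galois conjugation is erased by the trace map, hence the trace codes coincide, hence by duality so do the alternant codes) rather than directly via the definition of $\Alt{r}{\xv}{\yv}$, and it computes $\Qm$ through the Jordan normal form rather than the rational canonical form; both variants are equivalent here.
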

\begin{proof}
Firstly, let us prove that $\xv',\yv'$ returned by Algorithm \ref{alg:attack} are indeed valid support and multiplier. By Proposition \ref{prop:grs_cojugated}, we have 
$$\GRS{r}{\xv'}{\yv'}=\mathbf{GRS}(\xv^{q^j},\yv^{q^j})$$
for some integer $0\leq j<m$. The GRS codes generated by $\xv,\yv$ and $\xv',\yv'$ respectively are therefore equal up to Galois conjugation, the latter being erased by the trace map. As a consequence, the two GRS codes have the same trace codes. By duality, 
$\xv,\yv$ and $\xv',\yv'$ generate the same alternant code.

Let us now derive the complexity of Algorithm \ref{alg:attack}. Step \ref{step:cA} requires $O(rn^\omega)$ operations in $\Fq$ by Theorem \ref{thm:complexity1}. Step \ref{step:Pi_A} requires computing matrices $\Am\in\cA$ at random $O(1)$ times, then computing the minimal polynomial and checking its degree.
Computing the minimal polynomial of $\Am$ requires $O((rm)^\omega)$ operations in $\Fq$, so Step \ref{step:Pi_A} is negligible. Computing the similarity matrix $\Qm$ requires computing the Jordan normal form of $\Am$ and checking whether
it is equal to that of $\Jm_r$ (which can be precomputed). This step requires $O((rm)^\omega)$ operations (possibly in $\Fqm$) and is again negligible. The only remaining costly operation is Step \ref{step:SS92} which is known to require $O(n^\omega)$
operations in $\Fqm$, which boils down to $O(mn^\omega)$ operations on $\Fq$. Since $m=O(\log r)$ --- which is the natural asymptotic regime in McEliece cryptosystem --- we conclude that the overall complexity is bounded by that of Algorithm \ref{alg:cA}.
\end{proof}
We provide a \texttt{SageMath} implementation of our attack, which is available \href{https://github.com/AxelfVGD/Tangent/tree/main}{here}.

\subsection{Case $r>q$} The attack described in the previous section requires the alternant code to be generic, square-distinguishable, and of degree $r\leq q$. If we run the exact same algorithm with a generic square-distinguishable alternant code of degree $r>q$
in input, then it turns out that the algorithm does return valid support $\xv'$ and multiplier $\yv'$. Although we will not provide a proof of this, we will try to give a partial explanation. To this end we introduce two results.
\begin{theorem}[Weil's theorem]
    Let $I\subset\Rm$ be an ideal and $J=\Resq{I}\subset\Sm$, where $\Rm$ and $\Sm$ are the polynomial rings defined in Section \ref{section:weil}. There exists an isomorphism of graded $\Fqm$-algebras
    \begin{equation}\label{eq:weil}
        (\Sm/J)\otimes_{\Fq}\Fqm\simeq\bigotimes_{j=0}^{m-1}\Rm/I^{q^j},
    \end{equation}
    where $I^{q^j}=\{f^{(q^j)}~|~f\in I\}$, the notation $f^{(q^j)}$ referring to $f$ where the automorphism $\theta^j$ has been applied on the coefficients --- the degree is preserved.
\end{theorem}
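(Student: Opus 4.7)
The plan is to realize the isomorphism through an explicit change of variables in the ring $\Sm\otimes_{\Fq}\Fqm = \Fqm[x_{i,j}]$. Set
$$y_i^{(k)} := \sum_{j=0}^{m-1}\alpha^{jq^k}\,x_{i,j}, \qquad 0\le i<r,\ 0\le k<m,$$
so that $y_i^{(0)}=\Phi(X_i)$ and $y_i^{(k)}=\Phi(X_i)^{(q^k)}$ (i.e.\ the image of $\Phi(X_i)$ under the coefficient-wise action of $\theta^k$, noting that the $x_{i,j}\in\Fq$ are fixed). The transition from $(x_{i,j})_j$ to $(y_i^{(k)})_k$ is governed, for each fixed $i$, by the Moore/Vandermonde matrix $(\alpha^{jq^k})_{j,k}$, which is invertible since the $\alpha^{q^k}$ are pairwise distinct elements of $\Fqm$.

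First, I would deduce that the $y_i^{(k)}$ are $rm$ algebraically independent elements of $\Sm\otimes_{\Fq}\Fqm$ generating it as an $\Fqm$-algebra, i.e.\ there is an isomorphism of graded $\Fqm$-algebras
$$\Sm\otimes_{\Fq}\Fqm\;\simeq\;\Fqm\bigl[y_i^{(k)}:i,k\bigr]\;\simeq\;\bigotimes_{k=0}^{m-1}\Rm,$$
where the $k$-th factor $\Rm$ is embedded by $X_i\longmapsto y_i^{(k)}\otimes 1$ (and the $\Fqm$-structure is the canonical one of the tensor product). The second step is to identify the extended ideal $J\cdot(\Sm\otimes\Fqm)$ in these new coordinates. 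By definition $J$ is generated over $\Sm$ by the polynomials $\Phi_j(f)$ for $f\in I$ and $0\le j<m$. For any $f\in I$ and any $k$,
$$\Phi(f)^{(q^k)}=\sum_{j=0}^{m-1}\alpha^{jq^k}\,\Phi_j(f)$$
lies in $J\cdot(\Sm\otimes\Fqm)$; conversely, the Moore matrix $(\alpha^{jq^k})_{j,k}$ being invertible over $\Fqm$, each $\Phi_j(f)$ is an $\Fqm$-linear combination of the $\Phi(f)^{(q^k)}$. Hence $J\cdot(\Sm\otimes\Fqm)$ is also generated by the family $\{\Phi(f)^{(q^k)}:f\in I,\ 0\le k<m\}$.

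Now the crucial observation: a direct computation on monomials shows that for $f=\sum a_\beta X^\beta\in\Rm$,
$$\Phi(f)^{(q^k)}=\sum_\beta a_\beta^{q^k}\prod_i\bigl(y_i^{(k)}\bigr)^{\beta_i},$$
which under the isomorphism $\Sm\otimes\Fqm\simeq\bigotimes_k\Rm$ is precisely $1\otimes\cdots\otimes f^{(q^k)}\otimes\cdots\otimes 1$ with $f^{(q^k)}$ placed in the $k$-th tensor factor. Therefore the generators of $J\cdot(\Sm\otimes\Fqm)$ correspond exactly to the family of generators of the ideal $\sum_{k=0}^{m-1}\bigl(1^{\otimes k}\otimes I^{q^k}\otimes 1^{\otimes m-1-k}\bigr)$ of $\bigotimes_k\Rm$, whose quotient is canonically $\bigotimes_{k=0}^{m-1}\Rm/I^{q^k}$.

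Putting the two steps together yields the claimed isomorphism of graded $\Fqm$-algebras. The main obstacle, and the step I would verify most carefully, is the combinatorial identification $\Phi(f)^{(q^k)}\leftrightarrow f^{(q^k)}$ in the $k$-th slot: one must check that the coefficient-wise Frobenius on the $\Fqm$-side transports to the Frobenius acting on the coefficients of $f$ in the tensor decomposition, which is really a consequence of the fact that the $y_i^{(k)}$ are themselves Galois-conjugate under $\theta$. Everything else is bookkeeping around the Vandermonde/Moore invertibility.
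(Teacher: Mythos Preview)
Your argument is correct and, in fact, more detailed than what the paper offers: the paper does not prove this theorem at all but simply invokes \cite[Theorem~2.8]{CCG23} specialized to finite fields. Your explicit change of coordinates via the Moore/Vandermonde matrix $(\alpha^{jq^k})_{j,k}$ is precisely the classical device underlying that reference, so there is no substantive difference in strategy---you have just unpacked the citation. The identification $\Phi(f)^{(q^k)}=f^{(q^k)}(y^{(k)}_0,\dots,y^{(k)}_{r-1})$ that you flag as the delicate step is indeed the heart of the matter, and your justification (the coefficient-wise Frobenius on $\Sm\otimes_{\Fq}\Fqm$ fixes the variables $x_{i,j}$ and permutes the $y_i^{(k)}$ cyclically in $k$) is exactly right.
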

\begin{proof}
    This result is \cite[Theorem 2.8]{CCG23} in the case of finite fields.
\end{proof}
This means that we can work equivalently with the Weil restriction or the ideal obtained by extension of scalars. Note that this theorem somehow looks like the following result from coding theory.
\begin{proposition}[\cite{BMT23}]
    Let $\CC\subset\Fqm^n$ be a proper code. Then 
    $$\Tr{\CC}_{\Fqm}=\bigoplus_{j=0}^{m-1}\CC^{q^j},$$
    where $\Tr{\CC}_{\Fqm}$ denotes the $\Fqm$-linear code spanned by $\Tr{\CC}$.
\end{proposition}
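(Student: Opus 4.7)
My plan is to prove the two inclusions separately, then upgrade the sum to a direct sum using the hypothesis of properness via a dimension count.

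First, I would establish the inclusion $\Tr{\CC}_{\Fqm}\subseteq\bigoplus_{j=0}^{m-1}\CC^{q^j}$. For any codeword $\word{c}\in\CC$, the definition of the trace gives the coordinate-wise identity
\[
\Tr{\word{c}}=\sum_{j=0}^{m-1}\word{c}^{q^j},
\]
where $\word{c}^{q^j}$ denotes the coordinate-wise $q^j$-th power. Each $\CC^{q^j}$ is $\Fqm$-linear (since $\lambda\cdot\word{c}^{q^j}=(\lambda^{q^{m-j}}\word{c})^{q^j}$ and $\CC$ is $\Fqm$-linear), so the right-hand side is an $\Fqm$-linear subspace of $\Fqm^n$. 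This immediately yields the $\Fq$-inclusion $\Tr{\CC}\subseteq\sum_j\CC^{q^j}$, and therefore the $\Fqm$-inclusion $\Tr{\CC}_{\Fqm}\subseteq\sum_j\CC^{q^j}$.

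Next, for the reverse inclusion, fix $\word{c}\in\CC$ and an $\Fq$-basis $\lambda_0,\ldots,\lambda_{m-1}$ of $\Fqm$. Since $\CC$ is $\Fqm$-linear, $\lambda_i\word{c}\in\CC$, so
\[
\Tr{\lambda_i\word{c}}=\sum_{j=0}^{m-1}\lambda_i^{q^j}\word{c}^{q^j}\in\Tr{\CC}\subseteq\Tr{\CC}_{\Fqm}\qquad(0\leq i<m).
\]
The matrix $(\lambda_i^{q^j})_{0\leq i,j<m}$ is a Moore matrix, which is invertible over $\Fqm$ precisely because $\lambda_0,\ldots,\lambda_{m-1}$ are $\Fq$-linearly independent. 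Consequently, taking suitable $\Fqm$-linear combinations of the $\Tr{\lambda_i\word{c}}$ isolates each $\word{c}^{q^j}$, showing that every $\word{c}^{q^j}$ lies in $\Tr{\CC}_{\Fqm}$. As this holds for arbitrary $\word{c}\in\CC$, we deduce $\CC^{q^j}\subseteq\Tr{\CC}_{\Fqm}$ for every $j$, hence $\sum_j\CC^{q^j}\subseteq\Tr{\CC}_{\Fqm}$.

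It remains to verify that the sum is actually direct. Each Galois conjugate $\CC^{q^j}$ has $\Fqm$-dimension $r\eqdef\dim_{\Fqm}\CC$, so $\dim_{\Fqm}\sum_j\CC^{q^j}\leq mr$. On the other hand, properness of $\CC$ means exactly that $\dim_{\Fq}\Tr{\CC}=mr$, and extending scalars preserves dimension, so $\dim_{\Fqm}\Tr{\CC}_{\Fqm}=mr$. The equality of subspaces established above then forces equality of dimensions, which in turn implies that the sum $\sum_j\CC^{q^j}$ decomposes as an internal direct sum. The main subtlety is this final step: both inclusions are essentially formal manipulations of the trace, and the Moore matrix inversion is standard, but the direct sum conclusion is genuinely a consequence of the properness hypothesis — without it, one would only obtain an equality of sums and could have nontrivial intersections between the conjugates.
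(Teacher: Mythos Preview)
Your argument is correct. The first inclusion is immediate from the definition of the trace, the reverse inclusion via inversion of the Moore matrix $(\lambda_i^{q^j})_{i,j}$ is the standard trick, and your dimension count correctly pinpoints where properness enters: without it one still has the equality $\Tr{\CC}_{\Fqm}=\sum_j\CC^{q^j}$, but the sum need not be direct.

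As for comparison: the paper does not actually prove this proposition. It is quoted from \cite{BMT23} and stated without proof, serving only as a tool to reinterpret the quadratic hull of $\Alt{r}{\xv}{\yv}^\perp$ after extension of scalars. Your proof is essentially the classical one and would be a perfectly acceptable replacement for the citation.

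One minor remark on exposition: in the last paragraph you assert that ``extending scalars preserves dimension''. This is of course true here because any $\Fq$-basis of $\Tr{\CC}\subset\Fq^n$ remains $\Fqm$-linearly independent in $\Fqm^n$, but since the whole argument hinges on this equality of dimensions, it would not hurt to make that one line explicit.
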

To understand why Algorithm \ref{alg:attack} still works in the case $r>q$, we can therefore study the quadratic hull of 
$$\Alt{r}{\xv}{\yv}^\perp_{\Fqm}=\bigoplus_{j=0}^{m-1}\GRS{r}{\xv}{\yv}^{q^j}.$$
A natural choice for the generator matrix of this code is the matrix whose columns consist of the following list:
\begin{align*}
\cB =\{&\yv,\xv\yv,\ldots,\xv^{r-1}\yv\\
&\yv^q,(\xv\yv)^q,\ldots,(\xv^{r-1}\yv)^q\\
&\ldots\\
&\yv^{q^{m-1}},(\xv\yv)^{q^{m-1}},\ldots,(\xv^{r-1}\yv)^{q^{m-1}}\},
\end{align*}
which was first introduced in \cite{CMT23a} and referred to as \textit{the canonical basis}. For more convenience, we will analyze the quadratic hull of $\CC_{\Fqm}$ in the 
polynomial ring 
$$\Rm=\Fqm[X_{i,u}~|~0\leq i<r,~0\leq u<m],$$
the evaluation map sending $X_{i,u}$ onto $(\word{x}^i\star\word{y})^{q^u}$. As recalled in \cite{CMT23a}, the work that was done in \cite{FGOPT11} shows that, heuristically, the algebraic 
quadratic hull related to this basis $\cB$ is generated by the equations of the form 
$$f_{i,j,k,\ell,u,v}=X_{i,u}X_{j,v}-X_{k,u}X_{\ell,v},$$
for all $0\leq u,v<m$ and $0\leq i,j,k,\ell<r$ such that $iq^u+jq^v=kq^u+\ell q^v$. Denoting by $\cY$ the affine cone over the rational normal curve, we see that these equations define a subvariety of 
$$\cY\times\cY^q\times\ldots\times\cY^{q^{m-1}},$$
which is the variety obtained by extension of scalars of the quadratic hull of $\Alt{r}{\xv}{\yv}^\perp$ by Weil's theorem. We see that the defining ideal of this product variety is generated by the $f_{i,j,k,\ell,u,v}$'s with $u=v$.
Furthermore, it is the expected quadratic hull of $\Alt{r}{\xv}{\yv}^\perp_{\Fqm}$ when $r\leq q$, \textit{i.e.} when we have Weil-properness. On the other hand, the crossed equations $f_{i,j,k,\ell,u,v}$ where $u\neq v$ might be interpreted as field equations. We indeed see the following experimentally.
\begin{heur}\label{heur:points}
Let $\Hsec=\Psi_\alpha(\Vm_r(\xv,\yv))$. When $r>q$, the algebraic quadratic hull of $\Hsec$ is the Weil restriction of $\langle I_2(\Vm_r(\xv,\yv))\cup \{X_i^{q^m}-X_i~|~0\leq i<r\}\rangle$.
\end{heur}
The variety of which the algebraic quadratic hull of $\Hpub$ is the Weil restriction is thus the one-dimensional variety given by the lines passing through the points of the cone over the rational normal curve. The Weil restriction of a line is 
an $m$-dimensional vector-space, and therefore the arguments of Heuristic \ref{heur} still hold in this case. In fact, Weil-properness is not a necessary condition for the attack to succeed. What we need is to have a quadratic hull presenting some 
algebraic structure over $\Fqm$, so that we get access to sufficiently many vector spaces that are stabilized by $\Pm\Jm_r\Pm^{-1}$. All in all, our attack breaks McEliece scheme with generic square-distinguishable alternant codes, regardless of 
how $q$ compares to $r$. Our \texttt{SageMath} implementation generates such a generic alternant code, and recovers a support and a multiplier as soon as we are in the square-distinguishable regime. 

\subsection{What about Goppa codes ?} In both \cite{ABCCGLMMMNPPPSSSTW20} and \cite{M78}, McEliece cryptosystem is described with the family of binary Goppa codes.
\begin{definition}[Goppa code]
Let $\xv\in\Fqm^n$ be a support and $\Gamma\in\Fqm[X]$ be a polynomial of degree $r$ such that $\Gamma(x_i)\neq 0$ for all $i$. The Goppa code of support $\xv$ and Goppa polynomial $\Gamma$ is defined by
$$\Goppa{\xv}{\Gamma}=\Alt{r}{\xv}{\Gamma(\xv)^{-1}}.$$
\end{definition}
A result analogous to Theorem \ref{thm:ea} exists also for Goppa codes.
\begin{theorem}[\cite{MT21}] \label{thm:eg}
    Let $\CC=\Goppa{\xv}{\Gamma}^\perp$ be a proper dual Goppa code. If $r<q-1$, then 
    $$\dim I_2(\CC)\geq m\binom{r-1}{2}.$$
    When $r\geq q-1$, we have the following bound:
    \begin{equation} \label{eq:eg}
        \dim I_2(\CC)\geq \dfrac{m}{2}r\left((2\eg+1)r-2(q-1)q^{\eg-1}-1\right),
    \end{equation}
    where $\eg=\left\lceil\log_q\left(\dfrac{r}{(q-1)^2}\right)\right\rceil+1$.
\end{theorem}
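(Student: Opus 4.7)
The plan is to extend the strategy underlying Theorem \ref{thm:ea} by exploiting the additional algebraic structure provided by the Goppa polynomial. I would start from the identity $\CC = \Tr{\GRS{r}{\xv}{\Gamma(\xv)^{-1}}}$ and pass to the $\Fqm$-linear span, for which the Frobenius decomposition recalled in the paper yields
\[
\CC_{\Fqm} = \bigoplus_{j=0}^{m-1} \GRS{r}{\xv}{\Gamma(\xv)^{-1}}^{q^j}.
\]
Squaring componentwise, $\CC^{\star 2}_{\Fqm}$ is generated by the cross products $\GRS{r}{\xv}{\Gamma(\xv)^{-1}}^{q^i}\star\GRS{r}{\xv}{\Gamma(\xv)^{-1}}^{q^j}$, each of which consists of evaluations of functions of the form $\Gamma(\xv)^{-q^i-q^j}\,\xv^{aq^i+bq^j}$ with $0\leq a,b<r$. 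The bound will be obtained by bounding $\dim\CC^{\star 2}_{\Fqm}$ from above and invoking Equation (\ref{eq:dim_I2}) together with the identity $\dim_{\Fq}I_2(\CC)=\dim_{\Fqm}I_2(\CC_{\Fqm})$ coming from the decomposition.

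First, I would treat the easy regime $r<q-1$: in this range the exponents $aq^i+bq^j$ are essentially disjoint across different choices of $(i,j)$, so the analysis reduces to the diagonal contributions $i=j$, each of which contributes $\binom{r-1}{2}$ relations coming from the rational normal curve inside $\GRS{r}{\xv}{\Gamma(\xv)^{-1}}^{q^j}$, summing to $m\binom{r-1}{2}$. This is formally the same argument as in the alternant case when $r\leq q$.

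Next, I would address the regime $r\geq q-1$. The novelty compared to the alternant proof is that the multiplier $\Gamma(\xv)^{-1}$ permits a Euclidean-division trick: whenever $aq^i+bq^j\geq r=\deg\Gamma$, the rational function $\xv^{aq^i+bq^j}\Gamma(\xv)^{-q^i-q^j}$ can be rewritten using $\Gamma(\xv)^{-(q^i+q^j)+1}$ times a polynomial of strictly smaller degree, producing additional nontrivial relations unavailable in the generic alternant setting. The parameter $\eg$ is then defined as the smallest exponent at which the cross-terms with $|i-j|\leq \eg$ saturate the relevant monomial space, which one verifies by comparing $r$ to $(q-1)^2 q^{\eg-1}$, thus explaining its formula $\eg=\lceil\log_q(r/(q-1)^2)\rceil+1$.

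The main obstacle is the combinatorial counting of linearly independent generators of $\CC^{\star 2}_{\Fqm}$: one must enumerate the distinct values of $aq^i+bq^j$ taken modulo $q^m-1$, and simultaneously keep track of the Goppa reduction relations so that no relation is double-counted. This is what produces the precise constant $(2\eg+1)r-2(q-1)q^{\eg-1}-1$ in Inequality (\ref{eq:eg}). For the full technical execution of the counting, I would defer to \cite{MT21}.
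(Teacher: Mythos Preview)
The paper does not prove this theorem: it is stated with a citation to \cite{MT21} and no proof is given in the present paper. Your proposal therefore cannot be compared to ``the paper's own proof'' because there is none; both you and the paper ultimately defer to \cite{MT21}. Your sketch is a reasonable outline of the strategy used there --- passing to $\CC_{\Fqm}$, bounding $\dim\CC^{\star 2}_{\Fqm}$ via the cross products of conjugate GRS codes, and exploiting the Goppa polynomial through Euclidean division to manufacture extra relations when $r\geq q-1$ --- and you correctly flag that the precise constant in (\ref{eq:eg}) comes from a delicate combinatorial count that you do not carry out. Since the paper itself offers nothing beyond the citation, your proposal is at least as complete as what appears here.
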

Like in the alternant case, these bounds are tight in the following sense.
\begin{heur}[\cite{FGOPT11}]
    As soon as the right-hand side of the inequalities of Theorem \ref{thm:eg} exceed $\binom{rm+1}{2}-n$, these inequalities are equalities.
\end{heur}
As a direct consequence, we see that Goppa codes essentially behave like alternant codes when $r<q-1$, which means that Algorithm \ref{alg:attack} returns valid support and multiplier in such a case. When $r\geq q-1$ however, even the previous argument 
of Heuristic \ref{heur:points} may not apply. This is indeed what we see in practice.
\begin{heur}\label{heur:bad_news}
Let $\Hm$ be any parity-check matrix of a degree-$r$ Goppa code $\Goppa{\xv}{\Gamma}$. If $r\geq q-1$, then the dimension of the geometric quadratic hull of $\Hm$ is equal to 1.
\end{heur}
This is unfortunate, as no algebraic structure over $\Fqm$ is visible in the quadratic hull of such Goppa codes. In particular, binary Goppa codes, be they square-distinguishable or not, are again out of reach. We present in the following table a comparison between our 
attack and the current state of the art.
\begin{figure}[h]
    \begin{tabular}{|c|c|c|c|c|}
        \hline
        Target & Paper & $r (\geq 3)$ & $q$ & complexity\\
        \hline
        generic square dist. alternant & \cite{BMT23} & any & $q\in\{2,3\}$ & $rn^\omega$ \\
        \hline
        generic square dist. alternant & \cite{CMT23a} & $<q+1$ & any & $n^{\omega + 2}$\\
        \hline 
        generic square dist. alternant & \cite{CMT23a} + \cite{BMT23} & any & any & $n^{\omega + 2}$\\
        \hline 
        generic square dist. alternant & this paper & any & any & $rn^\omega$\\
        \hline
        square dist. Goppa & \cite{CMT23a} & $<q-1$ & any & $n^{\omega + 2}$\\
        \hline 
        square dist. Goppa & this paper & $<q-1$ & any & $rn^\omega$ \\
        \hline
    \end{tabular}
    \caption{Comparison}
\end{figure}

\subsection{Generalization to Algebraic-Geometry codes} Algebraic-Geometry (AG) codes, first introduced by Goppa in \cite{G81}, are a natural generalization of GRS codes. They have been proposed by Janwa and Moreno in \cite{JM96} for McEliece cryptosystem, but this proposition 
was eventually proven insecure in \cite{CMP14}. However, the security of McEliece with subfield subcodes of AG codes, often referred to as SSAG codes, remains unknown in general. In this last subsection, we aim to show that our algorithm can be applied to attack SSAG codes as well, 
provided that certain conditions are met.

We first need to recall some notions. In the following, $\cX$ denotes a smooth, projective and absolutely irreducible algebraic curve defined over $\Fqm$. We denote by $g$ the genus of $\cX$. We also denote by $\Fqm(\cX)$ the function field of $\cX$ with field of constants $\Fqm$. Recall that
$\mathrm{Div}(\cX)$ stands for the divisor group of $\cX$, \textit{i.e.} the free abelian group generated by the points of $\cX$ over the algebraic closure of $\Fqm$. A divisor $D\in\mathrm{Div}(\cX)$ therefore takes the form 
$$D=\ds\sum_{P\in\cX}n_P\cdot (P),$$
where all but finitely many $n_P$'s are zero. We write 
$$\mathrm{supp}(D)=\{P\in\cX~|~n_P\neq 0\},$$
and call it the support of $D$. The degree of $D$ is defined by $\deg D=\sum_{P\in\cX}n_P\in\mathbb{Z}$. We say that $D$ is an \textit{effective} divisor, and write $D\geq 0$, when $\forall P\in\cX,~n_P\geq 0$. The divisor $D$ is said to be defined over $\Fqm$ when 
$$D^{q^m}\eqdef\ds\sum_{P\in\cX}n_P(P^{q^m})=D,$$
where $P^{q^m}$ is the point of $\cX$ with all coordinates being those of $P$ to the power $q^m$ --- this is indeed a point of $\cX$ as $\cX$ is defined over $\Fqm$. The subgroup of divisors defined over $\Fqm$ will be referred to as $\mathrm{Div}_{\Fqm}(\cX)$. Note that the points of a divisor defined over $\Fqm$ need not have their coordinates in $\Fqm$.
This remark is particularly relevant when we look at principal divisors, which are divisors associated to nonzero rational functions in the following manner. For $f\in\Fqm(\cX)^\times$ such a nonzero rational function and $P\in\cX$, we write $\mathrm{ord}_P(f)$ the valuation of $f$ at $P$. The principal divisor associated to $f$ is defined by 
$$(f)=\ds\sum_{P\in\cX}\mathrm{ord}_P(f)\cdot(P).$$
Although the zeros and poles of $f$ may not lie in $\cX(\Fqm)$, the whole divisor $(f)$ is globally invariant under the Frobenius automorphism and is therefore defined over $\Fqm$. Furthermore, the following holds.
\begin{theorem}[\cite{S09a}, Proposition 3.1 (b)]
For all $f\in\Fqm(\cX)^\times$, we have $\deg (f)=0$.
\end{theorem}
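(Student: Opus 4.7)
The plan is to prove the classical identity $\deg(f)=0$ by viewing a nonzero rational function as a morphism to the projective line. The case $f\in\Fqm^\times$ is trivial, since then $(f)=0$ and the degree vanishes. Otherwise $f$ is non-constant and, since $\cX$ is smooth, projective and absolutely irreducible, defines a finite surjective morphism $\phi_f:\cX\to\PP^1$ of degree $n=[\Fqm(\cX):\Fqm(f)]\geq 1$.

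First I would identify the divisor of zeros $(f)_0$ and the divisor of poles $(f)_\infty$ with the scheme-theoretic pullbacks $\phi_f^*(0)$ and $\phi_f^*(\infty)$. This is a local statement: if $t$ is the standard uniformizer at $0\in\PP^1$, then $\phi_f^*(t)=f$, so for any point $P$ lying above $0$ the valuation $\mathrm{ord}_P(f)$ coincides with the ramification index $e_P$ of $\phi_f$ at $P$. A dual argument with the uniformizer $1/t$ at $\infty$ shows $\mathrm{ord}_P(f)=-e_P$ at each pole. Hence $(f) = \phi_f^*(0) - \phi_f^*(\infty)$.

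The central step is then the classical constancy of the degree of the fibers: for a finite morphism between smooth projective curves, $\deg\phi_f^*(Q)$ does not depend on the closed point $Q$ and equals $\deg\phi_f=n$. This can be established either sheaf-theoretically (the pushforward $\phi_{f,*}\mathcal{O}_\cX$ is a locally free $\mathcal{O}_{\PP^1}$-module of rank $n$, and the length of its fiber at $Q$ equals $\deg\phi_f^*(Q)$) or, more in line with Stichtenoth's treatment, via the fundamental equality $\sum_P e_P f_P = n$ for extensions of places in a finite separable extension of function fields. Either way, one obtains
\[
\deg(f)_0 \;=\; \deg\phi_f^*(0) \;=\; n \;=\; \deg\phi_f^*(\infty) \;=\; \deg(f)_\infty,
\]
and therefore $\deg(f)=\deg(f)_0-\deg(f)_\infty=0$.

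The main obstacle is making the statement \emph{degree of a morphism equals degree of any fiber} precise without dragging in the full apparatus of coherent sheaves. The cleanest route, and the one chosen by Stichtenoth, is to work entirely at the level of the function field $\Fqm(\cX)$ and its places, reducing the claim to a standard identity about ramification indices and residue degrees in finite separable extensions of Dedekind domains, which is proved by strong approximation.
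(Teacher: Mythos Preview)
The paper does not supply its own proof of this statement; it is merely quoted from Silverman's textbook \cite{S09a} as background. Your argument is the standard one and is correct in outline. Two small remarks. First, you phrase the fundamental equality $\sum_P e_P f_P = n$ as a fact about \emph{separable} extensions, but the extension $\Fqm(\cX)/\Fqm(f)$ can be inseparable in positive characteristic; the identity holds for arbitrary finite extensions of function fields (see Stichtenoth~\cite[Theorem~3.1.11]{S09}), and you need it in that generality. Second, the citation in the paper is to Silverman~\cite{S09a}, not Stichtenoth, though both sources prove the result by essentially the route you describe.
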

Finally, we denote by $\cL(D)$ the Riemann-Roch space of $D$, which is defined by 
$$\cL(D)=\{f\in\Fqm(\cX)^\times~|~(f)+D\geq 0\}\cup\{0\}.$$
As stated by the famous Riemann-Roch theorem, this space is a finite dimensional $\Fqm$-vector space whose dimension $\ell(D)$ is related to the degree of $D$. We give the special case of Riemann-Roch theorem for divisors 
of sufficiently high degree.
\begin{theorem}[Riemann-Roch]
    If $\deg D > 2g-2$, then $\ell(D)=\deg D + 1 - g$.
\end{theorem}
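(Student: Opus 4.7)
The plan is to deduce this special case from the full Riemann-Roch theorem, which asserts the existence of a canonical divisor $K\in\mathrm{Div}_{\Fqm}(\cX)$ of degree $2g-2$ such that
\[\ell(D)-\ell(K-D)=\deg D+1-g\]
for every $D\in\mathrm{Div}_{\Fqm}(\cX)$. Under the hypothesis $\deg D>2g-2=\deg K$, the divisor $K-D$ has strictly negative degree, so the goal reduces to showing $\ell(K-D)=0$.

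To establish this, I would prove the auxiliary fact that any divisor $D'\in\mathrm{Div}_{\Fqm}(\cX)$ with $\deg D'<0$ satisfies $\cL(D')=\{0\}$. The argument is short: suppose for contradiction that there exists a nonzero $f\in\cL(D')$. Then by definition $(f)+D'\geq 0$, hence $\deg((f)+D')\geq 0$. Since principal divisors have degree zero by the theorem of Stichtenoth quoted just before the statement, this gives $\deg D'\geq 0$, a contradiction. Applying this to $D'=K-D$ yields $\ell(K-D)=0$, and substituting into Riemann-Roch produces the claimed formula $\ell(D)=\deg D+1-g$.

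The only nontrivial input is the full Riemann-Roch theorem itself, which is taken as the black box here; the specialization to the high-degree regime is then purely formal. There is no real obstacle in this step, apart from making sure that the canonical divisor and the degree map are set up over the correct base field $\Fqm$, which is handled by the fact that both $K$ and $(f)$ lie in $\mathrm{Div}_{\Fqm}(\cX)$ as recalled in the preceding discussion.
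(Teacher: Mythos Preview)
Your proposal is correct and follows the standard derivation: invoke the full Riemann-Roch identity $\ell(D)-\ell(K-D)=\deg D+1-g$, observe that $\deg(K-D)<0$ under the hypothesis, and conclude $\ell(K-D)=0$ because a divisor of negative degree has trivial Riemann-Roch space. The paper itself gives no argument at all---it simply cites \cite[Corollary 5.5 (c)]{S09a}---so what you have written is exactly the content behind that citation. One cosmetic slip: the degree-zero result for principal divisors that you invoke is attributed in the paper to Silverman \cite{S09a}, not Stichtenoth.
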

\begin{proof}
    See \cite[Corollary 5.5 (c)]{S09a}.
\end{proof}
We have introduced all the necessary material to define algebraic-geometry codes. 
\begin{definition}
Let $\cP=(P_1,\ldots,P_n)\in\cX(\Fqm)^n$ be a tuple of pairwise distinct $\Fqm$-rational points of $\cX$, and let $D\in\mathrm{Div}_{\Fqm}(\cX)$
be such that $\mathrm{supp}(D)\cap \{P_1,\ldots,P_n\}=\varnothing$. The algebraic-geometry code of support $\cP$ and divisor $D$ is defined by 
$$\CC_{\cL}(\cX,\cP,D)=\{(f(P_1),\ldots,f(P_n))~|~f\in\cL(D)\}.$$
\end{definition}
\begin{remark}
    When $\cX=\PP^1$ and $D=(r+1)P_\infty$ where $P_\infty=(1:0)$, the above definition gives Reed-Solomon codes. See \cite[Proposition 2.3.5]{S09} for further details.
\end{remark}
Let $r=\ell(D)-1$ and $(f_0,\ldots,f_r)$ be a basis of $\cL(D)$. Since $\cX$ is a smooth curve, the rational function $\phi_D=(f_0,\ldots,f_r)$ defines a morphism $\cX\rightarrow\PP^r$. If $\deg D > 2g$, then 
$\phi_D$ defines an isomorphism between $\cX$ and $\cY=\phi_D(\cX)$. Moreover, like in the case of GRS codes, the following matrix 
$$V(\cP,\phi_D)=\begin{pmatrix}
    f_0(P_1) & f_0(P_2) & \ldots & f_0(P_n)\\
    f_1(P_1) & f_1(P_2) & \ldots & f_1(P_n)\\
    \vdots & \vdots & \ddots & \vdots\\
    f_r(P_1) & f_r(P_2) & \ldots & f_r(P_n)
\end{pmatrix}$$
is a generator matrix of $\CC\eqdef\CC_{\cL}(\cX,\cP,D)$, and each column defines a point on $\cY$. From the work of \cite{CMP11a}, we know that if $\deg D>2g+1$, then the quadratic hull of this matrix is equal to $\cY$, which 
implies that the quadratic hull of $\Tr{\CC}$ is a subvariety of $\Psi_\alpha(\cY)$. In order to adapt our attack to SSAG codes, we need to understand when such a code is Weil-proper. Extensive computations in our \texttt{SageMath}
implementation available \href{https://github.com/AxelfVGD/Tangent/tree/main}{here} have led us to the following quite natural conjecture. We have investigated Weil-properness of \textit{generic} one point AG codes, \textit{i.e.}
codes of the form
\begin{equation}\label{eq:one_point_ag}
\CC=\{\word{c}\star\yv~|~\word{c}\in\CC_{\cL}(\cX,\cP,r\cdot (P_\infty))\}
\end{equation}
where $\yv\in(\Fqm^\times)^n$ is chosen uniformly at random.
\begin{heur}[Weil-properness of one point generic AG codes]
Let $\DC=\Tr{\CC}$ where $\CC$ is as in Equation (\ref{eq:one_point_ag}). We make the two following assumptions:
\begin{itemize}
    \item[$(i)$] $2g+2\leq r \leq q$;
    \item[$(ii)$] $n>\binom{(r+1-g)m+1}{2}-m\dim_{\Fqm} I_2(\CC)$. 
\end{itemize}
Then $\DC$ is Weil-proper.
\end{heur}
In such a case, one can compute the tangent spaces of the quadratic hull of the public generator matrix of $\DC$. The assumption $2g+2\leq r$ ensures that the quadratic hull of $\CC$ is a projective variety of dimension 1, therefore the 
underlying ideal has dimension 2, which means that the quadratic hull of $\DC$ has dimension $2m$. All in all, Algorithm \ref{alg:attack} eventually retrieves a generator matrix of $\CC$, and then \cite{CMP14} recovers an efficient decoding
algorithm. This class of AG codes is therefore vulnerable to our attack.

Like in the case of alternant codes, the attack still succeeds when we no longer have the assumption $r\leq q$.
\begin{heur}
Let $\Hsec=\Psi_\alpha(\Gm)$ be the secret generator matrix of $\DC$, where $\Gm$ is a generator matrix of $\CC$. We still assume $2g+2\leq r$, as well as $(ii)$ of the previous heuristic, but now we also assume $r>q$. Then the algebraic quadratic hull 
of $\Hsec$ is equal to the Weil restriction of $\langle I_2(\Gm)\cup \{X_i^{q^m}-X_i~|~0\leq i \leq r-g\}\rangle$.
\end{heur}
Our \texttt{SageMath} implementation also provides Algorithm \ref{alg:attack} in the case of generic one point AG codes, regardless of how $r$ compares to $q$.
 \section{Conclusion and open problems}

The geometric analysis of linear codes using the quadratic hull seems to be a prolific approach. Using Weil restriction,
we have been able to write an algorithm that recovers the structure of a trace code, provided that the original code 
has a nontrivial quadratic hull and assuming Weil-properness. This notion turns out to be more powerful than what we 
actually need for our attack to succeed. We eventually described a polynomial-time attack against the McEliece cryptosystem
instantiated with generic alternant codes, generic one-point SSAG codes, or even Goppa codes of sufficiently low degree. 

For future work, better understanding what happens when the degree of the alternant code gives rise to field equations even in 
the high-rate regime will be crucial to see whether we can adapt our framework to the case of binary Goppa codes. It is also
still unclear whether other families of codes are vulnerable to our attack. \section*{Acknowledgements}
This work was in part funded by the \textit{Direction Générale de l'Armement} (DGA). The author would also like to thank Jean-Pierre Tillich for his 
advice and feedback, as well as Alain Couvreur for his insightful discussions.
\newpage 
\bibliographystyle{alpha}
\newcommand{\etalchar}[1]{$^{#1}$}

\newpage
\section{Appendix}
\subsection{Graded free resolutions}
The proof of Lemma \ref{lemma:genset} involves the notion of minimal graded free resolution of finitely generated graded modules over polynomial rings. Let us define 
\begin{center}
    $\Rm=\Fqm[X_0,\ldots,X_{r-1}]$ and $\Sm=\Fq[x_{i,j}~|~0\leq i<r,~0\leq j<m]$.
\end{center}
Both $\Rm$ and $\Sm$ have a structure of graded rings. More precisely, we can write 
\begin{center}
$\Rm=\bigoplus_{d\in\mathbb{N}}\Rm_d$ and $\Sm=\bigoplus_{d\in\mathbb{N}}\Sm_d$,
\end{center}
where $\Rm_d$ and $\Sm_d$ refer to the vector space spanned by monomials of degree $d$, and we have $\Rm_a\Rm_b\subset\Rm_{a+b}$ for all $a,b\in\mathbb{N}$ --- and likewise for $\Sm$. In the same manner, if $M$ is a module over $\Rm$, we say that $M$ is a graded $\Rm$-module if one can write $M=\bigoplus_{d\in\mathbb{Z}}M_d$ and if 
\[\forall a\in\mathbb{N},~\forall b\in\mathbb{Z},~\Rm_a M_b\subset M_{a+b}.\]

\begin{definition}
    Let $M$ be a finitely generated $\Rm$-module. A free resolution of $M$ is an exact sequence 
    $$\ldots\rightarrow\F_{i+1}\rightarrow\F_i\rightarrow\ldots\rightarrow\F_0\rightarrow M\rightarrow 0,$$ 
    where each $\F_i$ is a free $\Rm$-module. We write $\F_\bullet\rightarrow M \rightarrow 0$ for conciseness.
    If moreover each $\F_i$ is a graded free $\Rm$-module, and if the maps $\F_{i+1}\rightarrow\F_i$ and $\F_0\rightarrow M$ preserve the degree,
    then we say that $\F_\bullet\rightarrow M\rightarrow 0$ is a graded free resolution of $M$.
\end{definition}
\subsection{Minimal resolutions} Let $M$ be a finitely generated $\Rm$-module. If $(x_1,\ldots,x_n)\in M^n$ is a minimal set of generators of $M$, then we have a natural map 
$$\F_0=\Rm^n\twoheadrightarrow M.$$
One can repeat this process inductively by setting $M_1=\ker(\F_0\rightarrow M)$ and let $\F_1$ be the free module over a minimal set of generators of $M_1$, and so on. This is the notion of minimality of a free resolution, that 
we define more formally below.
\begin{definition}
Let $\F_\bullet\rightarrow M\rightarrow 0$ be a free resolution of the finitely generated graded $\Rm$-module $M$. Define the sequence of $\Rm$-modules $(M_i)_{i\in\mathbb{N}}$ by $M_0=M$ and $M_{i+1}=\ker(F_i\rightarrow M_i)$.
We say that the resolution is minimal if for any $i\in\mathbb{N}$, the cardinality of a minimal set of generators of $M_i$ equals the rank of the free module $\F_i$.
\end{definition}
If $N$ is any graded $\Rm$-module, then for any integer $j$ we denote by $N(j)$ the graded $\Rm$-module whose degree $d$ component is defined by the degree $d+j$ component of $N$:
$$\forall d\in\mathbb{Z},~N(j)_d=N_{d+j}.$$
We now assume that $M$ is a graded $\Rm$-module. By taking into account the degree of the elements of a minimal set of generators of $M$, we get a degree-preserving natural surjective map 
$$\F_0=\bigoplus_{j\in\mathbb{N}}\Rm(-j)^{\beta_{0,j}}\twoheadrightarrow M,$$
where $\beta_{0,j}$ is therefore the number of elements of degree $j$ in a minimal set of generators of $M$. Repeating this construction inductively, we obtain a minimal graded free resolution of $M$. The fact that the 
numbers of generators of a certain degree in a minimal set of generators does not depend on the choice of generators is a consequence of the following result.
\begin{theorem}[\cite{eisenbud2006geometry}, Theorem 1.6] 
    If $\F_\bullet$ and $\mathbb{G}_\bullet$ are minimal graded free resolutions of $M$, then there is a graded isomorphism of complexes 
    $\F_\bullet\rightarrow\mathbb{G}_\bullet$ inducing the identity map on $M$.
\end{theorem}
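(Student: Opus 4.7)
The plan is to follow the classical proof of uniqueness of minimal free resolutions over a graded Noetherian ring: first characterize minimality in terms of the unique graded maximal ideal, then apply the standard comparison theorem for projective resolutions, and finally invoke Nakayama's lemma.

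I would begin by recording the standard characterization. Let $\mathfrak{m} = (X_0, \ldots, X_{r-1}) \subset \Rm$ be the unique graded maximal ideal and set $k = \Rm/\mathfrak{m}$. A graded free resolution $\F_\bullet \to M \to 0$ is minimal if and only if its differentials satisfy $d_i(\F_i) \subseteq \mathfrak{m}\F_{i-1}$ for every $i \geq 1$; equivalently, $\F_\bullet \otimes_\Rm k$ has zero differentials. This follows from the graded Nakayama lemma: a homogeneous basis of $\F_i$ projects to a minimal generating set of the $i$-th syzygy module $M_i$ precisely when no nontrivial unit-coefficient combination hits $\mathfrak{m} M_i$, which is exactly the condition that the next differential lands in $\mathfrak{m}\F_{i-1}$. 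The same criterion applies to $\mathbb{G}_\bullet$.

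Next, using that each $\F_i$ is graded free and that $\mathbb{G}_\bullet \to M$ is exact, the standard inductive comparison procedure produces a graded chain map $\phi_\bullet : \F_\bullet \to \mathbb{G}_\bullet$ lifting $\mathrm{id}_M$: at step $i$, the composition $\phi_{i-1} \circ d_i^\F$ lands in $\mathrm{im}(d_i^\mathbb{G})$ by commutativity and exactness, and one lifts it through $d_i^\mathbb{G}$ by choosing homogeneous preimages of a homogeneous basis of $\F_i$. Symmetrically, construct $\psi_\bullet : \mathbb{G}_\bullet \to \F_\bullet$. By the standard uniqueness-up-to-homotopy, $\psi_\bullet \phi_\bullet$ and $\mathrm{id}_{\F_\bullet}$ are chain homotopic via some graded $h_i : \F_i \to \F_{i+1}$ satisfying $\psi_i \phi_i - \mathrm{id} = d_{i+1}^\F h_i + h_{i-1} d_i^\F$.

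The conclusion then comes from minimality combined with Nakayama. Since $d_i^\F(\F_i) \subseteq \mathfrak{m}\F_{i-1}$, and since $h_{i-1}$ is $\Rm$-linear, both $d_{i+1}^\F h_i$ and $h_{i-1} d_i^\F$ have image contained in $\mathfrak{m}\F_i$. Hence $\psi_i \phi_i \equiv \mathrm{id} \pmod{\mathfrak{m}\F_i}$, so its reduction modulo $\mathfrak{m}$ is the identity, and in particular surjective. Graded Nakayama applied to the finitely generated graded $\Rm$-module $\F_i$ lifts this to genuine surjectivity of $\psi_i\phi_i$, and since $\Rm$ is Noetherian and $\F_i$ is finitely generated, a surjective endomorphism is automatically injective, hence an isomorphism. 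Symmetrically $\phi_i \psi_i$ is an isomorphism, so each $\phi_i$ is a graded isomorphism and $\phi_\bullet$ is the desired isomorphism of complexes inducing $\mathrm{id}_M$.

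The main obstacle, and the reason the statement needs care, is that every map produced in the comparison procedure — the lifts $\phi_i$, $\psi_i$, and the homotopy $h_i$ — must be taken \emph{graded of degree zero}. This is not automatic, but it works at each inductive step because we lift a degree-preserving map through a degree-preserving surjection of graded free modules by choosing homogeneous preimages of homogeneous generators in matching degrees. Once this graded bookkeeping is maintained, the Nakayama step runs in the graded category without modification, and finite generation of each $\F_i$ is automatic since syzygies of finitely generated modules over the Noetherian ring $\Rm$ are themselves finitely generated.
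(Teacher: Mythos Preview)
Your argument is the standard and correct proof of uniqueness of minimal graded free resolutions, and there is nothing to object to in the logic: the characterization of minimality via $d_i(\F_i)\subseteq\mathfrak{m}\F_{i-1}$, the graded comparison lift, the homotopy identity, and the Nakayama step are all sound, and you correctly handle the graded bookkeeping.

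However, note that the paper does not give its own proof of this theorem at all: it is merely quoted from \cite{eisenbud2006geometry} as an external input, used only to justify that graded Betti numbers are well defined before Proposition~\ref{prop:betti}. So there is no ``paper's proof'' to compare against; what you have written is essentially the argument Eisenbud gives in the cited reference, and it is perfectly adequate for the purpose the paper makes of the result.
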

We can therefore talk about \textit{the} minimal graded free resolution of the finitely generated graded $\Rm$-module $M$. For any integer $j\in\mathbb{N}$, we can write 
$$\F_i=\bigoplus_{j\in\mathbb{N}}\Rm(-j)^{\beta_{i,j}},$$
and the $\beta_{i,j}\in\mathbb{N}$ are called the \textit{graded Betti numbers} of $M$.
\subsection{Minimal set of generators of a Weil restriction} A homogeneous ideal $I\subset\Rm$ is a graded $\Rm$-module. Since $\Rm$ is a noetherian ring by Hilbert's basis theorem, we know that $I$ is finitely generated.
As a result we can apply the framework that we introduced above on the finitely generated graded module $\Rm/I$. Furthermore, if $J=\Resq{I}$, then $J$ is also a finitely generated graded $\Sm$-module, and as such we can also consider 
the minimal graded free resolution of $\Sm/J$. The Betti numbers of $\Rm/I$ and $\Sm/J$ are related by the following formula.
\begin{proposition}\label{prop:betti}
Let $(\beta_{i,j})_{i,j}$ and $(\gamma_{i,j})_{i,j}$ be the graded Betti numbers of $\Rm/I$ and $\Sm/J$ respectively. Then for all $i$ and $j$, we have 
\begin{equation}\label{eq:betti}
    \gamma_{i,j}=\ds\sum_{\substack{i_1+\ldots+i_m=i \\ j_1+\ldots+j_m=j}}\prod_{s=1}^m\beta_{i_s,j_s}.
\end{equation}
\end{proposition}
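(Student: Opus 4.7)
The plan is to leverage Weil's theorem (Equation \ref{eq:weil}) to transfer the computation of the Betti numbers of $\Sm/J$ into a Betti-number computation for a tensor product of $m$ algebras over $\Fqm$, and then to invoke a Künneth-type formula for minimal graded free resolutions over polynomial rings with disjoint sets of variables.

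First, I would observe that extension of scalars along $\Fq\hookrightarrow\Fqm$ preserves graded Betti numbers. Indeed, tensoring a minimal graded free resolution of $\Sm/J$ over $\Sm$ by $-\otimes_{\Fq}\Fqm$ yields a graded free resolution of $(\Sm/J)\otimes_{\Fq}\Fqm$ over $\Sm\otimes_{\Fq}\Fqm$ by flatness of the field extension, and minimality survives because the entries of the differentials, which lie in the homogeneous maximal ideal of $\Sm$, remain in the homogeneous maximal ideal of $\Sm\otimes_{\Fq}\Fqm$. Next, for each $0\leq s<m$, the Frobenius automorphism $\theta^s$ acts on $\Rm$ as a grading-preserving (semilinear over $\Fqm$) ring bijection sending $I$ onto $I^{q^s}$. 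It therefore carries a minimal graded free resolution of $\Rm/I$ onto one of $\Rm/I^{q^s}$ with identical ranks and degree shifts, so that $\Rm/I^{q^s}$ has graded Betti numbers equal to the $\beta_{i,j}$ as well.

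It then remains to compute the graded Betti numbers of the right-hand side of Weil's theorem. Via Equation \ref{eq:weil}, the ring $\Sm\otimes_{\Fq}\Fqm$ is identified with $\bigotimes_{s=0}^{m-1}\Rm^{(s)}$, where each $\Rm^{(s)}$ is a copy of $\Rm$ on its own set of $r$ variables. Let $\F^{(s)}_\bullet\to\Rm^{(s)}/I^{q^s}$ be a minimal graded free resolution. Because the variable sets are mutually disjoint, the tensor product complex $\F^{(0)}_\bullet\otimes_{\Fqm}\cdots\otimes_{\Fqm}\F^{(m-1)}_\bullet$ is a complex of free $\Sm\otimes_{\Fq}\Fqm$-modules (with ranks multiplying and internal degree shifts adding), is exact by an iterated Künneth argument, and is minimal because each partial differential has entries in the homogeneous maximal ideal of its own $\Rm^{(s)}$, hence the total differential lies in the combined homogeneous maximal ideal. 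It is therefore \textit{the} minimal graded free resolution of $\bigotimes_{s=0}^{m-1}\Rm^{(s)}/I^{q^s}\simeq (\Sm/J)\otimes_{\Fq}\Fqm$. Reading off the rank in homological degree $i$ and internal degree $j$ gives
$$\gamma_{i,j}=\sum_{\substack{i_1+\cdots+i_m=i\\ j_1+\cdots+j_m=j}}\prod_{s=1}^{m}\beta_{i_s,j_s},$$
which is precisely Equation \ref{eq:betti}.

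The main obstacle is the verification in the third step that the tensor product of the individual minimal graded free resolutions is itself both exact and minimal as a resolution over the combined polynomial ring. Exactness is an iterated Künneth argument, which works cleanly because each $\F^{(s)}_\bullet$ is a bounded-below complex of flat $\Fqm$-modules with homology concentrated in degree zero; minimality is the more delicate bookkeeping, but follows from the structural fact that a coefficient of the total differential is necessarily the product of an element of some homogeneous maximal ideal $\mathfrak{m}^{(s)}$ by elements of the other $\Rm^{(t)}$'s, hence sits in the homogeneous maximal ideal of $\bigotimes_s\Rm^{(s)}$. Once these two points are settled, the proposition reduces to counting ranks graded piece by graded piece.
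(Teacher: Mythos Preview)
Your proposal is correct and follows essentially the same route as the paper: reduce to $(\Sm/J)\otimes_{\Fq}\Fqm$ via flatness of the field extension, identify this with the $m$-fold tensor product of the $\Rm/I^{q^s}$ via Weil's theorem, and read off Betti numbers from the tensor product of minimal resolutions. The only difference is packaging: the paper invokes \cite[Theorem~3.3]{CCG23} as a black box for the statement that $\F_\bullet^{\otimes m}$ is a minimal graded free resolution of $(\Sm/J)\otimes_{\Fq}\Fqm$, whereas you sketch the K\"unneth and minimality arguments by hand; you also make explicit the point that Frobenius conjugation preserves Betti numbers, which the paper leaves implicit in writing $\F_\bullet^{\otimes m}$ without distinguishing the factors.
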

\begin{proof}
    Let $\F_\bullet$ be a minimal graded free resolution of $\Rm/I$. By \cite[Theorem 3.3]{CCG23}, we have a minimal graded free resolution 
    \begin{equation} \label{eq:resolution}
    \mathbb{G}_\bullet\eqdef\F_\bullet^{\otimes m}\longrightarrow(\Sm/J)\otimes_{\Fq} \Fqm\longrightarrow 0.
    \end{equation}
    We thus have for all $i\in\mathbb{N}$,
    \[\mathbb{G}_i=\ds\bigoplus_{i_1+\ldots+i_m=i}\F_{i_1}\otimes\ldots\otimes\F_{i_m}.\]
    Expanding each $\F_{i_s}$ as $\bigoplus_{j_s}\Rm(-j_s)^{\beta_{i_s,j_s}}$ gives
    \begin{align*}
        \mathbb{G}_i&=\ds\bigoplus_{i_1+\ldots+i_m=i} \bigotimes_{s=1}^m\left(\bigoplus_{j_s}\Rm(-j_s)^{\beta_{i_s,j_s}}\right)\\
        &=\ds\bigoplus_{i_1+\ldots+i_m=i}\bigoplus_{j_1,\ldots,j_m} \Rm(-j_1)^{\beta_{i_1,j_1}}\otimes\ldots\otimes\Rm(-j_m)^{\beta_{i_m,j_m}}\\
        &=\ds\bigoplus_{i_1+\ldots+i_m=i}\bigoplus_{j_1,\ldots,j_m} \Rm(-j_1-\ldots-j_m)^{\beta_{i_1,j_1}\times\ldots\times\beta_{i_m,j_m}}.
    \end{align*}
    Since the map $\Phi:\Rm\rightarrow\Sm\otimes_{\Fq}\Fqm$ is an injective homomorphism of graded $\Fqm$-algebras, we get that the Betti numbers of $\Sm/J$ and $(\Sm/J)\otimes_{\Fq}\Fqm$ are the same.
    Identifying the component of degree $j$ of $\mathbb{G}_i$ eventually gives the desired formula.
\end{proof}
Focusing on $i=1$ gives the number of generators in a minimal set of generators of the ideals $I$ and $J$ respectively. The leads us to a proof of Lemma \ref{lemma:genset}, that we recall below.
\genset*
\begin{proof}
    
    Retaking the notations of Proposition \ref{prop:betti}, it suffices to show that
    $$\forall j\in\mathbb{N},~\gamma_{1,j}=m\beta_{1,j}.$$
    Let $j\in\mathbb{N}$. In Equation (\ref{eq:betti}), every index $i_s$ must be equal to zero but one, which must be equal to one. As $\F_0=\Rm$, we see that for each integer $j_s$, the Betti number $\beta_{0,j_s}$ equals 1 if and only if $j_s=0$, and $0$ otherwise.
    Therefore, in the product of Equation (\ref{eq:betti}), we only have Betti numbers of the form $\beta_{0,0}$ and one of the form $\beta_{1,j_s}$, and that $j_s$ must be equal to $j$. All in all, we get 
    $$\gamma_{1,j}=\ds\sum_{s=1}^m \beta_{0,0}\times\ldots\times\underbrace{\beta_{1,j}}_{\text{position }s}\times\ldots\times\beta_{0,0}=m\beta_{1,j}$$
    as required.
\end{proof}
 \end{document}